\newcommand{\R}{\mathbb{R}}
\newcommand{\lie}{\mathcal{L}}
\newcommand{\id}{\text{id}}
\newcommand{\Span}{\text{span}}
\newcommand{\inclusion}{i}
\newcommand{\boundary}{j}
\newcommand{\btob}{\partial i}
\newcommand{\chipboundary}{{\jmath}}
\newcommand{\intprod}{\iota}
\newcommand{\timelike}{\mathcal{T}}
\newcommand{\hamlike}{\mathcal{X}_{\mathcal{H}}}
\newcommand{\proj}{\text{proj}}
\newcommand{\Z}{\mathbb{Z}}
\newcommand{\C}{\mathbb{C}}
\DeclareMathOperator{\Diff}{Diff}
\DeclareMathOperator{\Int}{int}
\DeclareMathOperator{\supp}{supp}
\newcommand{\M}{\mathcal{M}}
\newcommand{\quotient}[2]{{\left.\raisebox{.2em}{$#1$}\middle/\raisebox{-.2em}{$#2$}\right.}}
\newcommand{\Y}{Y}
\theoremstyle{plain}
\newtheorem{theorem}{Theorem}[section]
\newtheorem*{theorem*}{Theorem}
\newtheorem{proposition}[theorem]{Proposition}
\newtheorem{corollary}[theorem]{Corollary}
\newtheorem*{conclusion*}{Main Conclusion}
\newtheorem{lemma}[theorem]{Lemma}
\theoremstyle{definition}
\newtheorem{definition}[theorem]{Definition}
\theoremstyle{remark}
\newtheorem{example}[theorem]{Example}
\newtheorem{remark}[theorem]{Remark}
\begin{document}

\title{Global realisation of magnetic fields as 1\texorpdfstring{$\frac{1}{2}$}{.5}D Hamiltonian systems
}

\author{N. Duignan}
\affiliation{School of Mathematics and Statistics, University of Sydney, NSW 2050,
Australia}

\author{D. Perrella}
\author{D. Pfefferlé}%
\email[]{david.pfefferle@uwa.edu.au}

\affiliation{The University of Western Australia, 35 Stirling Highway,
  Crawley WA 6009, Australia}

\begin{abstract}
The paper reviews the notion of $n+\frac{1}{2}$D non-autonomous Hamiltonian systems, portraying their dynamics as the flow of the Reeb field related to a closed two-form of maximal rank on a cosymplectic manifold, and naturally decomposing into time-like and Hamiltonian components. The paper then investigates the conditions under which the field-line dynamics of a (tangential) divergence-free vector field on a connected compact three-manifold (possibly with boundary) diffeomorphic to a trivial fibre bundle over the circle can be conversely identified as a non-autonomous $1\frac{1}{2}$D Hamiltonian system. Under the assumption that the field is transverse to a global compact Poincaré section, an adaptation of Moser's trick shows that all such fields are locally-Hamiltonian. A full identification is established upon further assuming that the Poincaré sections are planar, which crucially implies (together with Dirichlet boundary conditions) that the cohomology class of the generating closed one-forms on each section is constant. By reviewing the classification of fibre bundles over the circle using the monodromy representation, it is remarked that as soon as the Poincaré section of an alleged field is diffeomorphic to a disk or an annulus, the domain is necessarily diffeomorphic to a solid or hollow torus, and thus its field-line dynamics can always be identified as a non-autonomous Hamiltonian system. 
\end{abstract}
  
\maketitle


\section{Introduction}

Magnetic confinement fusion devices are shaped like tori because such domains are the ``simplest'' orientable manifolds in Euclidean three-space on which a vector field, tangent to the boundary, has a chance of being nowhere zero~\cite{thurston-1997}.
The layout of the magnetic field-lines in a fusion device determines the confinement~\cite{spitzer-1958,solovev-shafranov,kadomtsev-pogutse}.
The less chaotic their trajectories, the less heat and particles fusion devices will lose. This is the qualitative reason why configurations that exhibit nested toroidal flux-surfaces are sought after~\cite{cary-shasharina-1997,wakatani-1998,wobig-1999}.
Within magneto-hydrostatics (MHS), ensuring that the magnetic field-lines lie tangential to nested surfaces has been conjectured by Grad to be impossible without symmetry by isometry~\cite{grad-1967}. Certainly, configurations featuring three-dimensional flux-surfaces, as desired in stellarators, can only be explained through extreme fine-tuning and optimisation. Sensitivity/robustness to small changes is then an important area of investigation. Many equilibrium solvers and design/optimisation codes assume straight field-line coordinates and nested flux-surfaces as a starting point.

The magnetic field $B$ is considered to be a divergence-free vector field on a domain of Euclidean three-space, $M\subseteq \R^3$. One then defines the magnetic two-form through the natural volume-form $\mu$ as $\beta=i_B\mu$. The magnetic field $B$ being divergence-free is equivalent to the two-form $\beta$ being closed, $d\beta=0$. The latter statement is looser, in the sense that it does not depend on the metric. How best to express the closed two-form $\beta$ in order to highlight the features of the field-lines of $B$ is the focus of this paper.

Estimating the degree of chaos in magnetic field-lines is hard other than through numerical field-line tracing. There are plenty of examples of nowhere vanishing vector fields on three-manifolds with extremely rich field-line dynamics. The latter dynamics can be as complicated as desired, for example with ABC-flows~\cite{arnold-1965}, Turing complete flows~\cite{cardona-2021}, knotted flows~\cite{encisco-peralta-salas-2015}, etc.
Analytically, the tools of Hamiltonian perturbation theory are often employed. For example, the Chirikov criterion predicts the onset of chaos when individual resonances overlap~\cite{chirikov-1960}. The KAM theorem~\cite{kolmogorov-1954,arnold-1963a,moser-1967} is invoked to assess the persistence of irrational surfaces under small perturbations~\cite{greene-1979,meiss-1992,mackay-percival,hudson-breslau,dewar-2012}.
Magnetic field-lines are thus often compared to non-autonomous $1\frac{1}{2}$D Hamiltonian systems~\cite{kerst-1962,cary-littlejohn-1983,cary-hanson,boozer-1983}, as highlighted in the following direct quote from \cite[Section 4.5 a)]{dhaeseleer}.
\begin{quote}\it
The equation of a field line can be cast in a Hamiltonian form. Hamiltonian theory guarantees the existence of magnetic surfaces in devices with a symmetry direction: an ideal, axisymmetric tokamak; a helically symmetric straight stellarator; etc. In non-symmetric devices, on the other hand, magnetic surfaces do not exist rigorously everywhere because ergodic regions of finite volume exist [\ldots]. In practice, we can get away with considering approximate magnetic surfaces provided the perturbations away from symmetry are not too large. In principle, one should either measure the surfaces experimentally or follow field lines on a computer to demonstrate the existence of suitable surfaces.
\end{quote}
The justification in the physics literature is local, gauge dependent and requires explicit coordinates~\cite{helander-2014,meiss-1992}.
A variant of Darboux's theorem can be used in Euclidean three-space to show that a non-zero magnetic field can be represented in \emph{Clebsch form}, $B=\nabla f \times\nabla g$, in a neighbourhood of any point. The result is local in nature, and there are topological obstructions to realising Clebsch representations globally.

Under the assumption that the magnetic field is transverse to the level sets of a circle-valued \emph{time} or \emph{angle} function, we show using Moser's trick~\cite{moser-1965} under what circumstances a magnetic field can be identified with a $1\frac{1}{2}$D Hamiltonian system. The procedure is coordinate-free and global. This identification holds on compact three-manifolds diffeomorphic to a product $S^1\times \Sigma$. This includes the domains of interest to magnetic confinement such as the solid torus and the hollow torus. There is an opportunity to extend the results to non-trivial bundles, for example generated by excluding linked loops from a toroidal domain as in heliotrons~\cite{wakatani-1998} or in the CNT device~\cite{pedersen-boozer-2002}, or by removing loops in integer multiples of the generating class of first homology.

The paper is organised as follows. In \cref{sec:beltrami-example}, an example of a \emph{Beltrami} magnetic field on a hollow torus is shown to possess a Poincaré section that cannot purely be the level sets of the toroidal angle or poloidal angle, but a combination of both. In \cref{sec:local-coordinates,sec:weyl-gauge}, direct methods to express the magnetic field in the so-called double-Clebsch form are explored. The first method is local and coordinate-dependent, and the second is global and coordinate-free relying on the so-called Weyl gauge. The drawback of these direct methods is that they only apply on specific domains. In \cref{sec:hamiltonian-systems}, we review the notion of non-autonomous Hamiltonian systems in the canonical setting and reveal their relation to cosymplectic manifolds. Necessary conditions for dynamics on cosymplectic manifolds to admit a Hamiltonian representation are determined. In \cref{sec:divergence-free-fields}, we specialise to the case of divergence-free fields on domains of $\R^3$ with tangential boundary conditions and present a procedure involving Moser's trick to decompose the magnetic two-form into a symplectic form on the fibres and the wedge product of a Hamiltonian function and the derivative of the circle-valued projection. \cref{sec:conclusion} discusses the results and hints to possible extensions.

\section{Motivating example and direct methods}
The main strategy to relate field-line equations to Hamiltonian systems is to arrive via coordinate transformations at the commonly adopted \emph{double-Clebsch} representation of the magnetic field on a toroidal domain of $\R^3$,
\begin{align*}
    B = \nabla\psi_t\times\nabla \theta + \nabla\varphi \times \nabla \psi_p(\psi_t,\theta,\varphi),
\end{align*}
where $(\psi_t,\theta,\varphi)$ are known as \emph{toroidal flux coordinates} in the literature~\cite{dhaeseleer,helander-2014}; $\psi_t$ is a radial variable (measuring the toroidal flux when the field-lines are nested), $\theta$ is a poloidal angle wrapping around the short way of the torus and $\varphi$ is a toroidal angle travelling around the long way. The function $\psi_p$ yields the poloidal flux when field-lines are nested. In this picture, $\psi_t$ is seen as the canonical action paired to the angle $\theta$ and $\psi_p$ is identified as a ``time-dependent" Hamiltonian function.

There are two distinct questions when proceeding in this way; how are these coordinates (globally) recovered for an arbitrary magnetic field, and do they necessarily lead to a Hamiltonian representation of the field-line dynamics ?

It is useful to note that the toroidal component of the magnetic field,
\begin{align}\label{eq:double-clebsch}
B^\varphi=B\cdot\nabla\varphi = (\nabla\psi_t\times\nabla\theta)\cdot\nabla\varphi = \frac{1}{\sqrt{g}},
\end{align}
is inversely proportional to the Jacobian determinant of the coordinate transformation. By the inverse function theorem, the transformation is (locally) invertible if and only if $B^\varphi\neq 0$ is nowhere zero, namely the magnetic field admits (global) Poincar\'e sections given by the level sets of $\varphi$, also known as poloidal cross-sections. In the literature of magnetic confinement fusion, the assumptions around Poincar\'e sections are often taken for granted. There are examples of devices where the toroidal component changes sign, for example in the Reversed-Field Pinch (RFP) \cite{lorenzini-2009}. The magnetic fields in such devices cannot admit a global Poincaré section, preventing the identification of a forward ``time" direction and therefore the strict association of the field-line dynamics with a Hamiltonian system. While it is possible to glue coordinate patches in which magnetic fields are expressed as above and a consistent forward time-direction is found, the procedure is bound to fail under global perturbations.

Below is an example of a nowhere-vanishing force-free (Beltrami) magnetic field whose toroidal and poloidal components vanish at different locations, which seems to suggest that the identification with a Hamiltonian system might (globally) fail. After an elementary redefinition of the angles via a global coordinate transformation, it is shown that the field actually admits a global Poincaré section, and is therefore genuinely Hamiltonian, albeit with respect to non-standard poloidal and toroidal angles. This example highlights part of the challenges pertaining to Poincaré sections. The rest of the paper will be concerned about deeper obstructions.

\subsection{A Lundquist-Beltrami field with ``full" rotational transform}
\label{sec:beltrami-example}
Eigenfields of the curl operator, depicting Beltrami fields or linear force-free fields, were actively studied in astrophysical contexts \cite{lundquist1951stability, chandrasekhar1957force} (see also \cite{Yoshida_Giga_1990, Yoshida_1992}). Such vector fields are relevant in plasma physics and magnetic confinement fusion through a process called \emph{Taylor relaxation}~\cite{Taylor_1986}, whereby the magnetic energy is dissipated on faster timescales than magnetic helicity. As such, Beltrami states are minimisers of a quadratic energy functional, and thus solutions of an elliptic variational problem. They are also special cases of magneto-hydrostatics (MHS) solutions (also called MHD equilibrium). Examples of Beltrami fields can easily be constructed in simple slab or cylindrical geometries. The question is whether the field-line dynamics can be cast as a non-autonomous Hamiltonian system. As with the elementary ABC-flows~\cite{arnold-1965}, the identification is not generally expected to hold. An important obstruction is highlighted, but then resolved, in the following example. 

Let $J_0,J_1 : \R \to \R$ denote the Bessel functions of order $0$ and order $1$ respectively. For $0 < r_0 < r_1<\infty$, consider the annulus
\begin{equation}\label{eq:annulus}
A_{r_0,r_1} = \{(x,y) \in \R^2 :  r_0 \leq \sqrt{x^2+y^2} \leq r_1\}. 
\end{equation}
Then, on the \emph{periodic co-axial cylinder} $M = S^1 \times A_{r_0,r_1}$, where $S^1 = \quotient{\R}{2\pi\Z}$, the vector field
\begin{equation}
\label{eq:beltrami-example}
B = \frac{J_1(r)}{r}\frac{\partial}{\partial \theta}\Big|_{r,\varphi} - J_0(r)\frac{\partial}{\partial\varphi}\Big|_{r,\theta}
= -\nabla (r J_1(r))\times \nabla\theta + \nabla J_0(r)\times \nabla\varphi ,     
\end{equation}
satisfies, with respect to the metric $g=d\varphi^2+ dr^2+r^2d\theta^2$,
\begin{equation*}
\nabla \times B = -B, \qquad B \cdot n = 0 \quad \text{(on $\partial M$)},    
\end{equation*}
where $n$ denotes the outward unit normal on $M$ and $(\varphi,r,\theta)$ are coordinates coming from the projection $\varphi : M \to S^1$ and usual polar coordinates. The vector field $B$ is the same as that considered by \citet{lundquist1951stability} but restricted to a different domain.

Let $z_{i,1}$ denote the first positive zero of $J_i$, $i \in \{0,1\}$. As well-known, $0 < z_{0,1} < z_{1,1}$, and so we may set
\begin{equation*}
r_0 = z_{0,1}, \qquad r_1 = z_{1,1}.
\end{equation*}
The  field-lines of $B$ satisfy the following ODEs
\begin{align*}
    \dot{r}(t) &= 0,\\
    \dot{\theta}(t) & = \frac{J_1(r(t))}{r(t)},\\
    \dot{\varphi}(t) &= - J_0(r(t)),
\end{align*}
yielding excessively simple dynamics. Any function of the coordinate $r$ forms an invariant or \emph{flux-function}. The pitch of the field-lines in these coordinates is constant, and equal to the \emph{winding number}, also known as \emph{rotational transform}~\cite{dhaeseleer},
\begin{align*}
    \iota(r) =\frac{\dot\theta}{\dot\varphi}= \frac{B\cdot\nabla\theta}{B\cdot\nabla\varphi} = -\frac{J_1(r)}{rJ_0(r)}.
\end{align*}
We would like to see if the field-lines dynamics can be described as a non-autonomous $1\frac{1}{2}$D Hamiltonian system. The necessary condition that $B$ is nowhere zero is met, as shown later in \cref{eq:positive-along-eta}. An obstacle is encountered in that neither the $\varphi$ coordinate nor the $\theta$ coordinate serve as an angle against which the field $B$ is positive. In other words, the magnetic field has a rotational transform ranging from $0$ to $\infty$. As well-known, writing $z_{0,2}$ for the second positive zero of $J_0$, we have $z_{0,1} < z_{1,1} < z_{0,2}$ and $J_0|_{(z_{0,1},z_{0,2})} < 0$. On the other hand, $J_1|_{(0,z_{1,1})} > 0$. So, $J_0(z_{0,1}) = 0 = J_0(z_{0,2})$ and $J_1(z_{1,1}) = 0$. Therefore,
\begin{align*}
\{x \in M : B\cdot\nabla\varphi|_x > 0\} &= M \setminus \partial_0 M,\\
\{x \in M : B\cdot\nabla\theta|_x > 0\} &= M \setminus \partial_1 M,
\end{align*}
where $\partial_0 M$ is the inner connected component of $\partial M$,
\begin{equation*}
\partial_0 M = S^1 \times \{(x,y) \in \R^2 : x^2+y^2 = r_0^2\},
\end{equation*}
and $\partial_1 M$ is the outer connected component of $\partial M$,
\begin{equation*}
\partial_1 M = S^1 \times \{(x,y) \in \R^2 : x^2+y^2 = r_1^2\}.
\end{equation*}
That is, neither $B\cdot\nabla\varphi > 0$ nor $B\cdot\nabla\theta > 0$ on the whole of $M$. In fact, even if one introduces smooth functions $f,g$ on $M$, and considers angle functions $\theta', \varphi' : M \to S^1$ such that
\begin{align*}
\nabla\theta' &= \nabla(\theta + f)=\nabla\theta+\nabla f,&
\nabla\varphi'&= \nabla(\varphi + g)=\nabla\varphi+\nabla g,
\end{align*}
then one finds that neither $B\cdot\nabla\theta' > 0$ nor $B\cdot\nabla \varphi' > 0$. Indeed, one has
\begin{equation*}
B\cdot\nabla\theta'|_{\partial_1 M} = B\cdot\nabla\theta|_{\partial_1 M} + B\cdot\nabla f|_{\partial_1 M} = B\cdot\nabla f|_{\partial_1 M},
\end{equation*}
and
\begin{equation*}
B\cdot\nabla\varphi'|_{\partial_0 M} = B\cdot\nabla\varphi|_{\partial_0 M} + B\cdot\nabla g|_{\partial_0 M} = B\cdot\nabla g|_{\partial_0 M}.
\end{equation*}
But, for any function $h \in C^{\infty}(M)$ and any $i \in \{0,1\}$, because $h|_{\partial_i M}$ attains a maximum on $\partial_i M$ and therefore has a critical point on $\partial_i M$, the fact that $B$ is tangent to $\partial_i M$ implies that $B\cdot\nabla h|_{\partial_i M}$ vanishes at some point in $\partial_i M$. Therefore, modifying the angle functions $\theta,\varphi$ by smooth functions $f,g$ is futile.
\begin{figure}
    \centering
    \includegraphics[width=0.6\linewidth]{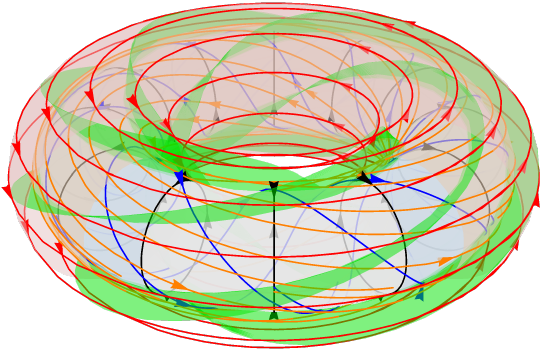}
    \caption{Field-lines and flux-surfaces of the Beltrami field of \cref{eq:beltrami-example}, represented in toroidal geometry (a non-isometric embedding). On the (grey) inner boundary, the (black) field-lines are purely poloidal and on the (red) outer boundary, the (red) field-lines are purely toroidal. On an intermediate flux-surface, (blue/orange) field-lines feature irrational winding. The green surfaces are level-sets of the $\phi=\theta+\varphi$ circle-valued function. The field-lines are everywhere transverse to these surfaces, so that they constitute global Poincaré sections, which is the necessary and sufficient requirement to identify the field-lines $B$ as a non-autonomous Hamiltonian system.}
    \label{fig:beltrami-example}
\end{figure}

This predicament is remedied by considering a new angle\footnote{Any integer combination of $\theta$ and $\varphi$ produces a circle-valued function by the group structure of $S^1$, whereas general $\R$-linear combinations do not yield a function $M \to S^1$.} $\phi = \theta + \varphi : M \to S^1$. To see this, first note that
\begin{equation*}
B\cdot\nabla\phi = \frac{J_1(r)}{r} - J_0(r).  
\end{equation*}
From the fact that $J_0|_{(z_{0,1},z_{0,2})} < 0$ and $J_1|_{(0,z_{1,0})} > 0$, it follows that for $r \in [z_{0,1},z_{1,1}]$, one has that $J_1(r)/r - J_0(r) > 0$. Thus,
\begin{equation}\label{eq:positive-along-eta}
B\cdot\nabla\phi > 0.
\end{equation}
In fact, let us consider the coordinate transformation $F : M \to M$ given by
\begin{align*}
(\varphi,r,\theta)\mapsto (\theta+\varphi,r,\theta).
\end{align*}
Then, the contravariant components of the magnetic field in these new coordinates are computed as
\begin{align*}
B^\phi=B\cdot\nabla\phi &= \frac{J_1(r)}{r} - J_0(r)>0,\\
B^r=B\cdot\nabla r &= 0,\\
B^\theta =  B\cdot\nabla\theta &= \frac{J_1(r)}{r},
\end{align*}
so that
\begin{equation*}
B = \frac{J_1(r)}{r}\frac{\partial}{\partial\theta}\Big|_{r,\phi} + \left(\frac{J_1(r)}{r} - J_0(r)\right)\frac{\partial}{\partial\phi}\Big|_{r,\theta}
= -\nabla(J_0+rJ_1)\times \nabla \theta + \nabla J_0\times \nabla\phi,
\end{equation*}
One can then identify the flow of $B$ as the dynamics of a non-autonomous Hamiltonian system by following the approach in \cref{sec:hamiltonian-systems}.

In the remainder of the paper, we consider in higher generality what representing a magnetic field as a non-autonomous Hamiltonian system entails and show that diffeomorphisms such as the one above can be found in a wide variety of cases. A direct consequence of our investigation applied to the case of a hollow torus or solid torus is the following list of equivalent statements, presented here as amuse-bouches. To set up, let $D^2$ denote the (closed) unit disk in $\mathbb{R}^2$.
\begin{conclusion*}
\label{res:advert}
Let $M$ be a compact connected manifold with boundary, $\mu$ a volume-form on $M$, and $B$ a divergence-free vector field on $M$ tangent to $\partial M$. Let $\Sigma \in \{D^2,A_{1,2}\}$. Then, the following are equivalent. 
\begin{enumerate}
    \item\label{point:advert-poincare} $B$ has a Poincar\'e section diffeomorphic to $\Sigma$;
    \item\label{point:advert-oneform} $M$ is diffeomorphic to $S^1 \times \Sigma$ and there exists a closed one-form $\eta \in \Omega^1(M)$ such that $\eta(B) > 0$;
    \item\label{point:advert-hamiltonian} The field-lines of $B$ can be represented as a non-autonomous $1\frac{1}{2}$D Hamiltonian system on $\Sigma$.
\end{enumerate}
\end{conclusion*}
We clarify and recall the meaning of \cref{point:advert-hamiltonian} in \cref{sec:hamiltonian-systems} and \cref{point:advert-poincare} in \cref{sec:divergence-free-fields}.

\subsection{Local flux-coordinates and Hamiltonian representation}
\label{sec:local-coordinates}
There are several local procedures to derive flux-coordinates as in \cref{eq:double-clebsch}. Following \cite{whiteman-1977}, let $(u^1,u^2,u^3)$ be local coordinates of a domain of $\R^3$ with volume form $\mu=\sqrt{g}du^1\wedge du^2\wedge du^3$. First, the magnetic field is expressed in a \emph{covariant} representation as
\begin{align*}
    B = B^i \frac{\partial}{\partial u^i} \implies \beta =i_B\mu
    = \sqrt{g} B^1 du^2\wedge du^3 + \sqrt{g} B^2 du^3\wedge du^1 +\sqrt{g} B^3 du^1\wedge du^2
\end{align*}
with $B^i=B\cdot\nabla u^i$. The condition that the magnetic field is divergence-free (equivalently that the two-form $\beta$ is closed) reads
\begin{align*}
\partial_i (\sqrt{g} B^i) = 0.
\end{align*}
We wish to establish a coordinate transformation $(u^1,u^2,u^3)\mapsto (q,p,t)$ defined by
\begin{align*}
    q &= Q(u^1,u^2,u^3)=u^1, &
    p &= P(u^1,u^2,u^3), &
    t &= T(u^1,u^2,u^3)=u^3,
\end{align*}
such that the magnetic two-form has exactly the following expression
\begin{align*}
    \beta = dP\wedge dQ + dT\wedge dH 
    = - \partial_2 H du^2\wedge du^3 
    +(\partial_3 P +\partial_1 H) du^3\wedge du^1
    -\partial_2 P du^1\wedge du^2
\end{align*}
for some functions $H$ and $P$ of $(u^1,u^2,u^3)$. 

By identification, the following relations between the functions $P$, $H$ and the covariant components of the magnetic field are obtained
\begin{align*}
    \partial_2 H &= - \sqrt{g} B^1, &
    \partial_2 P &= - \sqrt{g} B^3, &
    \partial_3 P + \partial_1 H &= \sqrt{g} B^2.
\end{align*}
If the first and second conditions are satisfied, then the derivative of the third with respect to $u^2$ is automatically verified. Defining
\begin{align*}
P(u^1,u^2,u^3) &= -\int_{a}^{u^2}\sqrt{g} B^3(u^1,v,u^3) dv + c_1(u^1,u^3) \\
    H(u^1,u^2,u^3) &= -\int_{a}^{u^2}\sqrt{g} B^1(u^1,v,u^3) dv + c_2(u^1,u^3),
\end{align*}
then, writing $\sqrt{g}B^2(u^1,u^2,u^3)=\sqrt{g}B^2(u^1,a,u^3)+\int_a^{u^2}\partial_2(\sqrt{g}B^2)(u^1,v,u^3)dv$, the double-Clebsch representation of the magnetic field is achieved by asserting that
\begin{align*}
   \partial_3 c_1(u^1,u^3) + \partial_1 c_2(u^1,u^3) = \sqrt{g}B^2(u^1,a,u^3).
\end{align*}
This is possible, for example, by letting $c_2=0$ and by setting
\begin{align*}
    c_1(u^1,u^3) = \int_b^{u^3}\sqrt{g}B^2(u^1,a,v)dv.
\end{align*}
The coordinate transformation is valid provided that the Jacobian matrix is invertible, or equivalently that the top-form
\begin{align*}
    dP\wedge dQ\wedge dT = \beta\wedge du^3
\end{align*}
is a volume-form (non-degenerate). We must thus have $\partial_2 P=-\sqrt{g}B^3\neq 0$, which assumes that the magnetic field is transverse to the level sets of $u^3$ acting as Poincaré sections. Often omitted and quite tedious in practice, the final step to relate the magnetic field to a Hamiltonian system is to express the function $H$ in terms of the new coordinates, namely to define the Hamiltonian $\hat{H}$ so that $\hat{H}(q,p,t) = H(u^1,u^2,u^3)$ when $(q,p,t)=(u^1,P(u^1,u^2,u^3),t)$. Then, one reaches the desired representation in the coordinates $(q,p,t)$
\begin{align*}
    \hat{\beta} = dp\wedge dq + dt\wedge d\bar{H}(q,p,t).
\end{align*}

\subsection{Global coordinate-free procedure exploiting the Weyl gauge}
\label{sec:weyl-gauge}
A coordinate-free and global version of the construction described in the previous section is when a smooth (strong) deformation retract $F:M\times[0,e)\to M$ with $e>0$ of the manifold $M$ onto a sub-manifold $L$ can be manufactured as the semi-flow of the (autonomous) vector field $X$ fitting the description below. The maps $F_s:M\to M$, $x\mapsto F(x,s)$ is a family of diffeomorphisms satisfying the ordinary differential equation
\begin{align*}
\partial_s F_s = X\circ F_s,
\end{align*}
with initial condition $F_0=\id$. It is clear that the semi-flow of $X$ enjoys the group property $F_a\circ F_b=F_{a+b}=F_b\circ F_a$ whenever $0\leq a+b<e$. This property implies in particular that the generating vector field is invariant under its semi-flow, ${F_s}_*X=X$ for all $s\in[0,e)$.
We now compute the $s$-derivative of the pullback by $F_s$ of the magnetic two-form $\beta=i_B\mu$,
\begin{align*}
    \frac{d}{ds} F_s^* \beta = F_s^*\lie_X\beta = d (F_s^* i_X\beta).
\end{align*}
Integrating with respect to $s$ and using the fundamental theorem of calculus, we have
\begin{align}
\label{eq:weyl-gauge}
    \beta  &= d\alpha + F^*_{e}\beta, & \text{where}&&\alpha &= -\int_0^{e} F_s^*i_X \beta ds.
\end{align}
If $L$ is a one-dimensional submanifold (e.g. $M$ retracts to a line, as is the case for the solid torus), then $F_{e}^*\beta=0\in\Omega^2(L)$. If $L\subset \partial M$ (e.g. $M$ retracts to a component of the boundary, as is the case for the hollow torus) and the magnetic field is assumed to be tangential to that boundary component, then $F_{e}^*\beta=0\in\Omega^2(L)$ because the two-form is Dirichlet, see \cref{lem:boundaryconditions}. In either case, a unique vector potential $\alpha$ is obtained to represent the (exact) magnetic two-form, $\beta=d\alpha$. The direct evaluation of the vector potential of \cref{eq:weyl-gauge} corresponds to a choice of gauge; it is an instance of the \emph{Weyl gauge}, also known as multi-polar gauge or Poincaré gauge. An interesting and defining property of the potential one-form $\alpha$ is that its component along $X$ is zero. Indeed,
\begin{align*}
    \alpha(X) = -\int_0^{e} (F_s^*i_X\beta)(X) ds 
    = -\int_0^{e} F_s^*(\beta(X,{F_s}_*X))ds 
    = -\int_0^{e} F_s^*(\beta(X,X))ds
    = 0.
\end{align*}

Now, to arrive at the double-Clebsch representation of $\beta$, we further assume that there exists a function $\rho$ and two linearly independent closed one-form $\kappa_1$ and $\kappa_2$ on $M\setminus L$, such that
\begin{align*}
    i_X\mu = -\rho \kappa_1\wedge \kappa_2.
\end{align*}
There is freedom in the choices of $\rho$, $\kappa_1$ and $\kappa_2$ beyond the fact that they must produce a well-defined deformation retract $F$ of $M$ onto $L$, which can be difficult to assess in practice. The regularity of $\kappa_1$ and $\kappa_2$ as closed one-forms on $M\setminus L$ may not extend to the whole manifold $M$, in which case the function $\rho$ has to be suitably tailored to compensate singularities in order for $X$ to be well-defined, see \cref{ex:weyl-solid-torus}.

Since $X\in \ker\kappa_1\cap \ker\kappa_2 $, we have $\lie_X\kappa_j=i_X\cancel{d\kappa_j}+d\cancel{i_X\kappa_j}=0$, which means that the semi-flow preserves the closed one-forms, namely $F_s^*\kappa_j=\kappa_j$ for all $s\in[0,e)$ and $j\in\{1,2\}$.

With the above prescription for $X$, an explicit expression for the potential one-form $\alpha$ is obtained. First, note that 
\begin{align*}
    -i_X\beta = -i_Xi_B\mu=i_Bi_X\mu 
    =-i_B( \rho\kappa_1\wedge \kappa_2)
    = -\rho\kappa_1(B)\kappa_2 + \rho \kappa_2(B)\kappa_1
    = \kappa_2(\rho B)\kappa_1 - \kappa_1(\rho B)\kappa_2.
\end{align*}
Because the flow preserves the one-forms $\kappa_1$ and $\kappa_2$, the potential one-form is then given by 
\begin{equation*}
\alpha = P\kappa_1 - H \kappa_2
\end{equation*}
where 
\begin{align*}
    P &= \int_0^e F_s^*(\kappa_2(\rho B))ds,&
    H &= \int_0^e F_s^*(\kappa_1(\rho B)) ds.
\end{align*}
Consequently, the two-form will be globally represented in the double-Clebsch form
\begin{equation*}
\beta = dP\wedge \kappa_1 - dH\wedge \kappa_2.
\end{equation*}
To recover a genuine Hamiltonian representation from here, one needs to identify canonical coordinates and a Hamiltonian function. For the same reasons as in \cref{sec:local-coordinates}, this (morally) works if the top-form
\begin{align*}
    dP\wedge \kappa_1\wedge\kappa_2 = \beta\wedge \kappa_2 
\end{align*}
is a volume-form. Then, $\kappa_2$ provides the \emph{time}-coordinate, $dP\wedge\kappa_1$ defines a symplectic form on the leaves (Poincaré sections) of the foliation integrated from $\ker \kappa_2$, and $H$ qualifies as a non-autonomous Hamiltonian function. The frailty of the method is highlighted in the following example.

\begin{example}
\label{ex:weyl-solid-torus}
    On a manifold diffeomorphic to a solid torus $M\cong S^1\times D^2$ with volume-form $\mu=\sqrt{g}dr\wedge d\theta\wedge d\phi$, consider the vector field $X=-r\partial_r$ which satisfies
    \begin{align*}
        i_X\mu =-r\sqrt{g} d\theta\wedge d\phi
    \end{align*}
    where $\phi:M\to S^1$ and $(r,\theta)$ are polar coordinates on $D^2$. Here, $\rho=r\sqrt{g}$, $\kappa_2=d\phi$ and $\kappa_1=d\theta$ which is a well-defined one-form on $M\setminus L$, namely away from the coordinate axis $L\cong S^1\times \{0\}$. The semi-flow of $X$ is given in coordinates by
    \begin{align*}
        F_s(r,\theta,\phi) = (r e^{-s},\theta,\phi)
    \end{align*}
    which adequately deformation retracts the solid torus to the coordinate axis $L$ in the limit $s\to\infty$. Then, the functions $P=\alpha(\partial_\theta)$ and $H=-\alpha(\partial_\phi)$ are the covariant components of the vector potential, uniquely determined by the contravariant components of the magnetic field  $B^\phi=d\phi(B)$ and $B^\theta=d\theta(B)$ through the following integrals
\begin{align*}
    P(r,\theta,\phi)  &= r \int_0^\infty e^{-s} \sqrt{g}(re^{-s},\theta,\phi) B^\phi(re^{-s},\theta,\phi) ds 
    = r  \int_0^1 \sqrt{g}(u r,\theta,\phi) B^\phi(u r,\theta,\phi) du \\
    H(r,\theta,\phi) &= r \int_0^\infty e^{-s} \sqrt{g}(re^{-s},\theta,\phi) B^\theta(re^{-s},\theta,\phi) ds 
    = r \int_0^1 \sqrt{g}(u r,\theta,\phi) B^\theta(u r,\theta,\phi) du.
\end{align*}
\end{example}
This method extends to the hollow torus by using a similar radial vector field $X\propto \partial_r$ under the additional assumption that the magnetic field is tangential to the boundary. The representation of the vector potential without a radial component is extremely common in plasma physics, in particular in numerical codes such as VMEC~\cite{hirshman-1983} and SPEC~\cite{hudson-2012}.

This method is limited to manifolds admitting i) a deformation retract onto the ``right" submanifold, ii) closed but nowhere-vanishing one-forms. For example, it does not seem possible to apply the Weyl gauge to the case of a solid torus in which two toroidal loops are removed (yielding a cross-section diffeomorphic to a disk with two holes). It is also not immediate that the functions $P$ and $H$ defined above on $M\setminus L$ are smooth on $M$. While $B^\theta=d\theta(B)$ is often singular at the coordinate axis, the product $\sqrt{g}B^\theta$ is usually integrable. For example, consider the magnetic field $B=\partial_x$ on a periodic cylinder with the flat metric $g=dx^2+dy^2+d\phi^2=dr^2+r^2d\theta+d\phi^2$ where $(x,y)$ are Cartesian coordinates of the disk, one finds that $B^\theta=-\sin\theta/r$, but $\sqrt{g}B^\theta=-\sin\theta$ so that $H=-r\sin\theta =-y$ is smooth. Overall, the method is still sensitive to the choice of coordinates and magnetic fields.

In the remainder of the paper, we explore a vastly different and possibly more robust method that circumvents some of the caveats discussed above. We first start by reviewing in precise terms what non-autonomous Hamiltonian systems means.

\section{Non-autonomous Hamiltonian systems}
\label{sec:hamiltonian-systems}

\subsection{Classical non-autonomous Hamiltonian systems in canonical coordinates}
\label{sec:classical-hamiltonian-systems}
  Classically, a non-autonomous Hamiltonian system is considered to be an ``$n+\tfrac12$" degree of freedom dynamical system on the cotangent bundle $T^*Q$ of a smooth $n$-manifold $Q$. It is defined by a \emph{time-dependent} Hamiltonian function $H: \R\times T^*Q\to \R$ or $H:S^1\times T^*Q\to \R$ for periodicity in time. We will work with the periodic case.
  
  In canonical coordinates $(q^i,p_i)$ on $T^*Q$, Hamilton's equations read
  \begin{align*}
      \dot{q}^i(t) &= \ \ \frac{\partial H}{\partial p_i}(t,q(t),p(t))\\
      \dot{p}_i(t) &=-\frac{\partial H}{\partial q^i}(t,q(t),p(t))
  \end{align*}
  which can be recorded as the dynamics of the non-autonomous vector field on $T^*Q$,
  \begin{align*}
      X_t = \frac{\partial H_t}{\partial p_i}\partial_{q^i} - \frac{\partial H_t}{\partial q^i}\partial_{p_i},
  \end{align*}
     where $H_t:T^*Q\to \R$ is now a family of Hamiltonian functions on $T^*Q$ defined by $H_t(q,p)=H(t,q,p)$ for $t\in S^1$.
  
  It is natural to view \emph{time} as an independent coordinate and elevate the non-autonomous system to an autonomous one on $S^1\times T^*Q$, given by the differential equations
  \begin{align*}
    \dot{t}(s) &= 1\\
    \dot{q}^i(s) &= \ \ \frac{\partial H}{\partial p_i}(t(s),q(s),p(s))\\
    \dot{p}_i(s) &=-\frac{\partial H}{\partial q^i}(t(s),q(s),p(s)).
  \end{align*}
  In turn, this can be thought of as the dynamics of the \emph{evolution vector field} $E_H$ on $S^1\times T^*Q$, which can be expressed as the sum
  \begin{align}
  \label{eq:canonical_evolution-field}
      E_H = \partial_t + X_H,
  \end{align}
  where the vector field $X_H$ on $S^1\times T^*Q$ evaluates to ${X_H}(t,q,p) = X_t(q,p)$ for  $t\in S^1$ and $(q,p)\in T^*Q$.

\subsection{Some trivial bundle formalism}
\label{sec:trivial-bundle-formalism}
Before diving into a modern representation of non-autonomous Hamiltonian dynamics, this section introduces some basic formalism around product manifolds and trivial bundles. A review of the classification of fibre bundles over the circle can be found in \cref{sec:MappingToriMCG}.


A fibre bundle over the circle would be defined in terms of a projection $\phi : M \to S^1$. A trivial bundle $M=S^1\times \Sigma$ comes with natural projections onto each factor,
\begin{align}
  \phi&=\proj_1:S^1\times \Sigma\to S^1, \quad (t,x)\mapsto t\label{eq:projection_circle}\\
  \pi&=\proj_2:S^1\times \Sigma\to \Sigma, \quad \ (t,x)\mapsto x\label{eq:projection_fibre}.
\end{align}

For a given $t\in S^1$, consider the inclusion map of each fibre
\begin{align}
  \inclusion_t&:\Sigma\subset S^1\times \Sigma, \quad x\mapsto(t,x)\label{eq:inclusion_fibre}.
\end{align}
 It is useful to list the following identities about these projection and inclusion maps. With $(t,x)\in S^1\times \Sigma$, we have
\begin{align}
 \pi\circ \inclusion_t &= \id_{\Sigma}, \\
 \phi\circ \inclusion_t&=t,\\
 d\pi_{(t,x)}\circ(d\inclusion_t)_x &= \id_{T_x \Sigma}, \\
 d\phi_{(t,x)}\circ (d\inclusion_t)_x &= 0\label{eq:dit-in-kerdphi}
\end{align}
In words, the projection $\pi$ is a left-inverse of the inclusion $\inclusion_t$ for each $t\in S^1$. Consequently, as pullback operators on $k$-forms, we have
\begin{align}
    \inclusion_t^*\pi^* &= \id_{\Omega^k(\Sigma)}, \label{eq:inclusion-pi-is-id} &    \inclusion_t^*\phi^*&=0.
\end{align}
It is immediately recognised that the pre-images of the projection $\phi$ onto the circle are copies of the fibre $\Sigma$, namely $\inclusion_t(\Sigma)=\phi^{-1}(t)$ for all $t\in S^1$.
\begin{lemma}
\label{lem:kernelimage}
For any point $(t,x)\in S^1\times \Sigma$, 
$(d\inclusion_t)_x(T_x \Sigma)=\ker d\phi_{(t,x)}$.
\end{lemma}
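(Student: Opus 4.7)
The plan is to establish the two inclusions separately and close the argument by a dimension count, noting that the hypothesis $M = S^1 \times \Sigma$ together with the two projections and the inclusion makes everything very rigid: one direction is essentially the chain rule applied to a constant composition, and the reverse direction follows from a rank comparison.

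First, I would get the inclusion $(d\inclusion_t)_x(T_x\Sigma) \subseteq \ker d\phi_{(t,x)}$ for free. This is exactly \cref{eq:dit-in-kerdphi}, which records $d\phi_{(t,x)} \circ (d\inclusion_t)_x = 0$, itself a consequence of $\phi \circ \inclusion_t$ being the constant map $x \mapsto t$ on $\Sigma$ and the chain rule. Any vector of the form $(d\inclusion_t)_x v$ with $v \in T_x\Sigma$ is therefore annihilated by $d\phi_{(t,x)}$.

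Next I would compute the dimensions of both subspaces of $T_{(t,x)}(S^1 \times \Sigma)$ and observe that they coincide. On the one hand, $(d\inclusion_t)_x$ is injective: differentiating $\pi \circ \inclusion_t = \id_\Sigma$ at $x$ gives $d\pi_{(t,x)} \circ (d\inclusion_t)_x = \id_{T_x\Sigma}$, so $(d\inclusion_t)_x$ has a left inverse, and its image has dimension $\dim \Sigma$. On the other hand, $\phi = \proj_1$ is a submersion since it admits the smooth sections $\inclusion_t$, so $d\phi_{(t,x)} \colon T_{(t,x)}(S^1 \times \Sigma) \to T_t S^1$ is surjective; by the rank-nullity theorem, $\ker d\phi_{(t,x)}$ has dimension $(1 + \dim \Sigma) - 1 = \dim \Sigma$.

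Combining the two steps, $(d\inclusion_t)_x(T_x\Sigma)$ is a subspace of $\ker d\phi_{(t,x)}$ of the same finite dimension, hence the two spaces are equal. There is no real obstacle in the argument; the only mild subtlety is reminding oneself that the chain-rule identity \cref{eq:dit-in-kerdphi} gives the containment in exactly the right direction, and that the companion identity $d\pi \circ d\inclusion_t = \id$ supplies the injectivity of $(d\inclusion_t)_x$ needed to pin the dimension down.
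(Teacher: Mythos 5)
Your proof is correct and follows essentially the same route as the paper: the containment from \cref{eq:dit-in-kerdphi}, followed by a dimension/codimension count to upgrade it to equality. The only difference is cosmetic — you invoke the left-inverse identity $d\pi \circ d\inclusion_t = \id$ to justify injectivity of $(d\inclusion_t)_x$, whereas the paper appeals directly to injectivity of the inclusion map $\inclusion_t$.
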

\begin{proof}
That $(d\inclusion_t)_x(T_x \Sigma)\subseteq \ker d\phi_x$ is a consequence of identity in \cref{eq:dit-in-kerdphi}. The subspace $\ker d\phi_{(t,x)}$ has co-dimension one since $\phi$ is nowhere critical. Since the inclusion map $\inclusion_t$ is injective, the co-dimension of $(d\inclusion_t)_x(T_x \Sigma)$ is also one, thus both subspaces coincide.
\end{proof}

As a closed one-form on $S^1\times\Sigma$, $d\phi$ represents a basis class of first cohomology. We let $\int_{\Gamma_T} d\phi=2\pi$ where $\Gamma_T$ is a toroidal loop, representing the generator of first homology of the $S^1$ factor.

\subsection{Porting non-autonomous Hamiltonian systems into cosymplectic manifolds}
\label{sec:nonAutoIsCosymplectic}
A modern treatment of Hamiltonian systems generalises the canonical cotangent bundle to consider instead a $2n$-manifold $\Sigma$ together with a symplectic form $\omega_o\in \Omega^2(\Sigma)$. An (autonomous) Hamiltonian system is the dynamics of a Hamiltonian vector field $X_{H_o}$ on $\Sigma$ induced by a (single) Hamiltonian function $H_o :\Sigma \to \R$ satisfying 
  \begin{align*}
    \intprod_{X_{H_o}} \omega_o = -dH_o.
  \end{align*}

  A non-autonomous Hamiltonian system in this setting is a smooth family of Hamiltonian functions $H_t:\Sigma\to\R$ inducing the non-autonomous Hamiltonian vector field $X_t$ on $\Sigma$, via $\intprod_{X_t}\omega_o=-dH_t$. Here, the time-parameter takes values on the circle, $t\in S^1$. The symplectic form $\omega_o\in\Omega^2(\Sigma)$ is fixed (time-independent). A Dirichlet boundary condition is imposed
  \begin{align*}
      \chipboundary^* dH_t=0
  \end{align*}
for all $t\in S^1$, where $\chipboundary:\partial \Sigma \subset \Sigma$. This guarantees that the flow exists for all time (complete flow) when $\Sigma$ is compact.

   It is natural to view the dynamics of our non-autonomous Hamiltonian system as taking place on the $2n+1$ dimensional trivial bundle $S^1\times\Sigma$. The smooth family of Hamiltonian functions $H_t$ converts to the smooth function $H:S^1\times\Sigma\to \R$, $(t,x)\mapsto H_t(x)$. The goal is to construct an evolution vector field $E_H = \partial_t + X_H$ on $S^1\times\Sigma$ analogously to \cref{eq:canonical_evolution-field}. 

\subsubsection{Time component of the evolution vector field}
  \label{sec:time-reeb}
The time vector field $\partial_t\in \Gamma(T(S^1\times\Sigma))$ is uniquely characterised by
\begin{align*}
       d\pi\circ \partial_t &= 0, &
        d\phi(\partial_t) &= 1.
\end{align*}
Defining the closed two-form $\omega = \pi^*\omega_o\in\Omega^2(S^1\times\Sigma)$, the statement $d\pi \circ \partial_t=0$ is equivalent to $\iota_{\partial_t} \omega = 0$ by the non-degeneracy of $\omega_o\in\Omega^2(\Sigma)$. The time vector field $\partial_t\in\Gamma(T(S^1\times\Sigma))$ is thus uniquely determined by
  \begin{align*}
  \partial_t&\in \ker \omega, &
  d\phi(\partial_t)&=1.
  \end{align*}
  Since $\omega^n\wedge d\phi\neq 0$, the tuple $(S^1\times\Sigma,\omega,d\phi)$ forms what is known as a \emph{cosymplectic manifold} \cite{blairContactManifoldsRiemannian1976,chineaTopologyCosymplecticManifolds1993} whose \emph{Reeb field} is precisely $\partial_t$.
\begin{definition}\label{def:cosymplectic-manifold}
The triple $(M,\beta,\eta)$, where $M$ is a $2n+1$-manifold with boundary, $\beta$ a closed two-form on $M$, and $\eta$ a closed one-form on $M$, is said to be a cosymplectic manifold if $\beta^{n} \wedge \eta$ is a non-vanishing top form and $\beta$ satisfies the Dirichlet boundary condition $\boundary^*\beta = 0$ with $\boundary:\partial M\to M$. The unique vector field $R$ such that
\begin{align*}
R&\in \ker \beta, &
\eta(R) &= 1    
\end{align*}
is called the Reeb field of $(M,\beta,\eta)$. 
\end{definition}


  %
  The integral curves of $\partial_t$ are excessively simple in the sense that they fix points on $\Sigma$ and translate them along the circle direction. In particular, all integral curves of $\partial_t$ are recurrent with the circle's period, or in other words, its flow $\Phi_s^{\partial_t}$ has the property that $\Phi^{\partial_t}_{2\pi}=\id_M$. The vector field $\partial_t$ gives rise to an Ehresmann connection on the trivial fibre bundle $S^1\times \Sigma$. While viewing $\partial_t$ as an Ehresmann connection is excessive in the present setting, we show the utility of this perspective for non-trivial fibre bundles in \cref{sec:MappingToriMCG}.

  \subsubsection{Hamiltonian component of the evolution vector field}
  
  A characterisation of the vector field $X_H$ is that it is tangent to every copy of $\Sigma$ in the product $S^1\times\Sigma$, and that it agrees with the vector field $X_t$ on the specific copy where the last coordinate is equal to $t\in S^1$. This means that for a given time coordinate $t\in S^1$, $X_t$ and $X_H$ are related by the push-forward of the inclusion map $\inclusion_t:\Sigma\to S^1\times\Sigma$ of \cref{eq:inclusion_fibre},
  \begin{align}
  \label{eq:hamiltonian-part}
    d\inclusion_t\circ X_t = X_H\circ \inclusion_t.
  \end{align}
  It is noted that $X_H$ automatically belongs to $\ker d\phi$.
  Rather than through the family of Hamiltonian fields $\{X_t\}_{t\in S^1}$, we wish to characterise $X_H$ through the function $H$. For this, the following result applies.
  \begin{lemma}
  \label{lem:hamiltonian-part}
    The vector field $X_H\in \ker d\phi$ in \cref{eq:hamiltonian-part} is determined by 
    \begin{align*}
      \intprod_{X_H} \omega &= -dH + \partial_t H d\phi.
    \end{align*}
  \end{lemma}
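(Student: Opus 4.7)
The plan is to start from the defining identity $d\inclusion_t \circ X_t = X_H\circ \inclusion_t$ together with $X_H\in\ker d\phi$ and translate these into a pointwise characterisation of the one-form $\intprod_{X_H}\omega$. The claim is essentially that $\intprod_{X_H}\omega + dH$ is fiberwise zero plus a multiple of $d\phi$, and that this multiple is fixed by testing against the Reeb field $\partial_t$.

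First I would verify that the proposed equation behaves correctly under pullback by $\inclusion_t$ for each fixed $t\in S^1$. Using $\inclusion_t^*\omega = \inclusion_t^*\pi^*\omega_o = \omega_o$ (from \cref{eq:inclusion-pi-is-id}), $\inclusion_t^* d\phi = 0$, and naturality of the interior product together with $d\inclusion_t\circ X_t = X_H\circ \inclusion_t$, one gets
\begin{align*}
\inclusion_t^*\bigl(\intprod_{X_H}\omega\bigr) = \intprod_{X_t}(\inclusion_t^*\omega) = \intprod_{X_t}\omega_o = -dH_t = -\inclusion_t^*dH.
\end{align*}
Hence the one-form $\alpha := \intprod_{X_H}\omega + dH$ satisfies $\inclusion_t^*\alpha = 0$ for every $t\in S^1$.

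Next I would argue that such an $\alpha$ must be a function multiple of $d\phi$. By \cref{lem:kernelimage}, at every point $(t,x)$ the image of $(d\inclusion_t)_x$ is exactly $\ker d\phi_{(t,x)}$. A one-form that annihilates $\ker d\phi$ pointwise lies pointwise in $\Span(d\phi)$, so there exists a smooth function $f:S^1\times\Sigma\to\R$ with $\alpha = f\,d\phi$. To pin down $f$, I would evaluate $\alpha$ on the Reeb field $\partial_t$ of \cref{sec:time-reeb}: since $\partial_t\in \ker\omega$, one has $(\intprod_{X_H}\omega)(\partial_t) = -\omega(\partial_t,X_H) = 0$, while $dH(\partial_t) = \partial_t H$, so $\alpha(\partial_t) = \partial_t H$. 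Combined with $d\phi(\partial_t) = 1$ this forces $f = \partial_t H$, proving
\begin{align*}
\intprod_{X_H}\omega = -dH + \partial_t H\,d\phi.
\end{align*}

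Finally I would briefly address well-definedness, i.e.\ that this equation, together with $X_H \in \ker d\phi$, determines $X_H$ uniquely. If $Y \in \ker d\phi$ also satisfies $\intprod_Y \omega = \intprod_{X_H}\omega$, then $X_H - Y \in \ker d\phi$ and $\intprod_{X_H - Y}\omega = 0$; since $\omega = \pi^*\omega_o$ restricts to the symplectic form $\omega_o$ on each fibre via $d\pi$, and $\ker d\phi$ is identified with the tangent spaces of these fibres, non-degeneracy of $\omega_o$ forces $X_H = Y$. I expect the step requiring the most care to be the passage from ``$\inclusion_t^*\alpha = 0$ for all $t$'' to ``$\alpha \in \Span(d\phi)$''; it is routine once one notes that $\ker d\phi$ is spanned fibrewise by vectors of the form $(d\inclusion_t)_x v$, but it is the one place where the triviality of the bundle and \cref{lem:kernelimage} are actually used.
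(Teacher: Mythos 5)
Your proof is correct and follows essentially the same route as the paper's: check the identity on $\ker d\phi$ via pullback by $\inclusion_t$ and naturality of the interior product, then pin down the $d\phi$-component by evaluating on $\partial_t\in\ker\omega$. The only cosmetic difference is that you package the argument as ``$\intprod_{X_H}\omega+dH$ is a multiple of $d\phi$'' and then solve for the multiple, whereas the paper verifies the equality directly on the two complementary subbundles; your concluding paragraph on uniqueness of $X_H$ is a harmless addition that the paper leaves implicit.
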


  \begin{proof}
    Indeed, by naturality of the interior product, we have $\inclusion_t^* \intprod_{X_H}\omega = \intprod_{X_t}\omega_o$ and $\inclusion_t^*H = H\circ \inclusion_t =H_t$. This implies $\inclusion_t^* dH=d\inclusion_t^*H=dH_t$ for all $t\in S^1$, so both sides agree when restricted to copies of $\Sigma$, namely on $\ker d\phi$. The expression also holds on $\ker\omega=\Span\{ \partial_t\}$; the left-hand side immediately evaluates to zero, and the extra term on the right-hand side makes the latter vanish too,
    \begin{align*}
      -dH(\partial_t)+\partial_t H \cancelto{1}{d\phi(\partial_t)} 
      = -\partial_t H+\partial_t H
      = 0.
    \end{align*}
  \end{proof}

\subsubsection{The evolution vector field and the related cosymplectic manifold}
  The construction of the evolution vector field is summarised in the following proposition.

\begin{proposition}\label{prop:non-autonomous-hamiltonian-systems}
    Let $(\Sigma,\omega_o,\{H_t\}_{t\in S^1})$ be a non-autonomous Hamiltonian system. The evolution vector field $E_H$ on $S^1\times\Sigma$ associated to the family of Hamiltonian fields $\{X_t\}_{t\in S^1}$ is given by
\begin{align*}
    E_H = \partial_t + X_H,
\end{align*}
where the time component $\partial_t$ and the Hamiltonian component $X_H$ are (uniquely) determined by
\begin{align*}
    \partial_t&\in\ker\omega, &  d\phi(\partial_t)&=1 \\
    X_H &\in\ker d\phi, & \intprod_{X_H} \omega &= -dH + \partial_t H d\phi,
  \end{align*}
  where $H:S^1\times \Sigma\to \R$ is the smooth function $(t,x)\mapsto H_t(x)$, $\omega=\pi^*\omega_o$, and $\phi:S^1\times \Sigma\to S^1$, $\pi:S^1\times \Sigma\to\Sigma$ are the natural projections.
\end{proposition}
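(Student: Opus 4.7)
The plan is to assemble pieces already in hand from the preceding subsections. The proposition makes three distinct assertions: existence and uniqueness of $\partial_t$, existence and uniqueness of $X_H$, and the identification of $E_H = \partial_t + X_H$ with the extended evolution field of \cref{sec:classical-hamiltonian-systems}. I would handle each in turn; the bulk of the work is already encapsulated in \cref{lem:kernelimage,lem:hamiltonian-part} and in the Reeb-field discussion of \cref{sec:time-reeb}.

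For $\partial_t$: setting $\omega = \pi^*\omega_o$, the two-form is closed (as $\omega_o$ is) and has constant rank $2n$, so $\ker \omega$ is a one-dimensional distribution at every point. The non-degeneracy condition $\omega^n \wedge d\phi = \pi^*(\omega_o^n)\wedge d\phi \neq 0$ forces $d\phi$ to be non-zero on $\ker\omega$; consequently the normalisation $d\phi(\partial_t) = 1$ selects a unique vector field from this line, exactly matching the Reeb field of the cosymplectic structure in \cref{def:cosymplectic-manifold}.

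For $X_H$: the existence statement is the content of \cref{lem:hamiltonian-part}. The missing ingredient is uniqueness within $\ker d\phi$. By \cref{lem:kernelimage}, $\ker d\phi$ at $(t,x)$ is $(d\inclusion_t)_x(T_x\Sigma)$, and on this subspace $\omega = \pi^*\omega_o$ corresponds via $d\pi$ to $\omega_o$ on $T_x\Sigma$, which is non-degenerate. Hence the map $X \mapsto \intprod_X \omega$ is injective on $\ker d\phi$, pinning down $X_H$ uniquely from its defining equation. Consistency of the right-hand side (that it actually lies in the image of this contraction map, equivalently that it annihilates $\partial_t$) is exactly the final cancellation carried out inside the proof of \cref{lem:hamiltonian-part}.

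Finally, identifying $E_H = \partial_t + X_H$ with the evolution field of \cref{sec:classical-hamiltonian-systems} amounts to reading off its integral curves. Along any trajectory $s \mapsto (t(s),x(s))$ one has $\dot t(s) = d\phi(\partial_t + X_H) = 1$, while the projected curve $x(s) = \pi(t(s),x(s))$ satisfies $\dot x(s) = (d\pi\circ X_H)(t(s),x(s)) = X_{t(s)}(x(s))$ by \cref{eq:hamiltonian-part}, which in canonical coordinates on $T^*Q$ reproduces the lifted Hamilton equations. The only genuine obstacle in the argument is the uniqueness check for $X_H$ inside the codimension-one distribution $\ker d\phi$; everything else is reassembly of already-established structure.
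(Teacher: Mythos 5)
Your proposal is correct and follows essentially the same path the paper takes: the paper presents \cref{prop:non-autonomous-hamiltonian-systems} as a summary of the construction carried out in the two preceding subsections, so there is no separate proof in the text, and you have correctly reassembled it from \cref{sec:time-reeb}, \cref{eq:hamiltonian-part}, and \cref{lem:hamiltonian-part}. Your explicit uniqueness argument for $X_H$ (injectivity of $X \mapsto \intprod_X\omega$ on $\ker d\phi$, which follows from $\ker\omega\cap\ker d\phi = \{0\}$) is a welcome filling-in of a step the paper leaves implicit in the word ``determined'' in \cref{lem:hamiltonian-part}, and your reading of the integral curves of $\partial_t + X_H$ to recover the lifted Hamilton equations matches the passage from \cref{eq:canonical_evolution-field} through \cref{eq:hamiltonian-part}.
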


\begin{remark}
\label{rem:cosymplectic-beta}
In \cref{sec:time-reeb}, we identified a cosymplectic manifold $(S^1\times\Sigma,\omega,d\phi)$ such that $\partial_t$ was its Reeb field. By modifying the cosymplectic system, the Reeb field may be adjusted to become the evolution vector field $E_H$. Indeed, take $\beta_o=\omega - dH\wedge d\phi$ and note that, as above, $\beta_o^n\wedge d\phi=\omega^n\wedge d\phi \neq 0$ so that $(S^1\times\Sigma,\beta_o,d\phi)$ is a cosymplectic manifold. The Reeb field $R_o$ of this cosymplectic structure satisfies $R_o\in \ker\beta_o$ and $d\phi(R_o) = 1$. A calculation reveals that $R_o = E_H$. Therefore, one can always view the evolution field $E_H$ for a given $H$ as being the (unique) Reeb field of $\beta_o$.
  \end{remark}
  


\cref{prop:non-autonomous-hamiltonian-systems,rem:cosymplectic-beta} show how to port the dynamics of a non-autonomous Hamiltonian system from the symplectic manifold $(\Sigma,\omega_o)$ into the cosymplectic manifold $(\Sigma\times S^1,\beta_o,d\phi)$ whose Reeb field is an evolution vector field. This motivates the following definition.

\begin{definition}
\label{def:cosymplectic-hamiltonian-system}
A cosymplectic manifold $(M,\beta,\eta)$ is said to be represented by the non-autonomous Hamiltonian system $(\Sigma,\omega_o,\{H_t\}_{t \in S^1})$ if there exists a diffeomorphism $\Psi: S^1\times \Sigma \to M$ such that $\Psi^*\eta=d\phi$ and
\begin{equation*}
\Psi^*\beta = \omega - dH \wedge d\phi 
\end{equation*}
where $\omega = \pi^*\omega_o$ and $H \in C^{\infty}(S^1 \times \Sigma)$ is given by $H(t,x) = H_t(x)$. 
\end{definition}

\begin{remark}
\label{rem:beta_reeb}
    The representation entails a decomposition of $\beta=\Omega - d\mathcal{H}\wedge\eta$ where $\Omega=\Upsilon^*\omega=\Upsilon^*\pi^*\omega_o$ and $\mathcal{H}=\Upsilon^*H$ with $\Upsilon=\Psi^{-1}$. Consequently, the Reeb field $R$ of $(M,\beta,\eta)$ also admits a decomposition $R=\timelike+\hamlike$ where $\timelike=\Psi_*\partial_t$ and $\hamlike=\Psi_*X_H$ neatly characterised by
    \begin{align*}
        \timelike &\in\ker \Omega, &
        \eta(\timelike)&=1,\\
        \hamlike&\in\ker \eta,&
        \intprod_{\hamlike}\Omega&=-d\mathcal{H}+d\mathcal{H}(\timelike)\eta.
    \end{align*}

The flow of $\timelike$, denoted $\Phi^{\timelike}_s$, is $\Psi$-related to the flow of $\partial_t$, in the sense that $\Psi\circ\Phi_s^{\partial_t}=\Phi_s^{\timelike}\circ\Psi$. Hence, $\Phi_{2\pi}^{\timelike}=\id_M$ and the field-lines of $\timelike$ are consequently recurrent with the circle's period. The vector field $\timelike$ is called the \emph{time-like part} of $R$ and the vector field $\hamlike$ is called the \emph{Hamiltonian part} of $R$. In \cref{sec:MappingToriMCG}, the time-like part is viewed as an Ehresmann connection producing a monodromy map of the fibre bundle homotopic to the identity.
\end{remark}

We will now consider an important class of examples that enter in the proof of our main result, \cref{thm:moserForPlanarSigma}.

\begin{example}
\label{sec:bundle-isomorphism}
An important class of examples fitting \cref{def:cosymplectic-hamiltonian-system} is generated by diffeomorphisms $\Psi : M \to M$ on the trivial bundle $M = S^1 \times \Sigma$ satisfying $\phi \circ \Psi = \phi$. These are known as \emph{bundle automorphisms}.
We begin with a non-autonomous Hamiltonian system on $(\Sigma,\omega_o)$ with time-dependent Hamiltonian $H \in C^{\infty}(S^1 \times \Sigma)$ and use \cref{prop:non-autonomous-hamiltonian-systems} to port this system into the cosymplectic manifold $(S^1\times\Sigma,\beta_o,d\phi)$ where, as in \cref{rem:cosymplectic-beta}, $\beta_o = \omega-dH\wedge d\phi$ with $\omega=\pi^*\omega_o$. Then, setting $\Upsilon = \Psi^{-1}$, the cosymplectic manifold
\begin{equation*}
(S^1\times\Sigma,\beta,d\phi) = (S^1\times\Sigma,\Upsilon^*\beta_o,d\phi)     
\end{equation*}
is represented by, and therefore equivalent to, the dynamics of the non-autonomous system $(\Sigma,\omega_o,H)$. 

On the other hand, the Reeb field of $(S^1\times\Sigma,\beta,d\phi)$ induces dynamics on $\Sigma$ via the family of functions $\mathcal{H}_t=\inclusion_t^*\mathcal{H}$ where $\mathcal{H}=\Upsilon^*H$ and the family of symplectic two-forms $\omega_t=\inclusion_t^*\beta$ yielding the family of Hamiltonian vector fields $\mathcal{X}_t$ satisfying
\begin{equation*}
\intprod_{\mathcal{X}_t}\omega_t = -d\mathcal{H}_t,
\end{equation*}
which we consider as a natural generalisation of a non-autonomous Hamiltonian system on $\Sigma$. In local \emph{non-canonical} coordinates $z=(z^1,\ldots,z^{2n})\in \Sigma$, such a system would be represented by the differential equations
\begin{align*}
    \frac{dz^i}{dt}(t) = \Pi^{ij}(t,z(t))\frac{\partial \mathcal{H}}{\partial z^j}(t,z(t))
\end{align*}
where $\Pi^{ij}$ is the \emph{Poisson tensor} defined as the inverse of the symplectic tensor~\cite{goldstein}; if $(\omega_t)_{ij}:= \omega_t(\partial_{z_i},\partial_{z_j})$ then $(\omega_t)_{ij}(t,z)\Pi^{jk}(t,z)=\delta_i^k$.
\end{example}
In the next section, we consider the slightly more general notion of \emph{a non-autonomous locally-Hamiltonian system}.

\subsection{Non-autonomous locally-Hamiltonian systems}
\label{sec:locally-hamiltonian-systems}
The essential idea of a locally-Hamiltonian system is that the usual exact one-forms $dH_t$ derived from the Hamiltonian function can be replaced by closed one-forms $\nu_t$ on $\Sigma$. This generalisation is elementary from the perspective that closed one-forms are locally exact.

The triple $(\Sigma,\omega_o,\{\nu_t\}_{t \in S^1})$ is said to be a non-autonomous locally-Hamiltonian system if $\Sigma$ is a manifold with (possibly empty) boundary, $\omega_o$ is a symplectic two-form on $\Sigma$, and $\{\nu_t\}_{t \in S^1}$ is a smooth family of closed one-forms on $\Sigma$, namely $d\nu_t=0$ for all $t\in S^1$. The smooth adjective means that the assignment $S^1 \times \Sigma \to T^*\Sigma$, $(t,x)\mapsto \nu_t\big|_x$ is smooth. A Dirichlet boundary condition is imposed
\begin{equation*}
\chipboundary^*\nu_t = 0
\end{equation*}
for all $t \in S^1$, where $\chipboundary : \partial \Sigma \subset \Sigma$ is the inclusion. The dynamics of $(\Sigma,\omega_o,\{\nu_t\}_{t\in S^1})$ on $\Sigma$ are given by the flow of the non-autonomous vector field $X_t$ defined by
\begin{equation*}
    i_{X_t} \omega_o = \nu_t
\end{equation*}
which is necessarily tangent to $\partial \Sigma$ by the Dirichlet boundary condition on $\nu_t$ and makes the flow complete when $\Sigma$ is compact.

An important part of this paper is determining when a non-autonomous locally-Hamiltonian system is in fact a (globally) non-autonomous Hamiltonian system. The other direction is always true: if one begins with a (globally) non-autonomous Hamiltonian system $(\Sigma,\omega_o,\{H_t\}_{t \in S^1})$, then the family of closed Dirichlet one-forms $\{\nu_t=-dH_t\}_{t \in S^1}$ is smooth, and $(\Sigma,\omega_o,\{\nu_t=-dH_t\}_{t \in S^1})$ is a non-autonomous locally-Hamiltonian system. However, beginning from a non-autonomous locally Hamiltonian system $(\Sigma,\omega_o,\{\nu_t\}_{t\in S^1})$, if there existed an arbitrary family $\{\tilde{H}_t\}_{t \in S^1}$ of functions satisfying $-d\tilde{H}_t = \nu_t$ for each $t\in S^1$, it is not immediate that it may be replaced by a \textbf{smooth} family $\{H_t\}_{t \in S^1}$ satisfying $-dH_t = \nu_t$. 
The following lemma, whose proof is contained in \cref{sec:exact-is-exact}, resolves this. A similar result was shown in \cite{gotay-lachov-weinstein-1983} for $t\in \R^m$.
\begin{lemma}\label{lem:exactfamily}
Let $(\Sigma,\omega_o,\{\nu_t\}_{t \in S^1})$ be a non-autonomous locally-Hamiltonian system, and suppose that $\nu_t$ is exact for each $t \in S^1$. Then, there exists a smooth family $\{H_t\}_{t\in S^1}$ of functions such that
\begin{equation*}
\nu_t = -dH_t.    
\end{equation*}
In particular, $(\Sigma,\omega_o,\{H_t\}_{t \in S^1})$ is a non-autonomous Hamiltonian system.
\end{lemma}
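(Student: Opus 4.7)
The plan is to construct the family $\{H_t\}_{t\in S^1}$ explicitly as line integrals of $\nu_t$ against a fixed family of paths, then verify joint smoothness in $(t,x)$ by differentiating under the integral sign. Working component by component, I would assume $\Sigma$ is path-connected and fix a basepoint $x_0 \in \Sigma$. Define
\begin{equation*}
H_t(x) := -\int_\gamma \nu_t
\end{equation*}
for any piecewise-smooth path $\gamma$ from $x_0$ to $x$. The exactness hypothesis gives well-definedness: if $\tilde H_t$ is a (possibly $t$-rough) primitive with $-d\tilde H_t = \nu_t$, then by the fundamental theorem of calculus $\int_\gamma \nu_t = \tilde H_t(x_0) - \tilde H_t(x)$ depends only on endpoints, so all loop integrals of $\nu_t$ vanish and the integral is path-independent.

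The heart of the proof is upgrading this pointwise recipe to joint smoothness of $H : S^1 \times \Sigma \to \R$, $(t,x) \mapsto H_t(x)$. For any $x_1 \in \Sigma$, I would fix a smooth reference path $\sigma_{x_1}$ from $x_0$ to $x_1$ and a chart $(U_{x_1},\varphi)$ centered at $x_1$ with $\varphi(x_1)=0$. For $y \in U_{x_1}$, let $\gamma_y(s) := \varphi^{-1}(s\varphi(y))$ be the straight-line tail in the chart. By path-independence, on $U_{x_1}$,
\begin{equation*}
H(t,y) \;=\; -\int_{\sigma_{x_1}} \nu_t \;-\; \int_0^1 \nu_t\bigl(\gamma_y(s)\bigr)\bigl(\dot\gamma_y(s)\bigr)\, ds.
\end{equation*}
The first term is smooth in $t$ since $(t,x)\mapsto \nu_t|_x$ is smooth and $\sigma_{x_1}$ is a fixed compact path. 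The second term is smooth in $(t,y)$ because its integrand depends smoothly on $t$ through $\nu_t$ and on $y$ through $\gamma_y$. Differentiating under the integral sign, together with the fundamental theorem of calculus applied to the straight-line tail, yields $-d_\Sigma H_t = \nu_t$ on $U_{x_1}$. Path-independence ensures these local formulas agree on overlaps, producing a globally smooth $H \in C^\infty(S^1 \times \Sigma)$ with $-dH_t = \nu_t$.

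The main obstacle is exactly the joint smoothness, not well-definedness. Naively choosing a primitive for each $t$ independently would leave the constant of integration free to depend on $t$ in an arbitrary (possibly discontinuous) way; the integral-over-paths recipe implicitly pins this constant by the normalization $H_t(x_0) = 0$, and smooth dependence falls out of standard parameter-regularity of integrals. The Dirichlet boundary condition $\chipboundary^*\nu_t = 0$ is not needed for the construction itself — it is used elsewhere to guarantee completeness of the flow on compact $\Sigma$ — though it immediately implies that each $H_t$ is locally constant on $\partial\Sigma$.
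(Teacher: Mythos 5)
Your proof is correct and follows essentially the same strategy as the paper: pin the constant of integration by requiring $H_t(x_0) = 0$ via a path integral, and then establish joint smoothness in $(t,x)$ locally by splitting the path into a fixed reference segment to a chart center plus a straight-line tail inside the chart and differentiating under the integral sign. The paper packages the local step as a separate lemma (\cref{lem:exactsmoothfamily}) about star-shaped domains, but the content and the reliance on smoothness of $(t,x) \mapsto \nu_t|_x$ are identical.
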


In order to deal more directly with non-autonomous locally-Hamiltonian systems, we continue the analogy with the evolution vector field introduced for non-autonomous (globally) Hamiltonian systems. In the local situation, we can still port the dynamics into a cosymplectic manifold similarly to \cref{prop:non-autonomous-hamiltonian-systems}.

\begin{proposition}
  \label{prop:symplectic-systems}
    Let $(\Sigma,\omega_o,\{\nu_t\}_{t \in S^1})$ be a non-autonomous locally-Hamiltonian system. The evolution vector field $E_H$ on $S^1\times \Sigma$ associated to the family of locally-Hamiltonian fields $\{X_t\}_{t\in S^1}$ is given by 
\begin{align*}
    E_\nu = \partial_t + X_\nu,
\end{align*}
where the time component $\partial_t$ and the locally-Hamiltonian component $X_\nu$ are (uniquely) determined by
\begin{align*}
    \partial_t & \in\ker\omega, &  d\phi(\partial_t)&=1 \\
    X_\nu &\in\ker d\phi, & \intprod_{X_\nu} \omega &= \nu-\nu(\partial_t)d\phi,
  \end{align*}
  where $\nu \in \Omega^1(S^1\times \Sigma)$ is such that $\inclusion_t^*\nu = \nu_t$ for all $t\in S^1$, $\omega=\pi^*\omega_o$ and $\phi:S^1\times \Sigma\to S^1$, $\pi:S^1\times \Sigma\to\Sigma$ are the natural projections.
\end{proposition}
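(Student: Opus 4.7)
My approach is to mirror the derivation of \cref{prop:non-autonomous-hamiltonian-systems}, replacing the exact form $-dH$ by the closed form $\nu$. The cosymplectic data $(S^1\times\Sigma,\omega,d\phi)$ is unchanged, so the time component $\partial_t$, its defining conditions, and its uniqueness all carry over verbatim from \cref{sec:time-reeb}. Only the locally-Hamiltonian component $X_\nu$ requires a separate argument, and that argument is essentially an analog of \cref{lem:hamiltonian-part}.

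The plan is to first construct $X_\nu$ as the fibrewise lift of the locally-Hamiltonian flows. For each $t\in S^1$, let $X_t$ on $\Sigma$ be defined by $\intprod_{X_t}\omega_o=\nu_t$, and declare $X_\nu$ to be the unique vector field on $S^1\times\Sigma$ with $X_\nu\circ\inclusion_t = d\inclusion_t\circ X_t$ for every $t$. Then $X_\nu\in\ker d\phi$ by \cref{lem:kernelimage}, and smoothness of $X_\nu$ follows from smoothness of $\{\nu_t\}_{t\in S^1}$ together with the smoothness of the tensor inverse of $\omega_o$. To verify the identity $\intprod_{X_\nu}\omega=\nu-\nu(\partial_t)\,d\phi$, I would exploit the pointwise splitting $T(S^1\times\Sigma)=\Span\{\partial_t\}\oplus\ker d\phi$. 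On $\partial_t$ both sides vanish: the left because $\partial_t\in\ker\omega$, and the right because $\nu(\partial_t)-\nu(\partial_t)\,d\phi(\partial_t)=0$ using $d\phi(\partial_t)=1$. On $\ker d\phi$, pulling back by $\inclusion_t$ and using $\inclusion_t^*\omega=\omega_o$, $\inclusion_t^*\nu=\nu_t$, and $\inclusion_t^*d\phi=0$ from \cref{eq:inclusion-pi-is-id}, both sides reduce to $\nu_t$. Uniqueness of $X_\nu\in\ker d\phi$ satisfying the defining equation follows from the fact that $\omega$ restricted to $\ker d\phi\times\ker d\phi$ is fibrewise isomorphic to $\omega_o$, hence non-degenerate.

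The main obstacle, though minor, is the bookkeeping around the one-form $\nu$: the statement presumes an extension $\nu\in\Omega^1(S^1\times\Sigma)$ with $\inclusion_t^*\nu=\nu_t$, and one should check that (a) such a $\nu$ exists and (b) the formula for $X_\nu$ does not depend on the chosen extension. Existence is handled by the explicit prescription $\nu|_{(t,x)}(w):=\nu_t|_x(d\pi\,w)$, which is smooth by the smoothness hypothesis stated in \cref{sec:locally-hamiltonian-systems}. For (b), any two extensions differ by a one-form that annihilates $\ker d\phi$ at every point (by \cref{lem:kernelimage}), hence is of the form $f\,d\phi$ for some $f\in C^{\infty}(S^1\times\Sigma)$; substituting into the defining equation gives $f\,d\phi - f\,d\phi(\partial_t)\,d\phi = 0$, so the equation for $X_\nu$ is unaffected. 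Assembling the time and locally-Hamiltonian components yields $E_\nu=\partial_t+X_\nu$ as claimed.
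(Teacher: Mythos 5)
Your proof is correct and follows the same route the paper implicitly intends: the paper leaves this proposition unproven, relying on the reader to mirror \cref{prop:non-autonomous-hamiltonian-systems} and \cref{lem:hamiltonian-part} with $-dH$ replaced by $\nu$, which is precisely what you do. Your additional care about the existence of an extension $\nu$ and the independence of the formula from the choice of extension is handled in the paper by the remark immediately following \cref{def:cosymplectic-symplectic-system}, and your argument there (explicit representative $\pi^*\nu_t$, extensions differ by $f\,d\phi$, which cancels in $\nu-\nu(\partial_t)d\phi$) agrees with the paper.
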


By virtue of \cref{prop:symplectic-systems}, we obtain a generalisation of \cref{def:cosymplectic-hamiltonian-system}.

\begin{definition}\label{def:cosymplectic-symplectic-system}
A cosymplectic manifold $(M,\beta,\eta)$ is said to be represented by the non-autonomous locally-Hamiltonian system $(\Sigma,\omega_o,\{\nu_t\}_{t\in S^1})$ if there exists a diffeomorphism $\Psi: S^1\times \Sigma \to M$ such that $\Psi^*\eta=d\phi$ and 
\begin{align*}
    \Psi^*\beta=\omega + \nu \wedge d\phi
\end{align*}
where $\omega=\pi^*\omega_o$ and $\nu\in\Omega^1(S^1\times\Sigma)$ satisfies $\inclusion_t^*\nu=\nu_t$ for all $t\in S^1$.
\end{definition}

\begin{remark}
The restriction of $\nu$ to a copy of $\Sigma$ agrees with $\nu_t$ at the corresponding angle $t\in S^1$, which means that $\nu$ is determined by the smooth family of one-forms $\{\nu_t\}_{t\in S^1}$ up to a term of the form $f d\phi$ for some function $f$. An explicit representative is, for example,
\begin{align*}
    \hat{\nu}|_{(t,x)}=(\hat{\nu}_t)|_x\circ d\pi_{(t,x)}=(\pi^*\hat{\nu}_t)|_{(t,x)},
\end{align*}
which, incidentally, is Dirichlet on $S^1\times\Sigma$. Indeed, for each $(t,x) \in \partial (S^1\times\Sigma)=S^1\times\partial\Sigma$,
\begin{align*}
(\boundary^*\hat{\nu})|_{(t,x)} 
&= \hat{\nu}|_{(t,x)} \circ d\boundary_{(t,x)} 
= (\hat{\nu}_t)|_x\circ d\pi_{(t,x)}\circ d\boundary_{(t,x)} 
= (\hat{\nu}_t)|_x\circ d(\pi\circ\boundary)_{(t,x)}
= (\hat{\nu}_t)|_x\circ d(\chipboundary \circ \pi_\partial)_{(t,x)}\\
&= (\hat{\nu}_t)|_x \circ d\chipboundary_{x} \circ d{\pi_\partial}_{(t,x)}
= (\chipboundary^*\hat{\nu}_t)|_x \circ d{\pi_\partial}_{(t,x)} 
= 0,
\end{align*}
where $\boundary:\partial(S^1\times\Sigma)\to S^1\times \Sigma$ and $\chipboundary:\partial \Sigma\to\Sigma$ are the inclusions and $\pi_\partial:\partial(S^1\times\Sigma)\to \partial \Sigma$ is the projection, so that $\pi\circ \boundary=\chipboundary\circ\pi_\partial$.
\end{remark}

Representative one-forms $\nu$ in \cref{prop:symplectic-systems} are closed on each copy of $\Sigma$, i.e. $\inclusion_t^*d\nu=d\inclusion_t\nu=d\nu_t=0$. Unfortunately, there may not exist a choice that is closed on all of $S^1\times \Sigma$. This is a serious obstacle to reconciling the notion of locally-Hamiltonian to its global counter-part. The following lemma precisely describes the issue.
\begin{lemma}\label{lem:obstacle}
Let $\{\nu_t\}_{t \in S^1}$ be a smooth family of closed one-forms on a manifold $\Sigma$ with boundary. Then, the following are equivalent.
\begin{enumerate}
    \item There exists a closed one-form $\nu$ on $S^1\times \Sigma$ such that $\inclusion_t^*\nu=\nu_t$ for each $t\in S^1$;
    \item The cohomology classes $[\nu_t]$ are constant.
\end{enumerate}
\end{lemma}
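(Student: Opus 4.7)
The plan is to verify the two implications separately.

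For $(1) \Rightarrow (2)$, I would invoke homotopy invariance of de Rham cohomology. Since $S^1$ is path-connected, for any $s,t\in S^1$ the inclusions $\inclusion_s,\inclusion_t:\Sigma\to S^1\times\Sigma$ are smoothly homotopic --- a smooth path $\gamma:[0,1]\to S^1$ from $s$ to $t$ supplies the homotopy $(\lambda,x)\mapsto(\gamma(\lambda),x)$ --- so $\inclusion_s^*$ and $\inclusion_t^*$ descend to the same map $H^1_{\mathrm{dR}}(S^1\times\Sigma)\to H^1_{\mathrm{dR}}(\Sigma)$. Applied to the class of the closed form $\nu$, this yields $[\nu_s]=[\inclusion_s^*\nu]=[\inclusion_t^*\nu]=[\nu_t]$, so the cohomology classes $[\nu_t]$ are independent of $t$.

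For $(2) \Rightarrow (1)$, the strategy is to construct $\nu$ explicitly by correcting the natural fibrewise representative $\hat{\nu}|_{(t,x)}=(\nu_t)|_x\circ d\pi_{(t,x)}$ of the preceding remark. By construction $\inclusion_t^*\hat{\nu}=\nu_t$, but $\hat{\nu}$ is generally not closed. A local-coordinate calculation (or the observation that $\inclusion_t^*d\hat{\nu}=d\nu_t=0$ forces $d\hat{\nu}$ to vanish on fibres and hence to contain $d\phi$) produces $d\hat{\nu}=d\phi\wedge\alpha$, where $\alpha$ is the natural representative of the smooth family $\{\dot{\nu}_t=\partial_t\nu_t\}$. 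Commuting $\partial_t$ with $d_\Sigma$ in the identity $d_\Sigma\nu_t=0$ shows each $\dot{\nu}_t$ is closed, and the Dirichlet condition $\chipboundary^*\dot{\nu}_t=0$ is inherited from $\chipboundary^*\nu_t=0$. The assumption that $[\nu_t]$ is constant implies $[\dot{\nu}_t]=0$, so each $\dot{\nu}_t$ is exact.

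At this point I would apply \cref{lem:exactfamily} to $\{\dot{\nu}_t\}$, producing a smooth family $\{f_t\}_{t\in S^1}$ with $\dot{\nu}_t=df_t$. Setting $f(t,x)=f_t(x)$ and $\nu=\hat{\nu}+f\,d\phi$, the fibrewise restriction is unchanged since $\inclusion_t^*d\phi=0$, so $\inclusion_t^*\nu=\nu_t$. Splitting $df=d_\Sigma f+(\partial_t f)\,d\phi$ and noting that $d_\Sigma f$ restricts at slice $t$ to $df_t=\dot{\nu}_t$, one verifies $d\nu=d\phi\wedge\alpha-d\phi\wedge d_\Sigma f=d\phi\wedge(\alpha-d_\Sigma f)=0$.

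The main obstacle in this argument is the smoothness of the family of primitives $\{f_t\}$: pointwise existence follows immediately from the exactness of each $\dot{\nu}_t$, but assembling these primitives into a globally smooth function on $S^1\times\Sigma$ is precisely the non-trivial analytic content supplied by \cref{lem:exactfamily}. Without that smoothness guarantee, the correction term $f\,d\phi$ would not constitute a bona fide differential form, and the cohomological hypothesis alone would be insufficient to close up $\hat{\nu}$.
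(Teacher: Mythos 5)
Your direction $(1)\Rightarrow(2)$ is essentially the same as the paper's: you invoke homotopy invariance of de Rham cohomology abstractly, whereas the paper spells it out by computing periods $\int_\gamma\nu_s=\int_{\gamma_s}\nu=\int_{\gamma_t}\nu=\int_\gamma\nu_t$ using the fact that $\gamma_s,\gamma_t$ are homologous in $S^1\times\Sigma$; the content is identical.

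Your direction $(2)\Rightarrow(1)$ takes a genuinely different route. The paper fixes a reference angle $o\in S^1$, observes that $\{\nu_t-\nu_o\}_{t\in S^1}$ is a smooth family of \emph{exact} one-forms (by the constant-cohomology hypothesis), applies \cref{lem:exactfamily} to that family to get $\nu_t-\nu_o=-dH_t$, and sets $\nu=-dH+\pi^*\nu_o$, which is closed by inspection. You instead start from the natural fibrewise representative $\hat{\nu}$, compute $d\hat{\nu}=d\phi\wedge\alpha$ with $\alpha$ representing $\{\dot{\nu}_t\}$, extract exactness of $\{\dot{\nu}_t\}$ by differentiating periods, apply \cref{lem:exactfamily} to the family of \emph{time derivatives}, and then correct $\hat{\nu}$ by $f\,d\phi$. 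Both reduce to \cref{lem:exactfamily}, but the paper's argument is algebraically shorter (subtract, then add back), while yours is infinitesimal and makes visible exactly why $\hat{\nu}$ alone fails to be closed and how the correction term cancels the obstruction — a more constructive perspective. Your verification $d\nu=d\phi\wedge(\alpha-d_\Sigma f)=0$ is correct.

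One small remark: you invoke a Dirichlet condition $\chipboundary^*\nu_t=0$, but this is not among the hypotheses of \cref{lem:obstacle}, only of the later applications. Strictly speaking the stated hypotheses of \cref{lem:exactfamily} do package a Dirichlet condition via the locally-Hamiltonian structure, but its proof in \cref{sec:exact-is-exact} uses only smoothness and exactness, so neither your argument nor the paper's actually needs the boundary condition here; you should drop that sentence rather than assert a hypothesis the lemma does not grant you.
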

\begin{proof}
Assume that the one-form $\nu$ on $S^1\times \Sigma$ such that $\inclusion_t^*\nu = \inclusion_t^*\nu = \nu_t$ for each $t\in S^1$ is closed. If one is given a smooth closed curve $\gamma : [0,1] \to \Sigma$, then for any $s,t \in S^1$, the curves $\gamma_s = \inclusion_s \circ \gamma$ and $\gamma_t = \inclusion_t \circ \gamma$ are cohomologous on $S^1\times \Sigma$ and thus
\begin{equation*}
\int_\gamma \nu_s  = \int_{\gamma_s} \nu = \int_{\gamma_t} \nu = \int_\gamma \nu_t 
\end{equation*}
which implies that the cohomology classes $[\nu_s]$ and $[\nu_t]$ coincide. 

Now assume that the smooth family $\{\nu_t\}_{t\in S^1}$ of closed one-forms has constant cohomology class. Then, fixing $o\in S^1$, $\{\nu_t-\nu_o\}_{t\in S^1}$ is a smooth family of exact one-forms. Therefore, by \cref{lem:exactfamily}, there exists a smooth family $\{H_t\}_{t \in S^1}$ of functions such that $\nu_t-\nu_o = -dH_t$ for each $t \in S^1$. That is, $\nu_t=-dH_t+\nu_o$. Defining the smooth function $H$ on $S^1\times \Sigma$ via $H(t,x)=H_t(x)$, the one-form
\begin{align*}
    \nu = -dH + \pi^*\nu_o
\end{align*}
is manifestly closed and, by virtue of \cref{eq:inclusion-pi-is-id}, such that $\inclusion_t^*\nu = -d\inclusion_t^*H+\inclusion_t^*\pi^*\nu_o = -dH_t+\nu_o=\nu_t$ for all $t\in S^1$ as required.
\end{proof}

The obstacle highlighted by \cref{lem:obstacle} is subtle. On a contractible neighbourhood $U$ of any point in $\Sigma$, the closed family of $\{\nu_t\}_{t\in S^1}$ are exact, which means that a global function $h : S^1 \times U \to \R$ exists such that $\nu_t|_U=-dh_t$ for each $t\in S^1$. The temptation is to set $\tilde{\nu}=-dh$ on the extended neighbourhood $S^1\times U$ as a candidate to close the gap between locally-Hamiltonian and (globally) Hamiltonian. The strategy of covering $\Sigma$ with contractible neighbourhoods in an attempt to glue $\tilde{\nu}$ into a globally closed one-form on $S^1 \times \Sigma$ will however fail unless the conditions of \cref{lem:obstacle} are met.

It is easy to construct non-examples of \cref{lem:obstacle}, including cases with non-empty boundary. Indeed, if $(\Sigma,\omega_o)$ is a symplectic manifold with boundary admitting a closed Dirichlet one-form $\nu_o$ with non-trivial de Rham cohomology class (see \cref{eg:varying-cohomology-classes}), a family $\{\nu_t\}_{t \in S^1}$ of Dirichlet closed one-forms with varying cohomology classes on $\Sigma$ can be generated. For instance, taking $\nu_t = (2 + \sin t)\nu_o$, the cohomology class varies in $t$ and therefore there cannot be a closed $\nu$ on $S^1\times\Sigma$ such that $\inclusion_t^*\nu=\nu_t$. 

Remarkably though, when $\Sigma$ is a compact regular domain of $\mathbb{R}^m$ where $m=2n$ (even with non-zero first de Rham cohomology groups), such non-examples cannot be found. Briefly, closed Dirichlet one-forms on such $\Sigma$ are always exact, in which case the notion of non-autonomous Hamiltonian system automatically collapses to its global counter-part.
\begin{proposition}\label{prop:boundarycohomology}
 Let $\Sigma \subset \mathbb{R}^m$ be a compact regular domain. The boundary inclusion map $\boundary: \partial \Sigma \subset \Sigma$ is injective on the first (absolute) cohomology as a pull-back map
    \begin{align*}
        \boundary^*:H_{\text{dR}}^1(\Sigma)\to H_{\text{dR}}^1(\partial \Sigma).
    \end{align*}
In particular, if $\nu_o$ is a closed one-form on $\Sigma$ such that $\boundary^*\nu_o = 0$, then $\nu_o = -dH_o$ for some $H_o\in C^\infty(\Sigma)$.
\end{proposition}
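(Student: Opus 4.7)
The plan is to reduce the claim to a single line of the Mayer--Vietoris sequence applied to the ambient space $\R^m$, exploiting the fact that $H_{\text{dR}}^1(\R^m)=0$. Since $\Sigma\subset\R^m$ is a compact regular domain, its boundary $\partial\Sigma$ is a smoothly embedded closed hypersurface and hence admits a bi-collar neighborhood $N\subset\R^m$ diffeomorphic to $\partial\Sigma\times(-1,1)$, with $N\cap\Sigma\cong\partial\Sigma\times[0,1)$ and $N\cap\Sigma'\cong\partial\Sigma\times(-1,0]$, where $\Sigma'=\overline{\R^m\setminus\Sigma}$. I would then set $U=\Int(\Sigma)\cup N$ and $V=\Int(\Sigma')\cup N$. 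These form an open cover of $\R^m$ such that $U$ deformation retracts onto $\Sigma$, $V$ deformation retracts onto $\Sigma'$, and $U\cap V=N$ deformation retracts onto $\partial\Sigma$.

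With this cover in place, the Mayer--Vietoris sequence in de~Rham cohomology furnishes
\begin{equation*}
H_{\text{dR}}^1(\R^m) \;\longrightarrow\; H_{\text{dR}}^1(\Sigma)\oplus H_{\text{dR}}^1(\Sigma') \;\xrightarrow{(\boundary^*,\,-\boundary'^{*})}\; H_{\text{dR}}^1(\partial\Sigma),
\end{equation*}
where $\boundary':\partial\Sigma\hookrightarrow\Sigma'$ is the complementary inclusion. The Poincaré lemma gives $H_{\text{dR}}^1(\R^m)=0$, so exactness forces the second map to be injective. Feeding in a pair of the form $([\nu_o],0)$ shows at once that $\boundary^{*}:H_{\text{dR}}^1(\Sigma)\to H_{\text{dR}}^1(\partial\Sigma)$ itself is injective, which is the first assertion of the proposition.

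For the ``in particular'' clause, a closed one-form $\nu_o$ on $\Sigma$ satisfying $\boundary^{*}\nu_o=0$ has trivial class in $H_{\text{dR}}^1(\Sigma)$ by what was just proved, and thus $\nu_o=dF$ for some $F\in C^\infty(\Sigma)$ by the very definition of de~Rham cohomology; setting $H_o=-F$ closes the argument. The cohomological input is entirely standard once the open cover is set up. The main obstacle I anticipate is therefore the smooth bookkeeping in the first step: one must verify carefully, via the tubular neighborhood theorem for the closed embedded hypersurface $\partial\Sigma\subset\R^m$, that $U$, $V$, and $U\cap V$ smoothly deformation retract onto $\Sigma$, $\Sigma'$, and $\partial\Sigma$ respectively, so that the de~Rham Mayer--Vietoris sequence may legitimately be invoked.
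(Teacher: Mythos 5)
Your proof is correct, and it uses the same core idea as the paper — a Mayer--Vietoris sequence for an open cover of the ambient $\R^m$, exploiting the contractibility of $\R^m$ — but it differs in two pleasant technical respects. First, the paper works in singular homology, using the sequence
\begin{equation*}
H_2(\R^m) \to H_1(\partial\Sigma) \to H_1(\Sigma)\oplus H_1(N) \to H_1(\R^m)
\end{equation*}
to deduce that $\boundary_* : H_1(\partial\Sigma) \to H_1(\Sigma)$ is surjective, and then invokes de~Rham's theorem for manifolds with boundary to dualise this into injectivity of $\boundary^*$ on $H^1_{\text{dR}}$. You instead run Mayer--Vietoris directly in de~Rham cohomology, which dispenses with the duality step entirely and reads off injectivity of the restriction map at once from $H^1_{\text{dR}}(\R^m)=0$. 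Second, the paper chooses the asymmetric cover $\Int\Sigma$ and $N = (\R^m\setminus\Sigma)\cup C$, where $C$ is a one-sided collar of $\partial\Sigma$ inside $\Sigma$, while you take the more symmetric cover $U=\Int\Sigma\cup N$, $V=\Int\Sigma'\cup N$ built from a bi-collar of $\partial\Sigma$ in $\R^m$; in both cases the intersection deformation retracts onto $\partial\Sigma$ and each piece retracts onto the closed region it thickens. Your version is arguably cleaner since it stays within de~Rham theory throughout, while the paper's follows Cantarella's homological sketch. Both are valid; the only point worth flagging in your version is to confirm the bi-collar exists, which holds because $\partial\Sigma$ is a closed embedded hypersurface in $\R^m$ and therefore has trivial normal bundle — you do note this via the tubular neighborhood theorem, so the argument is complete.
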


\begin{proof}
We follow Cantarella's sketch \cite{Cantarella_DeTurck_Gluck_2002} for $m=3$ adapted to arbitrary $m$ with the details filled-in for completeness. Specifically, we first work with the first singular homology of $\Sigma$ via a Mayer-Vietoris sequence as follows. Recall that $\Sigma$ has a normal collar, namely a smooth embedding $\sigma : \partial \Sigma \times [0,1) \to \Sigma$. Let $C$ be the image of $\sigma$ and consider the open set
\begin{equation*}
N =  \R^m\backslash (\Sigma \setminus C) = (\R^m \setminus \Sigma)\cup C
\end{equation*}
of $\R^m$. Then, the interiors of $\Sigma$ and $N$ in $\R^m$ cover $\R^m$, so we may apply the Mayer-Vietoris sequence to $\Sigma$ and $N$. Part of the sequence reads
\begin{equation*}
... \to H_2(\R^m) \to H_1(\Sigma \cap N) \to H_1(\Sigma) \oplus H_1(N) \to H_1(\R^m) \to ...
\end{equation*}
which, because $C \cong \partial \Sigma \times [0,1)$ deformation retracts to $\partial \Sigma \subset C$, reduces to
\begin{equation*}
0 \to H_1(\partial \Sigma) \to H_1(\Sigma) \oplus H_1(N) \to 0.
\end{equation*}
The map $H_1(\partial \Sigma) \to H_1(\Sigma) \oplus H_1(N)$ is $(\boundary_*,k_*)$ where $\boundary : \partial \Sigma \subset \Sigma$ and $k : \partial \Sigma \subset N$ are the inclusions. By exactness of the sequence, $(\boundary_*,k_*)$ is a surjection. In particular, the map $\boundary_* : H_1(\partial \Sigma) \to H_1(\Sigma)$ is a surjection. Dually, by de Rham's Theorem (for manifolds with boundary), the pull-back map $\jmath^* : H_{\text{dR}}^1(\Sigma)\to H_{\text{dR}}^1(\partial \Sigma)$ is injective.

In particular, if $\nu_o$ is a closed one-form on $\Sigma$ such that $\boundary^*\nu_o = 0$, we have that $[\nu_o] \in  H_{\text{dR}}^1(\Sigma)$ is in the kernel of $\boundary^* : H_{\text{dR}}^1(\Sigma)\to H_{\text{dR}}^1(\partial \Sigma)$, and therefore, by injectivity of $\boundary^*$, $[\nu_o] = 0$, and therefore, $\nu_o = -dH$ for some $H \in C^{\infty}(\Sigma)$.
\end{proof}
We can now state and prove that non-autonomous locally-Hamiltonian systems reduce to non-autonomous (globally) Hamiltonian systems on compact regular domains in Euclidean space.
\begin{corollary}\label{prop:symplectic-is-hamiltonian}
Let $\Sigma$ be a compact regular domain of $\mathbb{R}^{2n}$. Then, if $(\Sigma,\omega_o,\{\nu_t\}_{t \in S^1})$ is a non-autonomous locally-Hamiltonian system, there exists a smooth family $\{H_t\}_{t\in S^1}$ of functions such that
\begin{equation*}
\nu_t = -dH_t.    
\end{equation*}
In particular, $(\Sigma,\omega_o,\{H_t\}_{t \in S^1})$ is a non-autonomous Hamiltonian system.
\end{corollary}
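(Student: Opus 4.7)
The plan is to combine the two preceding results, \cref{prop:boundarycohomology} and \cref{lem:exactfamily}, in sequence. By hypothesis, $(\Sigma,\omega_o,\{\nu_t\}_{t\in S^1})$ is a non-autonomous locally-Hamiltonian system, so for every $t\in S^1$ the one-form $\nu_t$ is closed and satisfies the Dirichlet boundary condition $\chipboundary^*\nu_t = 0$. Since $\Sigma$ is a compact regular domain in $\mathbb{R}^{2n}$, \cref{prop:boundarycohomology} applies with $m=2n$, and we deduce that $[\nu_t]=0$ in $H^1_{\mathrm{dR}}(\Sigma)$ for each $t\in S^1$, i.e.\ $\nu_t$ is exact.

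Once pointwise-in-$t$ exactness is in hand, the remaining issue is smoothness of a primitive in the parameter $t$. This is exactly the content of \cref{lem:exactfamily}: from a smooth family of exact closed one-forms one obtains a smooth family $\{H_t\}_{t\in S^1}$ of functions on $\Sigma$ with $\nu_t=-dH_t$. The proof then concludes by noting that $(\Sigma,\omega_o,\{H_t\}_{t\in S^1})$ satisfies the definition of a non-autonomous Hamiltonian system, since the Dirichlet condition $\chipboundary^*dH_t = -\chipboundary^*\nu_t = 0$ is inherited from the locally-Hamiltonian data.

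There is essentially no obstacle beyond correctly invoking the two results: the hard analytic and topological work has already been absorbed into \cref{prop:boundarycohomology} (which uses the Mayer--Vietoris argument on the Euclidean complement to kill the relevant cohomology class) and \cref{lem:exactfamily} (which handles the smooth $t$-dependence of the primitives). The only point worth checking carefully is that the hypotheses of \cref{lem:exactfamily} are genuinely met, namely that the family $\{\nu_t\}$ is smooth in the sense of \cref{sec:locally-hamiltonian-systems}; but this smoothness is part of the definition of a non-autonomous locally-Hamiltonian system, so nothing new is required. The corollary thus follows immediately.
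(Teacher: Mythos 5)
Your proof is correct and follows the paper's own argument exactly: apply \cref{prop:boundarycohomology} to get pointwise-in-$t$ exactness from the Dirichlet condition, then invoke \cref{lem:exactfamily} to upgrade to a smooth family of primitives. The extra remarks about verifying the smoothness hypothesis and inheriting the Dirichlet condition are sound but add nothing beyond what the definitions already guarantee.
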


\begin{proof}
For each $t \in S^1$, observe that, by definition of non-autonomous locally-Hamiltonian system, the closed one-forms are Dirichlet, 
\begin{equation*}
\chipboundary^*\nu_t = 0.    
\end{equation*}
In particular, by \cref{prop:boundarycohomology}, $\nu_t$ is exact. The result now follows from \cref{lem:exactfamily}.
\end{proof}

\section{Reduction of divergence-free vector fields to non-autonomous 1\texorpdfstring{$\frac{1}{2}$}{.5}D Hamiltonian systems}
\label{sec:divergence-free-fields}



%
In this section, we will use the previously introduced technology to establish the main result around a sufficient condition on magnetic fields with Poincar\'e sections to admit a non-autonomous Hamiltonian representation. 
We start with a vector field $B$ tangent to the boundary of a compact orientable three-manifold $M$ equipped with a volume form $\mu$. We ask that $B$ is divergence-free in the sense that $\beta=\intprod_B\mu$ is a closed two-form. The boundary condition on $\beta$ inherited from $B$ is the Dirichlet boundary condition.
    \begin{lemma}
    \label{lem:boundaryconditions}
    Denoting the natural inclusion of the boundary $\boundary:\partial M\subset M$, the following statements are equivalent. 
        \begin{enumerate}
            \item The magnetic two-form $\beta$ satisfies the \emph{Dirichlet} boundary condition, namely $\boundary^*\beta=0$;
            \item The magnetic field $B$ is tangential to the boundary, namely $B_{\boundary(x)}\in d\boundary_x(T_x\partial M)$ for all $x\in\partial M$.
        \end{enumerate}
    \end{lemma}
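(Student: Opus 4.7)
The plan is to prove the equivalence pointwise at each $x \in \partial M$ using the algebraic properties of the interior product $\intprod_B \mu$ and non-degeneracy of the volume form. Since both conditions in the statement are pointwise on $\partial M$, it suffices to fix $x \in \partial M$ and work in $T_x M$ with the transverse splitting $T_x M = d\boundary_x(T_x\partial M) \oplus \R\, n$, where $n$ is any vector transverse to the boundary at $x$.

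First I would fix a basis $\{e_1,e_2\}$ of $T_x\partial M$ and write $B_x = B^{\parallel}_x + \alpha\, n$ with $B^{\parallel}_x \in d\boundary_x(T_x\partial M)$ and $\alpha \in \R$, the point being that $B_x$ is tangent to $\partial M$ at $x$ exactly when $\alpha = 0$. Expanding the pullback gives
\begin{equation*}
(\boundary^*\beta)_x(e_1,e_2) = \mu_x(B_x,\, d\boundary_x e_1,\, d\boundary_x e_2) = \alpha\, \mu_x(n,\, d\boundary_x e_1,\, d\boundary_x e_2),
\end{equation*}
where the contribution of $B^{\parallel}_x$ vanishes because the three vectors $B^{\parallel}_x,\, d\boundary_x e_1,\, d\boundary_x e_2$ all lie in the $2$-dimensional subspace $d\boundary_x(T_x\partial M)$ and are therefore linearly dependent. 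Because $n,\, d\boundary_x e_1,\, d\boundary_x e_2$ form a basis of $T_xM$, the scalar $\mu_x(n,\, d\boundary_x e_1,\, d\boundary_x e_2)$ is non-zero by non-degeneracy of $\mu$.

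The equivalence then falls out by inspection of the displayed identity. If $B$ is tangential then $\alpha=0$, so $(\boundary^*\beta)_x(e_1,e_2)=0$; since $(e_1,e_2)$ was an arbitrary basis of $T_x\partial M$ and $\boundary^*\beta$ is a two-form on a two-manifold, this forces $\boundary^*\beta$ to vanish at $x$. Conversely, if $\boundary^*\beta = 0$ at $x$ then $\alpha\cdot\mu_x(n, d\boundary_x e_1, d\boundary_x e_2) = 0$, and non-degeneracy of $\mu$ forces $\alpha = 0$, i.e.\ $B_x \in d\boundary_x(T_x\partial M)$.

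The argument is essentially routine linear algebra and I do not foresee a real obstacle; the only care required is to make sure the local transverse vector $n$ is merely an auxiliary tool (no metric or orientation choice is needed) and that the computation is stated so that it applies uniformly at every boundary point. No collar neighbourhood or partition of unity is needed, which keeps the proof entirely pointwise.
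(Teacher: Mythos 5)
Your proposal is correct and takes essentially the same approach as the paper: both evaluate $(\boundary^*\beta)_x$ on a basis of $T_x\partial M$, reduce to $\mu_{\boundary(x)}(B_{\boundary(x)}, w_1, w_2)$, and invoke non-degeneracy of the volume form to conclude this vanishes iff $B_{\boundary(x)}$ lies in the span of the tangential vectors. Your explicit decomposition $B_x = B^{\parallel}_x + \alpha n$ just spells out more carefully the linear-dependence argument the paper states tersely.
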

    Using any metric, the latter statement amounts to the fact that the magnetic field on the boundary is orthogonal to the normal vector, $\bm{n}\cdot B|_{\partial M}=0$.
    \begin{proof}
    Let $x\in\partial M$ and $w_1=d\boundary_x(v_1),w_2=d\boundary_x(v_2)\in d\boundary_x(T_x\partial M)$ be two linearly independent vectors spanning the subspace of tangential vectors to the boundary at point $\boundary(x)$. Then, $(\boundary^*\beta)_x(v_1,v_2)=\beta_{\boundary(x)}(w_1,w_2)=\mu_{\boundary(x)}(B_{\boundary(x)},w_1,w_2)$. The Dirichlet condition $\boundary^*\beta=0$ implies that $B_{\boundary(x)}$ is linearly dependent with $w_1,w_2$, namely $B_{\boundary(x)}$ is tangential to the boundary, and conversely.
    \end{proof}

\begin{remark}
\cref{prop:boundarycohomology} implies that all tangential magnetic fields $B$ on a compact regular domain $M\subset \R^3$ give rise to an exact two-form $\intprod_B \mu=\beta=d\alpha$. Indeed, let $\nu$ be a closed one-form with Dirichlet boundary condition. Then, $\boundary^*\nu = 0$ so that on cohomology, $\boundary^*[\nu] = [\boundary^*\nu] = 0$. By injectivity of $\boundary^*$, we have that $[\nu] = 0$; that is, $\nu = dH$ for some $H$.  Thus
\begin{equation*}
\int_{M} \nu \wedge \beta = \int_{M} dH \wedge \beta = \int_{M} d(H\beta) = \int_{\partial M}  H\beta = 0.
\end{equation*}
By Poincar\'e-Lefschetz duality, this implies that $\beta$ must be exact in the absolute sense. That is, $\beta = d\alpha$.    
\end{remark}

We restrict our attention to magnetic fields that admit a compact global Poincaré section; namely a compact connected embedded submanifold $\Sigma$ of $M$ with $\partial \Sigma \subset \partial M$, such that every field-line of $B$ transversely intersects $\Sigma$. This limiting but physically motivated assumption is equivalent to the existence of an angle function $\phi:M\to S^1$ such that the $d\phi(B) > 0$ has no zeros (so that, in particular, $B$ has no zeros), as a result of the following lemma whose proof is deferred to \cref{sec:proof-equivalence-anglefunction}.
\begin{lemma}
\label{lem:equivalence-anglefunction}
If $M$ is a compact orientable manifold with boundary and $X$ is a vector field tangent to $\partial M$ with a global Poincar\'e section $\Sigma$, then there exists an angle function $\phi : M \to S^1$ such that $d\phi(X) > 0$ and $\Sigma$ is a level-set of $\phi$.
\end{lemma}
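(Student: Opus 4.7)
The plan is to realise $M$ as a smooth mapping torus of the first-return map of $X$ over $\Sigma$ and take $\phi$ to be the resulting circle coordinate. The tangency of $X$ to $\partial M$ combined with $\partial\Sigma\subset\partial M$ keeps the boundary invariant under the flow of $X$ and preserves transversality there, so every construction below is compatible with the boundary.

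First, transversality of $X$ to $\Sigma$ and the implicit function theorem applied to the hitting equation $\Phi^X_t(q)\in\Sigma$ yield a smooth first-return map $R:\Sigma\to\Sigma$ and smooth first-return time $\tau:\Sigma\to(0,\infty)$, with compactness giving uniform bounds $0<\tau_{\min}\le\tau\le\tau_{\max}$. The global section property, combined with the $\tau_{\min}>0$ spacing between successive hits of $\Sigma$, implies that the map $(q,s)\mapsto\Phi^X_s(q)$ descends to a bijection between the twisted quotient $(\Sigma\times\mathbb{R})/\mathbb{Z}$ (under the action $n\cdot(q,s)=(R^n(q),\,s-\sum_{i=0}^{n-1}\tau(R^i(q)))$) and $M$.

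To promote this bijection to a smooth diffeomorphism supporting a bona fide circle coordinate, I would employ the standard reparametrisation manoeuvre from suspension theory: construct a smooth positive function $f$ on $M$ with $\int_0^{\tau(q)} dt/f(\Phi^X_t(q))=1$ for every $q\in\Sigma$, for instance by starting from $f|_\Sigma=\tau$ extended by a partition of unity on a flow-box tubular neighbourhood of $\Sigma$ and then adjusting. Then $Y=fX$ shares the orbits of $X$ but has constant return time $1$, so $[q,u]\mapsto\Phi^Y_u(q)$ defines a smooth diffeomorphism $\Psi$ from the clean mapping torus $\widetilde M=\Sigma\times[0,1]/(q,1)\sim(R(q),0)$ onto $M$. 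Setting $\phi:=2\pi\,\mathrm{pr}\circ\Psi^{-1}$, with $\mathrm{pr}:\widetilde M\to S^1$ the natural projection, yields $\phi^{-1}(0)=\Sigma$ and $d\phi(X)=2\pi f>0$ throughout $M$.

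The principal obstacle is the renormalisation step, namely verifying that a smooth positive $f$ satisfying the exact orbit-integral property exists globally. This is standard but subtle, requiring the smoothness of $R$ and $\tau$, their uniform bounds, and a careful partition-of-unity argument in a flow-box tubular neighbourhood of $\Sigma$, with tangency of $X$ to $\partial M$ keeping the construction consistent at the boundary.
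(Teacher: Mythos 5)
Your overall strategy is the same as the paper's: realise $M$ as a special flow over $\Sigma$ via the first-return data, then convert to a clean mapping torus to read off the circle coordinate. The principal difference is in where the effort goes, and the proposal glosses over the point where the paper spends most of its work.

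The paper's first and hardest step is establishing that the first-return time $\tau$ and first-return map are smooth when $M$ and $\Sigma$ have boundary. You dispatch this with ``transversality and the implicit function theorem applied to the hitting equation,'' but on a manifold with boundary this does not go through directly: the flow box around $\Sigma$ is not produced by the IFT alone, because at points of $\partial \Sigma \subset \partial M$ the standard Flowout Theorem does not apply. The paper handles this by extending $\Sigma$ and $X$ to the boundaryless double $D(M)$, proving a Flowout Theorem for manifolds with boundary (\cref{lem:flowout-thm}), deriving a uniform discreteness bound for return times, and only then establishing smoothness of $\tau$ and the diffeomorphism property of the return map. You wave at this with ``tangency of $X$ to $\partial M$ keeps everything compatible with the boundary,'' but the compatibility is exactly the content that needs a proof.

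Your reparametrisation step also takes a different route from the paper. You propose to find a positive $f$ on $M$ with $\int_0^{\tau(q)} dt/f(\Phi_t^X(q)) = 1$ and pass to $Y = fX$ with unit return time; this would work, but your suggested construction (``start from $f|_\Sigma = \tau$ and extend by a partition of unity, then adjust'') does not obviously produce a function satisfying the orbit-integral constraint exactly, and the ``then adjusting'' is where the real content hides. The paper avoids this by instead exhibiting an explicit diffeomorphism $\Sigma_f \to \Sigma_f^\tau$, $[t,x]\mapsto[g(t,\tau(x)),x]$, built from a Katok-style interpolating function $g(t,s)$ with prescribed boundary behaviour in $t$ and $\partial g/\partial t > 0$; the positive reparametrising factor then drops out automatically as the push-forward of $\tilde\partial_t$, giving $d\phi(X) > 0$ without ever solving for $f$. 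Either route closes the argument; the Katok interpolation is cleaner because it localises all the delicacy into one explicit function of two real variables. So: same proof skeleton, but the two places you mark as ``standard'' are precisely where the paper has to do real work, and the boundary smoothness issue in particular is a genuine gap in your sketch as written.
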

The angle function enables us to view $M$ as the fibre-bundle $(M,S^1,\phi,\Sigma)$ over the circle. This hugely limits its topology. For example, removing a ball from a solid torus will prevent any tangential vector field from admitting a Poincaré section (Hairy Ball Theorem).

For reference, the positive function $d\phi(B)>0$ is called the \emph{toroidal component} of $B$. In this setting, $\tilde{\mu}:=\beta\wedge d\phi$ is a volume form and the tuple $(M,\beta,d\phi)$ forms a cosymplectic manifold. It is straightforward to verify that the Reeb field $R_B$ such that $R_B\in \ker \beta$ and $d\phi(R_B)=1$ is the magnetic field normalised to its toroidal component,
    \begin{align*}
        R_B = \frac{B}{d\phi(B)}.
    \end{align*}
    It is divergence-free with respect to the volume-form $\tilde{\mu}=\beta\wedge d\phi$.
The field-lines of the Reeb field and those of the magnetic field coincide. The fact that $R_B$ is normalised to one in the circle direction, $d\phi(R)=1$, makes the correspondence between the circle value of an integral curve (field-line) and the time-parameter simple. It is in this sense that the circle direction will constitute the ``$\frac{1}{2}$'' dimension of the equivalent Hamiltonian system.

We restrict to the case where $M = S^1 \times \Sigma$, where $\Sigma$ is an orientable connected compact two-manifold possibly with boundary, and $\phi : M \to S^1$ is the natural projection. The standard case of a solid torus embedded in real three-space, $\hat{M}\subseteq \R^3$ and $\hat{M}\cong S^1\times D^2$, is an important example covered by our analysis. If one has a divergence-free vector field $\hat{B}$ on a solid torus $\hat{M} \subset \mathbb{R}^3$, if $F : \hat{M} \to S^1 \times D^2$ is a diffeomorphism such that $F^*\phi=\hat{\phi}$ is an angle function with $d\hat{\phi}(\hat{B}) > 0$, then the algebraic results will be based on the two-form $\beta$ on $S^1\times D^2$ such that $F^*\beta=\hat{\beta}$, and the conclusions carry over to the original two-form $\beta$ diffeomorphically. Another application of our results is to the hollow torus $\tilde{M}\cong S^1\times A_{1,2}$ embedded in real three-space, where $A_{1,2}$ is an annulus (or cylinder), see \cref{eq:annulus}. 

In fact, as soon as the Poincar\'e section $\Sigma$ is diffeomorphic to a disk or an annulus, then $\hat{M}$ is necessarily a trivial bundle over the circle, as concluded from \cref{thm:MappingClassGroupIsomorphisms} presented in \cref{sec:MappingToriMCG}.
\begin{corollary}
\label{cor:disk-annulus-trivial}
Let $(\hat{M},S^1,\hat{\phi},\Sigma)$ be a fibre-bundle over the circle with typical fibre $\Sigma$. If $\Sigma$ is diffeomorphic to the disk or annulus, then $(\hat{M},S^1,\hat{\phi},\Sigma)$ is necessarily trivialisable. That is, there exists a diffeomorphism $F : \hat{M} \to S^1 \times \Sigma$ such that $F^*\phi=\hat{\phi}$.
\end{corollary}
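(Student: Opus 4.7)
The strategy is to directly invoke \cref{thm:MappingClassGroupIsomorphisms} from \cref{sec:MappingToriMCG}. As its name suggests, this theorem classifies smooth fibre bundles over $S^1$ with typical fibre $\Sigma$ up to bundle isomorphism in terms of conjugacy classes of the monodromy inside the mapping class group $\pi_0(\Diff(\Sigma))$, with the trivial bundle $S^1 \times \Sigma$ corresponding to the identity class. Consequently, trivialisability of the bundle $(\hat{M},S^1,\hat{\phi},\Sigma)$ reduces to showing that its monodromy is isotopic to the identity in the appropriate diffeomorphism group of $\Sigma$.

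Thus the task becomes verifying that the relevant mapping class group is trivial when $\Sigma \in \{D^2, A_{1,2}\}$. For the disk, this is Smale's classical theorem, which asserts that $\Diff^+(D^2)$ is contractible and, in particular, path-connected, so every orientation-preserving self-diffeomorphism is isotopic to the identity. For the annulus $A_{1,2}$, one uses the fact that, once the labelling of the inner and outer boundary circles is fixed so as to exclude boundary-swapping maps, $\Diff^+(A_{1,2})$ deformation retracts onto its rotation subgroup $SO(2)$, which is connected; hence any such self-diffeomorphism is again isotopic to the identity.

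The main obstacle I anticipate is the annulus case: if boundary swapping were permitted as a monodromy, one would obtain a distinct (non-trivial) bundle whose total space has a single torus boundary rather than two, so the classification result must account for this. \cref{thm:MappingClassGroupIsomorphisms} is expected to either carry the labelling of the boundary components as part of the bundle data or to derive this labelling from the topology of $\partial \hat{M}$ (viewed as a union of $\Sigma$-bundles over $S^1$, one per component of $\partial \Sigma$). Once the swap is ruled out, the monodromy is forced into the identity class, and the classification then supplies the desired bundle isomorphism $F : \hat{M} \to S^1 \times \Sigma$ with $F^*\phi = \hat{\phi}$.
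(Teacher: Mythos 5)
Your proof takes essentially the same route as the paper's: classify $\Sigma$-bundles over $S^1$ by the conjugacy class of their monodromy in $\M(\Sigma)$ and invoke triviality of the relevant mapping class groups. One small label slip: the classification-by-monodromy result is the paper's \cref{thm:bundlesAreMCG}, whereas \cref{thm:MappingClassGroupIsomorphisms} is only the computation $\M(D^2)\cong 1$, $\M(A_{1,2})\cong 1$, $\M(P)\cong\Z_2$; the paper's proof cites both, while you attribute the classification to the latter. The extra detail you supply (Smale for the disk, the Gramain-type retraction onto rotations for the annulus) is correct content that the paper simply delegates to its references.

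Your caveat about boundary-swapping diffeomorphisms is in fact a real subtlety rather than a technicality you can wave away. An orientation-preserving diffeomorphism of $A_{1,2}\cong S^1\times[1,2]$ that exchanges the two boundary circles, for example $(\theta,t)\mapsto(-\theta,3-t)$, lies in $\Diff^+(A_{1,2})$ but is not isotopic to the identity (the induced permutation of boundary components is an isotopy invariant), and its mapping torus is a non-trivial $A_{1,2}$-bundle over $S^1$, namely the orientable twisted $I$-bundle over the Klein bottle. Thus the assertion $\M(A_{1,2})\cong 1$ in \cref{thm:MappingClassGroupIsomorphisms} tacitly restricts to diffeomorphisms that preserve each boundary circle individually, a convention the paper does not spell out. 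Your proof, like the paper's, ultimately defers to the cited theorem; flagging the needed convention is a legitimate and useful observation, but to make the proof watertight you would need to either fix that convention explicitly or rule out the swap from the hypotheses (e.g., via the embedding $\hat M\subset\R^3$ used elsewhere, which excludes the twisted $I$-bundle over the Klein bottle since the latter contains an embedded Klein bottle).
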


\begin{proof}
    By \cref{thm:MappingClassGroupIsomorphisms}, when $\Sigma$ is a disk or an annulus, the mapping class group is trivial. Moreover, from \cref{thm:bundlesAreMCG}, the conjugacy classes of the mapping class group are in one-to-one correspondence with isomorphism classes of fibre bundles with fibre $\Sigma$ (\cref{thm:bundlesAreMCG}). It follows that all fibre bundles with fibre $\Sigma$ an annulus or disk are bundle isomorphic to the trivial bundle.
\end{proof}

Now, we ask whether the cosymplectic manifold $(S^1\times\Sigma,\beta,d\phi)$ can be represented by some non-autonomous Hamiltonian as in \cref{def:cosymplectic-hamiltonian-system,rem:beta_reeb}. Using a general Moser's trick (see \cref{thm:moser}, proven later in this section), and \cref{prop:symplectic-is-hamiltonian}, we obtain the following result in the case where $\Sigma$ is planar; that is, when $\Sigma$ is a compact regular domain in $\mathbb{R}^2$.

\begin{theorem}\label{thm:moserForPlanarSigma} 
Consider the cosymplectic manifold $(S^1\times\Sigma,\beta,d\phi)$ where $\Sigma$ is planar. Then, $(S^1\times\Sigma,\beta,d\phi)$ admits a non-autonomous Hamiltonian representation. In fact, there exists a smooth family of functions $\{H_t\}_{t \in S^1}$ such that $(\Sigma,\inclusion_o^*\beta,\{H_t\}_{t \in S^1})$ is a non-autonomous Hamiltonian system and there exists a bundle automorphism $\Psi$ such that
\begin{equation*}
  \beta = \Omega - d\mathcal{H}\wedge d\phi
\end{equation*}
where $(\Psi^*\mathcal{H})(t,x) = H_t(x)$ and $\Psi^*\Omega=\pi^*\inclusion_o^*\beta$ for an arbitrary (fixed) reference angle on the circle $o\in S^1$.
\end{theorem}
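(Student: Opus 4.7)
The plan is to chain the general Moser's trick (\cref{thm:moser}, proven later in this section) with \cref{prop:symplectic-is-hamiltonian}, exploiting the planarity of $\Sigma$ to rule out the cohomological obstruction identified in \cref{lem:obstacle}.

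First, I would invoke \cref{thm:moser} to produce a bundle automorphism $\Psi : S^1\times\Sigma \to S^1\times\Sigma$ that realises $(S^1\times\Sigma,\beta,d\phi)$ as being represented, in the sense of \cref{def:cosymplectic-symplectic-system}, by a non-autonomous locally-Hamiltonian system $(\Sigma,\omega_o,\{\nu_t\}_{t\in S^1})$. Here $\omega_o := \inclusion_o^*\beta$ is symplectic on $\Sigma$ (it is closed since $\beta$ is, and non-degenerate because $\beta\wedge d\phi$ is a volume form), and the output is
\begin{equation*}
\Psi^*\beta = \pi^*\omega_o + \nu\wedge d\phi,
\end{equation*}
where $\nu\in\Omega^1(S^1\times\Sigma)$ satisfies $\inclusion_t^*\nu = \nu_t$ with each $\nu_t$ a closed Dirichlet one-form on $\Sigma$ (closedness and Dirichletness follow from $d\beta = 0$ and $\boundary^*\beta = 0$).

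Next, since $\Sigma$ is a compact regular domain in $\R^2$, \cref{prop:symplectic-is-hamiltonian} applies and yields a smooth family $\{H_t\}_{t\in S^1}$ on $\Sigma$ with $\nu_t = -dH_t$. Defining $H\in C^\infty(S^1\times\Sigma)$ by $H(t,x) = H_t(x)$, the one-form $-dH$ is an admissible representative of $\{\nu_t\}$ in the sense of the remark following \cref{prop:symplectic-systems}; consequently $\nu$ and $-dH$ differ by a term of the form $f\,d\phi$, which is killed by the wedge with $d\phi$. Hence
\begin{equation*}
\Psi^*\beta = \pi^*\omega_o - dH\wedge d\phi.
\end{equation*}
Setting $\Upsilon = \Psi^{-1}$, $\Omega = \Upsilon^*\pi^*\omega_o$ and $\mathcal{H} = \Upsilon^*H$, applying $\Upsilon^*$ and using $\Upsilon^*d\phi = d\phi$ (as $\Psi$, and hence $\Upsilon$, is a bundle automorphism) gives $\beta = \Omega - d\mathcal{H}\wedge d\phi$ with $\Psi^*\Omega = \pi^*\inclusion_o^*\beta$ and $(\Psi^*\mathcal{H})(t,x) = H_t(x)$, matching the statement.

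The substantive work sits entirely in the construction of $\Psi$ via Moser's trick in \cref{thm:moser}; the idea there will be to interpolate fiberwise between $\beta$ and its reference cross-section $\pi^*\inclusion_o^*\beta$ along the circle and integrate a time-dependent Moser vector field lying in $\ker d\phi$, thereby securing a bundle automorphism. Once that is in hand, the planarity hypothesis on $\Sigma$ is what makes \cref{prop:boundarycohomology} available to trivialise the cohomology of closed Dirichlet one-forms, \cref{lem:exactfamily} arranges the smooth-in-$t$ selection of Hamiltonians, and the remaining manipulations are a routine rearrangement of wedge products.
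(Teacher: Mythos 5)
Your proposal follows essentially the same route as the paper's proof: invoke \cref{thm:moser} to obtain a bundle automorphism $\Psi$ realising the locally-Hamiltonian decomposition $\Psi^*\beta = \pi^*\omega_o + \nu\wedge d\phi$, use planarity via \cref{prop:boundarycohomology}, \cref{lem:exactfamily}, and \cref{prop:symplectic-is-hamiltonian} to upgrade the family $\{\nu_t\}$ to exact forms $-dH_t$ smooth in $t$, and push through the wedge algebra (noting the $f\,d\phi$ ambiguity is killed by $\wedge\,d\phi$) to arrive at the claimed decomposition. Your phrasing of the final substitution step is, if anything, slightly more careful than the paper's, which glosses over the distinction between $\nu$ being closed mod $S^1$ and closed on $S^1\times\Sigma$; the underlying argument is identical.
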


\begin{remark}
We note that if $\Sigma$ is a given two-manifold with boundary such that the product $M = S^1 \times \Sigma$ can be embedded in $\R^3$, then $\Sigma$ is automatically planar. This fact is proven in \cref{sec:planar-if-embedded}.
\end{remark}


We conclude this section by highlighting the following remarkable conclusion; any volume-preserving vector field on a three-manifold with a global Poincar\'e section diffeomorphic to a disk or an annulus admits a non-autonomous Hamiltonian representation. 
This is a direct application of \cref{thm:moserForPlanarSigma,lem:equivalence-anglefunction,cor:disk-annulus-trivial}, which relies upon the classification of bundles over a circle and the relation to mapping class groups, detailed in \cref{sec:MappingToriMCG}.

The next section revolves around the proof of \cref{thm:moserForPlanarSigma}.



\subsection{Global procedure applying Moser's trick}
\label{sec:moser-trick}

We consider the family  of top-forms  on $\Sigma$ labelled by $t\in S^1$ 
\begin{align*}
  \omega_t := \inclusion_t^* \beta .
\end{align*}

Because $\beta\wedge d\phi\neq 0$, we have for all linearly independent vector fields $Y_1,Y_2\in \ker d\phi$ that $\intprod_{Y_2}\intprod_{Y_1}(\beta\wedge d\phi)= \beta(Y_1,Y_2) d\phi \neq 0 \iff \beta(Y_1,Y_2)\neq 0$. This means that $\beta$ is non-degenerate on the kernel of $d\phi$, or equivalently that each two-form $\omega_t$ is non-degenerate on $\Sigma$ so that they form a family of symplectic forms. As top-forms, $\omega_t$ satisfy a Dirichlet boundary condition, namely $\chipboundary^*\omega_t=0$, $\forall t\in S^1$ where $\chipboundary:\partial \Sigma\to \Sigma$.

It is well-known in plasma physics that the flux through any cross-section is invariant, provided the magnetic field is tangential to the boundary of the toroidal domain considered. We capture this physical property in the following lemma.
\begin{lemma}
    The magnetic flux through the level sets of $\phi$ is constant, namely $\forall a,b\in S^1$,
    \begin{align*}
        \int\limits_{\phi^{-1}(a)} \beta =  \int\limits_{\phi^{-1}(b)} \beta.
      \end{align*}
\end{lemma}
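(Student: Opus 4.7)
The natural strategy is a direct application of Stokes' theorem, exploiting the two ingredients available: $\beta$ is closed, and $\beta$ satisfies the Dirichlet boundary condition on $\partial M$. Since the quantity to be shown constant depends smoothly on $t\in S^1$, it suffices to compare integrals over $\phi^{-1}(a)$ and $\phi^{-1}(b)$ for $a,b$ in the same arc, and conclude by connectedness of $S^1$.

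Concretely, the plan is as follows. For $a\ne b\in S^1$ pick one of the two arcs $[a,b]\subset S^1$ joining them, and set $N=\phi^{-1}([a,b])\subset M$. Because $\phi$ is a submersion (it has nowhere-vanishing differential, being the projection from $S^1\times\Sigma$), $N$ is a compact 3-dimensional submanifold-with-corners of $M$ whose topological boundary decomposes as
\begin{equation*}
\partial N = \phi^{-1}(b)\,\sqcup\,\phi^{-1}(a)\,\sqcup\,(\partial M\cap N),
\end{equation*}
with the orientations of the fibre pieces opposite to one another (if $\phi^{-1}(b)$ inherits the outward orientation from $N$, then $\phi^{-1}(a)$ inherits the inward one). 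Apply Stokes' theorem to $\beta$ on $N$:
\begin{equation*}
0=\int_N d\beta=\int_{\partial N}\beta=\int_{\phi^{-1}(b)}\beta-\int_{\phi^{-1}(a)}\beta+\int_{\partial M\cap N}\beta.
\end{equation*}

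The last term vanishes because $\partial M\cap N$ is an open subset of $\partial M$, and the pullback of $\beta$ by $\boundary:\partial M\to M$ is zero (this is the Dirichlet condition of \cref{lem:boundaryconditions} expressing that $B$ is tangent to $\partial M$). Hence $\int_{\phi^{-1}(a)}\beta=\int_{\phi^{-1}(b)}\beta$ for every $a,b\in S^1$, which is the claim. I do not foresee a serious obstacle: the only point requiring mild care is the corner structure of $N$ (where $\partial M$ meets the fibres transversely), but Stokes' theorem is valid on manifolds with corners, and the orientation bookkeeping is the standard one for a bundle over the circle. An equivalent (and slightly slicker) formulation is to observe directly that $I(t)=\int_\Sigma\inclusion_t^*\beta$ has derivative $\frac{d}{dt}I(t)=\int_\Sigma d(\inclusion_t^*\intprod_{\partial_t}\beta)=\int_{\partial\Sigma}\inclusion_t^*\intprod_{\partial_t}\beta$, which vanishes by the Dirichlet condition on $\beta$ restricted to $\partial M=S^1\times\partial\Sigma$; either route makes the same two facts (closedness and the boundary condition) do all the work.
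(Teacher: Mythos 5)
Your proof is correct and follows essentially the same route as the paper's: both integrate $d\beta=0$ over the region $\phi^{-1}([a,b])$ (which the paper parametrises explicitly as the image of an embedding $V:[0,1]\times\Sigma\to S^1\times\Sigma$), and both kill the lateral boundary contribution via the Dirichlet condition $\boundary^*\beta=0$. Your closing remark using $\tfrac{d}{dt}\int_\Sigma \inclusion_t^*\beta$ via Cartan's formula is a slick variant but does the same work with the same two facts.
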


\begin{proof}
Assume $a\neq b$. Write $a = a_0 + 2\pi\mathbb{Z}$ and $b = b_0 + 2\pi\mathbb{Z}$ where $0 < b_0-a_0 < 2\pi$. Consider the path $\gamma : [0,1] \to S^1$ where $\gamma(t) = a_0 + (b_0-a_0)t + 2\pi\mathbb{Z}$. Then, the map $V :  [0,1]\times \Sigma \to S^1\times\Sigma$ given by 
\begin{equation*}
V(t,x) = (\gamma(t),x)
\end{equation*}
is an embedding of the manifold with corners $ [0,1]\times \Sigma$ into $S^1\times\Sigma$. The image, $V([0,1] \times \Sigma)$, has boundary
\begin{equation*}
\partial V ([0,1]\times\Sigma)=V(\partial ([0,1]\times\Sigma))
=V(\partial [0,1]\times\Sigma)\cup V(\{0\}\times\Sigma)\cup V(\{1\}\times\Sigma) 
= C_{ab}\cup \phi^{-1}(a)\cup \phi^{-1}(b).
\end{equation*}
The surface $C_{ab}=V(\partial [0,1]\times\Sigma)\subseteq \partial( S^1\times\Sigma)=S^1\times\partial\Sigma$ belong to the boundary of the manifold, over which the magnetic two-form pulls back to zero by the Dirichlet boundary condition. With this in mind and because the magnetic two-form is closed, Stokes theorem yields
\begin{align*}
  0 = \int\limits_V d\beta
  = \int\limits_{\phi^{-1}(a)}\beta - \int\limits_{\phi^{-1}(b)}\beta
\end{align*}
where the minus sign comes from the opposite orientation of the level sets $\phi^{-1}(a)$ and $\phi^{-1}(b)$ under $V$.
\end{proof}

\begin{corollary}
\label{lem:cohomologous}
  The area of $\Sigma$ computed via the symplectic forms $\omega_t$ is independent of $t\in S^1$. As closed Dirichlet two-forms, $\omega_t$ thus belong to the same relative cohomology class.
\end{corollary}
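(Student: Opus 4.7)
The plan is to split the corollary into its two assertions and handle them in sequence, as the second follows immediately from the first via a standard dimension-two cohomology fact.

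For the area invariance, I would observe that the inclusion $\inclusion_t : \Sigma \to S^1\times \Sigma$ is a diffeomorphism onto the fibre $\phi^{-1}(t)$, compatible with a consistent orientation of the fibres as $t$ varies. The change-of-variables formula then gives
\begin{equation*}
\int_\Sigma \omega_t \;=\; \int_\Sigma \inclusion_t^*\beta \;=\; \int_{\phi^{-1}(t)} \beta ,
\end{equation*}
and the preceding lemma immediately yields independence of $t\in S^1$.

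For the relative cohomology claim, I would invoke the standard fact that for a connected compact orientable surface $\Sigma$ with (possibly empty) boundary, the top relative de Rham cohomology is one-dimensional, $H^2(\Sigma,\partial \Sigma;\R)\cong \R$, with the isomorphism realised by integration over $\Sigma$. This is a direct consequence of Poincar\'e--Lefschetz duality $H^2(\Sigma,\partial \Sigma)\cong H_0(\Sigma)$, and is elementary in dimension two. Each $\omega_t$ is closed and, as noted immediately before the preceding lemma, satisfies the Dirichlet condition $\chipboundary^*\omega_t=0$ automatically by degree reasons. Hence the class $[\omega_t]\in H^2(\Sigma,\partial \Sigma;\R)$ is determined by the single real number $\int_\Sigma \omega_t$, which was just shown to be $t$-independent.

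There is essentially no obstacle here; the corollary is a clean packaging of the preceding flux-invariance lemma together with a standard computation of top-degree relative cohomology in dimension two. The only minor point of care is the orientation compatibility of the diffeomorphisms $\inclusion_t$ across $t\in S^1$, but this was already implicitly fixed by the sign conventions of the Stokes-theorem argument used in the proof of the preceding lemma.
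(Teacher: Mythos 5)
Your proof is correct and, for the area-invariance claim, is essentially identical to the paper's own argument: pull back via $\inclusion_t$ and invoke the preceding flux-invariance lemma. The paper leaves the second assertion (equality of relative cohomology classes) implicit; your explicit justification via $H^2(\Sigma,\partial\Sigma;\R)\cong\R$ from Poincar\'e--Lefschetz duality is the intended reason and correctly fills in that step.
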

\begin{proof}
  Indeed, 
  \begin{align*}
    \int\limits_{\Sigma} \omega_t
    = \int\limits_{\Sigma}\inclusion_t^* \beta
    = \int\limits_{\inclusion_t(\Sigma)}\beta
    =\int\limits_{\phi^{-1}(t)}\beta.
  \end{align*}
\end{proof}

That all two-forms $\omega_t$ are relative cohomologous allows us to relate them by a family of diffeomorphisms. This is essentially the content of Moser's stability theorem~\cite{moser-1965}. However, it is important to note that the family of two-forms $\omega_t$ has a periodic parameter $t\in S^1$. If one naively applies Moser's stability theorem, then one would obtain a smooth family $\{\Psi_t : \Sigma\to \Sigma\}_{t \in [0,1]}$ of diffeomorphisms such that $\Psi_t^*\omega_{\gamma(t)} = \omega_o$ where $o\in S^1$ is fixed and $\gamma : [0,1] \to S^1$ is the quotient projection. However, it is not immediate that the family $\{\Psi_t : \Sigma\to \Sigma\}_{t \in [0,1]}$ can be chosen with $\Psi_0 = \Psi_1$. Therefore, some care is needed to apply Moser's result to ensure that one obtains a well-defined bundle automorphism of $S^1\times \Sigma$. This leads to a \emph{parameterised} version of his theorem. First, a definition to easily state it.

\begin{definition}
  Let $\nu_1,\nu_2\in \Omega^k(S^1\times\Sigma)$ be $k$-forms on $S^1\times\Sigma$. We write
  \begin{align*}
    \nu_1=\nu_2 \text{ mod } S^1 \quad \text{ if } \quad \inclusion_t^*\nu_1=\inclusion_t^*\nu_2, \quad \forall t\in S^1.
  \end{align*}
\end{definition}

\begin{lemma} The statement $\nu_1=\nu_2 \text{ mod } S^1$ is equivalent to 
\begin{align*}
\nu_1(Y_1,\ldots,Y_k) = \nu_2(Y_1,\ldots Y_k),\quad \forall Y_1,\ldots,Y_k\in \ker d\phi.
\end{align*} 
\end{lemma}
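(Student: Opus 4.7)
The plan is to reduce the equivalence to the pointwise definition of the pullback, using the fact recorded in \cref{lem:kernelimage} that $(d\inclusion_t)_x : T_x\Sigma \to \ker d\phi_{(t,x)}$ is a linear isomorphism at every $(t,x) \in S^1\times\Sigma$ (it is injective because $\inclusion_t$ is an embedding, and surjective by \cref{lem:kernelimage} together with a dimension count). Once this isomorphism is in hand, both sides of the asserted equivalence express the same tensorial information read off different sets of test vectors.

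For the forward direction, I would fix an arbitrary point $(t,x)$ and vector fields $Y_1,\ldots,Y_k \in \ker d\phi$. Using the isomorphism above, write $Y_j|_{(t,x)} = (d\inclusion_t)_x(v_j)$ for unique $v_j \in T_x\Sigma$. Then by definition of the pullback,
\begin{equation*}
\nu_a(Y_1,\ldots,Y_k)\big|_{(t,x)} = \nu_a\bigl((d\inclusion_t)_x v_1,\ldots,(d\inclusion_t)_x v_k\bigr) = (\inclusion_t^*\nu_a)_x(v_1,\ldots,v_k)
\end{equation*}
for $a \in \{1,2\}$, so the hypothesis $\inclusion_t^*\nu_1=\inclusion_t^*\nu_2$ yields equality at $(t,x)$, and hence at every point.

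For the converse, I would exploit that $k$-forms are $C^\infty$-multilinear in their arguments, so the assumption that $\nu_1$ and $\nu_2$ agree on all vector fields in $\ker d\phi$ passes to pointwise equality on any $k$-tuple of vectors in $\ker d\phi_{(t,x)}$. Fixing $t,x$ and arbitrary $v_1,\ldots,v_k \in T_x\Sigma$ and setting $w_j = (d\inclusion_t)_x(v_j) \in \ker d\phi_{(t,x)}$, the same pullback identity displayed above together with $\nu_1(w_1,\ldots,w_k) = \nu_2(w_1,\ldots,w_k)$ gives $(\inclusion_t^*\nu_1)_x(v_1,\ldots,v_k) = (\inclusion_t^*\nu_2)_x(v_1,\ldots,v_k)$. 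Since $t,x$ and the $v_j$ were arbitrary, $\inclusion_t^*\nu_1 = \inclusion_t^*\nu_2$ for every $t \in S^1$.

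I do not anticipate any real obstacle here; the only point to watch is the standard remark that the tensorial (pointwise) nature of $k$-forms lets us freely pass between the statement on vector fields and its pointwise counterpart, so no delicate extension of vectors to vector fields is needed for the converse.
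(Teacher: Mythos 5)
Your argument is correct and follows essentially the same route as the paper: both directions rest on the bijection $(d\inclusion_t)_x : T_x\Sigma \to \ker d\phi_{(t,x)}$ supplied by \cref{lem:kernelimage} together with the pointwise definition of the pullback. The paper phrases this as a single display identity without separating the two implications, whereas you spell them out and explicitly appeal to $C^\infty$-multilinearity to pass from vector fields to pointwise vectors; this is a harmless elaboration of the same idea, not a different proof.
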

\begin{proof}
Fix $t\in S^1$, $x\in \Sigma$ and let $(t,x)\in S^1\times\Sigma$. By \cref{lem:kernelimage}, $(d\inclusion_t)_x$ forms a bijection between $T_x\Sigma$ and $\ker d\phi_{(t,x)}$. So, evaluating the pullback $(\inclusion^*_t\nu_m)_x$ over $T_x\Sigma$ is the same as evaluating $(\nu_m)_{(t,x)}$ on $\ker d\phi_{(t,x)}$. Explicitly, let $v_1,\ldots,v_k\in T_x\Sigma$ and the corresponding $w_1,\ldots,w_k\in \ker d\phi_{(t,x)}$ such that $w_n=(d\inclusion_t)_x(v_n)$, for $n=1,\ldots,k$. Then,
\begin{align*}
    (\inclusion_t^*\nu_m)_x(v_1,\ldots,v_k) = (\nu_m)_{(t,x)}((d\inclusion_t)_x(v_1),\ldots,(d\inclusion_t)_x(v_k))
    = (\nu_m)_{(t,x)}(w_1,\ldots,w_n) , \quad m=1,2.
\end{align*}
\end{proof}

\begin{remark}
\label{rem:modS1}
    Applied to one-forms, $\alpha_1=\alpha_2$ mod $S^1$ is equivalent to $\alpha_1=\alpha_2+ fd\phi$ for some function $f$. In the case of two-forms, $\omega_1=\omega_2$ mod $S^1$ is equivalent to $\omega_1=\omega_2+\chi\wedge d\phi$ for some one-form $\chi$.
\end{remark}

We are now ready to state and prove an important result.
\begin{theorem}
\label{thm:moser}
  There exists a bundle isomorphism $\Psi:S^1\times\Sigma\to S^1\times\Sigma$ with $\Psi^*\phi=\phi$ such that $\Psi^*\beta = \omega \mod S^1$, where $\omega=\pi^*\inclusion_o^* \beta$ for an arbitrary (fixed) reference angle on the circle $o\in S^1$.
\end{theorem}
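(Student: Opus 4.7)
The plan is to use a parameterised version of Moser's stability theorem in which the Moser interpolation is performed along an auxiliary parameter $s \in [0,1]$ for each fixed $t \in S^1$, rather than using $t$ itself as the flow parameter. This decoupling is the structural point that guarantees the resulting family of fiber diffeomorphisms depends periodically on $t$ and therefore assembles into a genuine bundle automorphism of $S^1 \times \Sigma$.

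First, I would set $\omega_t := \inclusion_t^*\beta \in \Omega^2(\Sigma)$. Since $\beta \wedge d\phi$ is a volume form on $S^1\times\Sigma$, each $\omega_t$ is a nowhere-vanishing top form on the compact 2-manifold $\Sigma$, hence an area form; being a top form, $\omega_t$ is trivially Dirichlet. \cref{lem:cohomologous} ensures $\int_{\Sigma}(\omega_t - \omega_o) = 0$ for all $t\in S^1$, so $\omega_t - \omega_o$ is a smooth family of exact Dirichlet 2-forms. I would then construct a smooth periodic family $\{\sigma_t\}_{t \in S^1}$ of Dirichlet 1-forms on $\Sigma$ satisfying $d\sigma_t = \omega_t - \omega_o$. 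This is the 2-form analog of \cref{lem:exactfamily}; it can be obtained from the Hodge-theoretic Green's operator for Dirichlet forms on $\Sigma$, equivalently from the canonical solution of the associated Poisson boundary-value problem. Because this operator is linear and $t$-independent, $\sigma_t$ inherits periodicity from $\omega_t$.

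For each fixed $t \in S^1$, I would then apply the classical Moser trick with the convex interpolation $\omega_{s,t} := (1-s)\omega_o + s\omega_t$ for $s \in [0,1]$; this remains an area form as a positive convex combination of area forms inducing the same orientation. Define the vector field $X_{s,t}$ on $\Sigma$ by $i_{X_{s,t}}\omega_{s,t} = -\sigma_t$. By non-degeneracy of $\omega_{s,t}$, the field $X_{s,t}$ is smooth in $(s,t)$, periodic in $t$, and tangent to $\partial\Sigma$ thanks to the Dirichlet property of $\sigma_t$. Letting $\phi_{s,t}$ denote the $s$-flow of $X_{s,t}$ starting at $\phi_{0,t} = \id$, Moser's calculation
\begin{equation*}
\frac{d}{ds}\bigl(\phi_{s,t}^*\omega_{s,t}\bigr) = \phi_{s,t}^*\bigl(d(i_{X_{s,t}}\omega_{s,t}) + (\omega_t - \omega_o)\bigr) = \phi_{s,t}^*\bigl(-d\sigma_t + d\sigma_t\bigr) = 0
\end{equation*}
yields $\phi_{1,t}^*\omega_t = \omega_o$.

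Setting $\Psi_t := \phi_{1,t}$ and $\Psi(t, x) := (t, \Psi_t(x))$ produces a bundle automorphism of $S^1 \times \Sigma$ with $\Psi^*\phi = \phi$: periodicity $\Psi_{t+2\pi} = \Psi_t$ follows from periodicity of $X_{s,t}$ in $t$, making $\Psi$ well-defined on $S^1\times\Sigma$. For the conclusion, $\inclusion_t^*(\Psi^*\beta) = \Psi_t^*\inclusion_t^*\beta = \Psi_t^*\omega_t = \omega_o = \inclusion_t^*\omega$ for every $t\in S^1$, which by \cref{rem:modS1} is exactly $\Psi^*\beta = \omega \text{ mod } S^1$. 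The main obstacle is the second step, the smooth and periodic construction of $\sigma_t$: a naive Moser flow along $t \in [0, 2\pi]$ would produce a terminal map $\Psi_{2\pi}$ that is some symplectomorphism of $(\Sigma, \omega_o)$ not necessarily isotopic to the identity, obstructing closure into a bundle automorphism; the auxiliary $s$-interpolation circumvents this precisely because the periodicity of $\sigma_t$ passes directly through the Moser $s$-flow to the final map $\Psi_t$.
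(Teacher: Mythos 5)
Your proposal is correct and takes essentially the same approach as the paper. The paper's interpolation $\beta_s = s\beta + (1-s)\omega$ with a vector field $X_s \in \ker d\phi$ on $S^1\times\Sigma$ is exactly your fibre-by-fibre construction $\omega_{s,t}$ with $X_{s,t}$, rephrased in total-space language, and both rest on the same key observation that the Moser parameter $s$ must be kept distinct from the circle parameter $t$ so that periodicity of the primitive $\sigma_t$ (the paper's $\alpha$ mod $S^1$) passes through to the final map and yields a genuine bundle automorphism.
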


\begin{proof}
    The procedure is known as Moser's trick. Consider the family of closed two-forms $\beta_s=s\beta+(1-s)\omega$ for $s\in [0,1]$, linearly interpolating between the closed two-forms $\omega$ and $\beta$. We wish to construct a family of diffeomorphisms $\Psi_s$ so that $\Psi_s^*\phi=\phi$ and $\Psi_s^*\beta_s=\omega$ mod $S^1$ for each $s\in[0,1]$. To achieve this, we manufacture a non-autonomous vector field $X_s$ on $S^1\times\Sigma$ and solve the linear differential equation with variable coefficients, $\frac{d}{ds} \Psi_s = X_s\circ\Psi_s$, supplemented with initial condition $\Psi_0=\id_{S^1\times\Sigma}$. The non-autonomous vector field $X_s$ uniquely determines $\Psi_s$, and vice-versa.
    
    The property that $\Psi_s^*\phi=\phi$ is equivalent to the generating vector field $X_s$ belonging to $\ker d\phi$. This can be seen by differentiating the condition with respect to $s$,
    \begin{align*}
        0&=\frac{d}{ds}(\Psi_s^*\phi-\phi)= \Psi_s^* \lie_{X_s}\phi \iff  d\phi(X_s)=0 \iff X_s\in \ker d\phi.
    \end{align*}
   
    The second property $\Psi_s^*\beta_s=\omega$ mod $S^1$ is equivalent to the non-autonomous vector field $X_s\in \ker d\phi$ additionally satisfying
    \begin{align*}
        0&=\frac{d}{ds}(\Psi_s^*\beta_s-\omega) = \Psi_s^*(\partial_s \beta_s + \lie_{X_s}\beta_s)=0 \text{ mod } S^1
        \iff \beta-\omega + d i_{X_s}\beta_s = 0 \text{ mod } S^1,
    \end{align*}    
    where, when $\Psi_s$ preserves $\ker d\phi$ as a result of the first condition, the statement $\Psi_s^*\nu=0$ mod $S^1$ is equivalent to $\nu=0$ mod $S^1$, for $\nu\in \Omega^k(S^1\times\Sigma)$.
Now by \cref{lem:cohomologous},  $\inclusion_t^*(\beta-\omega)$ is relative cohomologous to zero for all $t\in S^1$, so $\beta-\omega = d\alpha$ mod $S^1$ for some Dirichlet-mod $S^1$ one-form $\alpha$ on $S^1\times\Sigma$. 
The requirements on the generating vector field are thus $X_s\in \ker d\phi$ and    $d(\intprod_{X_s}\beta_s + \alpha)=0$ mod $S^1$. Conveniently, it is sufficient that $X_s\in \ker d\phi$ satisfies
    \begin{align*}
        \intprod_{X_s}\beta_s(Y)&=-\alpha(Y),\quad \forall Y\in \ker d\phi,
    \end{align*}
    which determines $X_s$ uniquely because $\beta_s$ is non-degenerate on $\ker d\phi$, for all $s\in[0,1]$.

    Now that the non-autonomous vector field $X_s$ has been prescribed, we verify that the initial-value problem $\frac{d}{ds}\Psi_s = X_s\circ \Psi_s$ with $\Psi_0=\id$ is well-posed. Since $\alpha$ is Dirichlet-mod $S^1$, the vector fields $X_s$ are tangential to the boundary of $S^1\times\Sigma$. 
    Indeed, at any point $x\in \partial (S^1\times\Sigma)\subset S^1\times\Sigma$, let $v\in d\boundary_x(T_x\partial (S^1\times\Sigma))\cap \ker d\phi_x$. Then, $0=-\alpha_x(v) = (\beta_s)_x((X_s)_x,v)$, which means that $(X_s)_x$ are co-linear to $v$, namely tangential to the boundary.
    The manifold $S^1\times\Sigma$ is compact and $X_s$ is tangential to the boundary, which implies that $X_s$ has global non-autonomous flow, $\Psi_s : S^1\times\Sigma\to S^1\times\Sigma$ for all times $t\in[0,1]$. Hence, the desired diffeomorphism $\Psi=\Psi_1$ exists.

\end{proof}

\begin{remark}
\label{rem:chi-nu}
\cref{thm:moser} implies an important decomposition of the magnetic two-form $\beta$. Consider the closed two-form $\Omega$ on $S^1\times\Sigma$ such that $\Psi^*\Omega=\omega=\pi^*\inclusion_o^*\beta$ where $\Psi$ is the diffeomorphism of \cref{thm:moser}. Because $\Psi$ preserves the level sets of $\phi$, the two-form $\Omega$ is Dirichlet, $\boundary^*\Omega=0$. By virtue of \cref{thm:moser}, the magnetic two-form $\beta$ matches $\Omega$ on the level sets of $\phi$, namely $\beta=\Omega$ mod $S^1$ (or $\inclusion_t^*\beta=\inclusion_t^*\Omega=\omega_t$, $\forall t\in S^1$) which means, by \cref{rem:modS1},
\begin{align*}\label{eq:decomposition}
    \beta = \Omega + \chi\wedge d\phi
\end{align*}
for some one-form $\chi$. 

The one-form $\chi$ necessarily satisfies additional properties. Because $\beta$, $\Omega$ and $d\phi$ are all closed, the one-form $\chi$ satisfies
\begin{align*}
    d\chi\wedge d\phi = 0.
\end{align*}
That is $d\chi=0$ mod $S^1$, namely $\chi$ is closed mod $S^1$. We also have that $\chi$ is Dirichlet mod $S^1$. Indeed, because $\beta$ and $\Omega$ are Dirichlet but $\boundary^*d\phi\neq 0$, we have that
\begin{equation*}
0=\boundary^*\beta=\cancel{\boundary^*\Omega} + \boundary^*\chi\wedge \boundary^* d\phi \iff \boundary^*\chi=0.
\end{equation*}
Now, letting $t \in S^1$, $x \in \Sigma$ and $v \in T_x \partial \Sigma$, we consider the vector $V = d(\btob_t)_x v$ where $\btob_t : \partial \Sigma \to \partial (S^1\times\Sigma)$ is inclusion inherited from $\inclusion_t : \Sigma \to S^1\times\Sigma$ and the inclusions $\boundary : \partial (S^1\times\Sigma) \subset S^1\times\Sigma$ and $\chipboundary : \partial \Sigma \subset \Sigma$. Then, we find that
\begin{equation*}
0 = \intprod_V(\boundary^*\chi\wedge \boundary^* d\phi)|_x = \boundary^*\chi|_x(V) \boundary^* d\phi|_x - \cancel{\boundary^* d\phi|_x(V)\boundary^*\chi|_x} 
\end{equation*}
so that $\chi|_x(v) = 0$. Hence, the one-form $\inclusion_t^*\chi$ is Dirichlet for each $t \in S^1$, as claimed. We therefore have that
\begin{equation*}
\Psi^*\beta = \omega + \nu \wedge d\phi
\end{equation*}
whereby $\nu = \Psi^*\chi$ is both Dirichlet- and closed- mod $S^1$. Considering the smooth family $\{\nu_t = \inclusion_t^*\nu\}_{t \in S^1}$, we now summarise our immediate findings for the cosymplectic manifold $(S^1\times\Sigma,\beta,d\phi)$ from \cref{thm:moser}.
\end{remark}

\begin{corollary}
The cosymplectic manifold $(S^1\times\Sigma,\beta,d\phi)$ is represented by the non-autonomous locally-Hamiltonian system $(\Sigma,\omega,\{\nu_t\}_{t \in S^1})$.
\end{corollary}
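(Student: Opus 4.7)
The plan is to simply assemble the ingredients already produced by \cref{thm:moser} and \cref{rem:chi-nu} and check that they verify each clause of \cref{def:cosymplectic-symplectic-system}. Concretely, taking the bundle automorphism $\Psi$ from \cref{thm:moser} together with the decomposition $\Psi^*\beta = \omega + \nu \wedge d\phi$ recorded in \cref{rem:chi-nu}, I need to exhibit a non-autonomous locally-Hamiltonian triple $(\Sigma,\omega_o,\{\nu_t\}_{t\in S^1})$ such that $\omega = \pi^*\omega_o$ and $\inclusion_t^*\nu = \nu_t$ for all $t\in S^1$. The natural candidate is $\omega_o := \inclusion_o^*\beta$ for the fixed reference angle $o \in S^1$, and $\nu_t := \inclusion_t^*\nu$.

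First I would check that $(\Sigma,\omega_o,\{\nu_t\}_{t\in S^1})$ is indeed a non-autonomous locally-Hamiltonian system in the sense defined in \cref{sec:locally-hamiltonian-systems}. Non-degeneracy of $\omega_o$ on the $2$-manifold $\Sigma$ was already noted at the start of \cref{sec:moser-trick} (each $\omega_t = \inclusion_t^*\beta$ is a nowhere-vanishing top-form on $\Sigma$), so $\omega_o$ is symplectic. From \cref{rem:chi-nu}, $d\nu = 0$ mod $S^1$ and $\nu$ is Dirichlet mod $S^1$; pulling back via $\inclusion_t$ and using $d\inclusion_t^* = \inclusion_t^* d$ gives $d\nu_t = 0$ and $\chipboundary^*\nu_t = 0$ for every $t \in S^1$. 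Smoothness of the family $\{\nu_t\}_{t\in S^1}$ in the sense of the assignment $S^1\times\Sigma \to T^*\Sigma$, $(t,x)\mapsto \nu_t|_x$, follows directly from the smoothness of $\nu \in \Omega^1(S^1\times\Sigma)$ together with the smooth dependence of $\inclusion_t$ on $t$.

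Next I would verify the two structural identities demanded by \cref{def:cosymplectic-symplectic-system}. The relation $\Psi^*d\phi = d\phi$ is immediate because $\Psi$ is a bundle automorphism, i.e.\ $\phi\circ\Psi = \phi$. The relation $\omega = \pi^*\omega_o$ is automatic from the construction in \cref{thm:moser}, since there $\omega$ is defined as $\pi^*\inclusion_o^*\beta = \pi^*\omega_o$. By \cref{rem:chi-nu}, the one-form $\nu$ satisfies $\Psi^*\beta = \omega + \nu \wedge d\phi$ and $\inclusion_t^*\nu = \nu_t$ by definition, matching the requirements of \cref{def:cosymplectic-symplectic-system} exactly.

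The main conceptual obstacle has already been overcome inside \cref{thm:moser}: the periodic (mod $S^1$) version of Moser's trick is what produces a genuine bundle automorphism of $S^1\times\Sigma$ rather than only a $[0,1]$-family of diffeomorphisms, and the auxiliary bookkeeping in \cref{rem:chi-nu} is what guarantees that the remainder one-form $\nu$ enjoys the closed- and Dirichlet-mod-$S^1$ properties needed for $\{\nu_t\}_{t \in S^1}$ to qualify as a locally-Hamiltonian family. Granted these, the corollary is essentially a definitional repackaging.
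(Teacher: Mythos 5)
Your proposal is correct and follows the same route the paper takes: the corollary is essentially a restatement of the conclusion of \cref{rem:chi-nu}, where the decomposition $\Psi^*\beta = \omega + \nu\wedge d\phi$ with $\nu$ closed-mod-$S^1$ and Dirichlet-mod-$S^1$ is derived, and the family $\nu_t = \inclusion_t^*\nu$ is read off. You have simply spelled out the definitional matching against \cref{def:cosymplectic-symplectic-system}, which is exactly what the paper leaves implicit.
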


By virtue of \cref{prop:symplectic-is-hamiltonian}, more can be said in the case where $\Sigma$ is planar, that is, a compact regular domain in $\mathbb{R}^2$.

\begin{corollary}\label{cor:planar-is-ham}
Assume that $\Sigma$ is planar. Then, the cosymplectic manifold $(S^1\times\Sigma,\beta,d\phi)$ is represented by a non-autonomous Hamiltonian system.
\end{corollary}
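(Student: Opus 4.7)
The plan is to assemble the result by combining \cref{thm:moser} (together with the mod-$S^1$ decomposition in \cref{rem:chi-nu}) with \cref{prop:symplectic-is-hamiltonian}, which already upgrades locally-Hamiltonian systems on planar $\Sigma$ to globally Hamiltonian ones. No new analysis is required beyond careful mod-$S^1$ bookkeeping.

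First, I would invoke \cref{thm:moser} to produce a bundle automorphism $\Psi : S^1\times\Sigma \to S^1\times\Sigma$ with $\Psi^*\phi = \phi$ and $\Psi^*\beta = \omega \mod S^1$, where $\omega = \pi^*\inclusion_o^*\beta$. As explained in \cref{rem:chi-nu}, this gives a decomposition
\begin{equation*}
\Psi^*\beta = \omega + \nu \wedge d\phi
\end{equation*}
for some one-form $\nu \in \Omega^1(S^1\times\Sigma)$ which is both closed and Dirichlet mod $S^1$. Consequently, setting $\nu_t := \inclusion_t^*\nu$, the family $\{\nu_t\}_{t\in S^1}$ consists of closed Dirichlet one-forms on $\Sigma$ and, together with $\omega_o := \inclusion_o^*\beta$, forms a non-autonomous locally-Hamiltonian system $(\Sigma,\omega_o,\{\nu_t\}_{t\in S^1})$ in the sense of \cref{sec:locally-hamiltonian-systems}.

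Next, since $\Sigma$ is by assumption a compact regular domain of $\mathbb{R}^2$, I would apply \cref{prop:symplectic-is-hamiltonian} directly to this locally-Hamiltonian system to obtain a smooth family $\{H_t\}_{t\in S^1}$ of functions on $\Sigma$ with $\nu_t = -dH_t$ for every $t \in S^1$. Defining $H \in C^{\infty}(S^1\times\Sigma)$ by $H(t,x) = H_t(x)$, one has $\inclusion_t^*(-dH) = -dH_t = \nu_t = \inclusion_t^*\nu$, hence $-dH = \nu \mod S^1$. By \cref{rem:modS1}, this means $\nu + dH = f\,d\phi$ for some $f \in C^{\infty}(S^1\times\Sigma)$, and wedging with $d\phi$ annihilates the $f\,d\phi$ term, yielding the clean equality
\begin{equation*}
\nu \wedge d\phi = -dH \wedge d\phi.
\end{equation*}
Substituting back produces $\Psi^*\beta = \omega - dH \wedge d\phi$, which is exactly the representation required by \cref{def:cosymplectic-hamiltonian-system}, witnessing $(S^1\times\Sigma,\beta,d\phi)$ as being represented by the non-autonomous Hamiltonian system $(\Sigma,\omega_o,\{H_t\}_{t\in S^1})$.

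There is no substantial obstacle here: all the heavy lifting (the parameterised Moser argument and the cohomological fact that Dirichlet closed one-forms on planar $\Sigma$ are exact with a smoothly varying primitive) has already been done in \cref{thm:moser} and \cref{prop:symplectic-is-hamiltonian,lem:exactfamily}. The only point deserving care is the verification that replacing a mod-$S^1$ representative $\nu$ by $-dH$ genuinely preserves the decomposition $\omega + \nu \wedge d\phi$, which is immediate from $d\phi \wedge d\phi = 0$.
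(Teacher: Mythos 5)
Your proof is correct and follows essentially the same route as the paper's: apply \cref{thm:moser} and \cref{rem:chi-nu} to obtain the locally-Hamiltonian decomposition $\Psi^*\beta = \omega + \nu\wedge d\phi$, then use \cref{prop:symplectic-is-hamiltonian} (i.e.\ \cref{prop:boundarycohomology} plus \cref{lem:exactfamily}) to upgrade the family $\{\nu_t\}$ to exact one-forms $\nu_t = -dH_t$ with a smooth primitive family, and finally absorb the mod-$S^1$ discrepancy via $d\phi\wedge d\phi=0$. The paper leaves this last bookkeeping step implicit (it surfaces more explicitly in the proof of \cref{thm:moserForPlanarSigma}); you have correctly made it precise, and there is no gap.
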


By collecting and expanding the meaning of these results, we summarise our findings in the proof of the main result, \cref{thm:moserForPlanarSigma}.
\begin{proof}[Proof of \cref{thm:moserForPlanarSigma}.]
Recalling the assumptions, let $(S^1\times\Sigma,\beta,d\phi)$ be a cosymplectic manifold where $\Sigma$ is planar, $\phi : S^1\times\Sigma \to S^1$ is the trivial projection, and $\beta$ satisfies the Dirichlet boundary condition, $\boundary^*\beta = 0$. Fix a reference angle $o\in S^1$. Implicit in \cref{cor:planar-is-ham} from the planarity of $\Sigma$ is the result that the closed one-form $\nu$ on $S^1\times\Sigma$ in \cref{rem:chi-nu} is actually exact. That is, $\nu=-dH$ for some function $H$ in $S^1\times\Sigma$. Hence, $(\Sigma,\inclusion_o^*\beta,\{\inclusion_t^*H\}_{t\in S^1})$ is a non-autonomous Hamiltonian system. The diffeomorphism of \cref{thm:moser}, which preserves the projection $\Psi^*\phi=\phi$, allows us to write
\begin{align*}
    \Psi^*\beta = \omega - dH\wedge d\phi\iff \beta = \Omega - d\mathcal{H}\wedge d\phi
\end{align*}
where $\Psi^*\Omega=\omega=\pi^*\inclusion_o\beta$ and $\Psi^*\mathcal{H}=H$.
\end{proof}


\section{Conclusion}
\label{sec:conclusion}

In summary, this paper carefully investigated the relation between the field-line dynamics of divergence-free tangential vector fields on a fibre bundle over the circle and non-autonomous Hamiltonian systems. 

First, various flavours of non-autonomous Hamiltonian systems (periodic in time) were presented. Essentially, the evolution vector field $E_H$ of a non-autonomous Hamiltonian system on a symplectic manifold $(\Sigma,\omega_o)$ induced by the family of functions $\{H_t\}_{t\in S^1}$ is the Reeb field $E_H=\partial_t + X_H=R_o$ of the cosymplectic manifold $(S^1\times \Sigma,\pi^*\omega_o-dH\wedge d\phi, d\phi)$, see \cref{prop:non-autonomous-hamiltonian-systems} and \cref{rem:cosymplectic-beta}. Of course, the non-autonomous Hamiltonian structure and the corresponding dynamics on $S^1\times\Sigma$ should be insensitive to diffeomorphisms; if a cosymplectic manifold $(M,\beta,\eta)$ can be converted to the one of the above type via a diffeomorphism, then we say that it admits a non-autonomous Hamiltonian representation, see \cref{def:cosymplectic-hamiltonian-system}. Paraphrasing \cref{rem:beta_reeb}, the closed two-form then necessarily decomposes as $\beta=\Omega-d\mathcal{H}\wedge \eta$ and the Reeb field as $R=\timelike + \hamlike$ where the time-like and Hamiltonian components are characterised by
\begin{align*}
    \timelike&\in\ker\Omega,& \eta(\timelike)&=1,\\
    \hamlike&\in\ker\eta, & \intprod_{\hamlike}\Omega &= -d\mathcal{H} + d\mathcal{H}(T)\eta.
\end{align*}
In this formalism, bundle automorphisms of $S^1\times \Sigma$ induce, in some sense, non-canonical transformations; the family of functions $\mathcal{H}_t=\inclusion^*_t\mathcal{H}$ and the family of symplectic two-forms $\omega_t=\inclusion^*_t\beta$ produce dynamics on $\Sigma$ given by the flow of the vector field $\mathcal{X}_t$ satisfying
\begin{align*}
    \intprod_{\mathcal{X}_t}\omega_t = -d\mathcal{H}_t,
\end{align*}
which can be viewed as a generalisation of the dynamics of a non-autonomous Hamiltonian system, see \cref{sec:bundle-isomorphism}. 

We then defined the notion of non-autonomous locally-Hamiltonian system, for which the dynamics is induced by the family of Hamiltonian vector fields $\{X_t\}_{t\in S^1}$ satisfying
\begin{align*}
    \intprod_{X_t}\omega_o = \nu_t,
\end{align*}
where $\{\nu_t\}_{t\in S^1}$ is a family of closed Dirichlet one-forms on $\Sigma$, see \cref{def:cosymplectic-symplectic-system}. By extension, a cosymplectic manifold $(M,\beta,\eta)$ admits a non-autonomous locally-Hamiltonian representation if there exists a diffeomorphism back to a cosymplectic manifold of the form $(S^1\times\Sigma,\pi^*\omega_o+\nu\wedge d\phi,d\phi)$, where $\nu$ is a one-form on $S^1\times\Sigma$ such that $i^*_t\nu=\nu_t$, and the diffeomorphism is required to respect the cosymplectic structure. It was highlighted that $\nu$ could be chosen closed if and only if the cohomology classes of $\{\nu_t\}_{t\in S^1}$ agree. 
If $\Sigma$ is a regular compact domain of $\R^{2n}$, each closed one-form $\nu_t$ is actually exact, as a consequence of the Dirichlet boundary condition and injectivity of the boundary inclusion map on the first cohomology. Then, a (global) function $H$ on $S^1\times\Sigma$ can be found so that $\nu=-dH$ and the notions of non-autonomous locally- and globally- Hamiltonian coincide, see \cref{prop:symplectic-is-hamiltonian}.

In the second part of the paper, we attempted to reverse the argument by considering a divergence-free tangential vector field $B$ on a compact connected Riemannian three-manifold $M$. The manifold must at least be a fibre bundle over the circle. This fibre-bundle structure comes about if the field is transverse to a global Poincaré section $\Sigma$. In this case, the projection (angle function) $\phi:M\to S^1$ can be arranged so that $d\phi(B)>0$, see \cref{lem:equivalence-anglefunction}. Viewing $(M,\intprod_B\mu,d\phi)$ as a cosymplectic manifold, we asked whether it can be represented by some flavour of non-autonomous Hamiltonian system. We focused on the situation in which $M$ is a trivial bundle over the circle, where a ``time" direction can be defined. By \cref{cor:disk-annulus-trivial}, this is the case if the sections $\Sigma$ are diffeomorphic to the disk or the annulus. An adaptation of Moser's trick allowed us to conclude that $(S^1\times\Sigma,\intprod_B\mu, d\phi)$ is always represented by a non-autonomous locally-Hamiltonian system, see \cref{thm:moser}. In the case $\Sigma$ is planar, \cref{prop:symplectic-is-hamiltonian} allowed for the full identification.

On non-trivial fibre bundles, there is no natural choice of ``time" direction; there are as many candidates as there are conjugacy classes of the mapping class group $\mathcal{M}(\Sigma)$, see \cref{sec:MappingToriMCG}. The latter need not even be finite, see \cref{cor:InfiniteOrderTwists}. Future work will investigate the idea of recording the monodromy map induced by the vector field $B$, so that the time-like component arises as an instance of a purely topological feature of the manifold akin to a geometric (Berry) phase. 

The work has potential for an extension to the equivalence between volume-preserving vector fields on a $2n+1$ manifold and non-autonomous volume-preserving vector fields on a $2n$-manifold, with possible implications for Nambu mechanics. 

The work has immediate implications for the design and optimisation of magnetic confinement fusion devices. For instance, it must be recognised that the dynamics of divergence-free vector fields on toroidal domains is considerably richer than that of periodic non-autonomous Hamiltonian systems, and that the tools of Hamiltonian mechanics have limited scope in the comprehensive characterisation of magnetic configurations. In different terms, non-autonomous Hamiltonian systems only represent a subset of all possible magnetic field-line dynamics, and there is no reason to believe that the former offers better confinement properties than the latter. It is clear that the parameter space of three-dimensional stellarator magnetic fields is significantly greater than that of the axisymmetric tokamak concept. Exploration and optimisation of stellarator designs is guided by the intuition offered by the Hamiltonian interpretation, for example in the prediction/control of the onset of magnetic islands and stochastic fields, or of the persistence of invariant surfaces under perturbations. Perhaps though, configurations without classical symmetry exhibit outstanding confinement properties and may be less sensitive to small adjustments. In probing the design space for this eventuality, the theoretical framework and numerical tools needs to be extended to general classes of magnetic fields and less elementary domain topologies.




\section*{Acknowledgements}
The authors are indebted to H. Dullin, J.D. Meiss, P.J. Morrison, S.R. Hudson, A.H. Boozer, A. Bhattacharjee, P. Helander and many other members of the Simons Collaboration on Hidden Symmetries and Fusion Energy for insightful discussions. The latter collaboration is acknowledged for nurturing strong synergies and providing multiple opportunities for the authors to meet. A special thought goes to the late R.L. Dewar, whom we believe would have enjoyed the exposition.


\bibliographystyle{apsrev4-2}
\bibliography{biblio}

\appendix

\section{Dirichlet closed one-forms with non-zero cohomology classes}
\label{eg:varying-cohomology-classes}
\begin{example}
Given a closed $n$-manifold $\Sigma_0$ with $H^1_{\text{dR}}(\Sigma_0) \neq \{ 0 \}$, we remove a ball from $\Sigma_0$ to construct a manifold $\Sigma_1$ with boundary which admits a closed one-form $\nu$ on $\Sigma_1$ with non-zero cohomology class and satisfying $\boundary^*\nu = 0$.

On the closed manifold $\Sigma_0$, there exists a smooth closed curve $\gamma : [0,1] \to \Sigma_0$ and a closed one-form $\eta$ such that $\int_\gamma \eta \neq 0$. Now, taking a point $x \in \Sigma_0 \setminus \gamma([0,1])$, consider a chart $(U,\varphi)$ about $x$ with $U \subset \Sigma_0 \setminus \gamma([0,1])$, $\varphi(x) = 0$, and $\varphi(U) = B_{\delta}$ for some $\delta > 0$ (where $B_r$ denotes an open ball of radius $r > 0$ in $\mathbb{R}^n$ centred at $0$). By the Poincar\'e Lemma, we have that $\inclusion^*\eta = dh$ for some $h \in C^{\infty}(U)$ where $\inclusion : U \subset \Sigma_0$ denotes the inclusion. Now, let $\rho$ be a smooth function in $B_{\delta}$ which is $1$ on $B_{\delta/3}$ and zero outside $B_{\delta/2}$. Now define $f : \Sigma_0 \to \mathbb{R}$ by
\begin{equation*}
f(x) = 
\begin{cases}
    \rho(\varphi(x))h(x), & x\in U\\
    0, & x\in \Sigma_0\setminus U
\end{cases}
\end{equation*}
Then, $f \in C^{\infty}(\Sigma_0)$ and $\eta|_V = df|_V$ where $V = \varphi^{-1}(B_{\delta/3})$. Let $\nu_0 = \eta - df$. By construction, $\nu_0$ is closed on $\Sigma_0$ and $\nu_0|_V= 0$ and hence $\nu_0|_{\overline{V}} = 0$.

Consider the manifold $\Sigma_1 = \Sigma_0 \setminus V$ with boundary $\partial \Sigma_1 = \overline{V} \setminus V$. Letting $\inclusion_{10} : \Sigma_1 \subset \Sigma_0$ denote the inclusion, we obtain a closed one-form $\nu = \inclusion_{10}^*\nu_0$ on $\Sigma_1$. We have that
\begin{equation*}
\int_\gamma \nu_0 = \int_\gamma \eta \neq 0    
\end{equation*}
and because $\gamma([0,1]) \subset \Sigma_1$, $\nu$ has non-trivial cohomology class. Moreover, because $\nu_0|_{\partial \Sigma_1} = 0$, it follows that $\nu$ satisfies the Dirichlet boundary condition on $\partial \Sigma_1$.

In particular, this construction applies to closed manifolds which are symplectic, and therefore one trivially obtains symplectic manifodls with boundary admitting Dirichlet closed one-forms with non-trivial cohomology class.
\end{example}

\section{Proof of \cref{lem:exactfamily}}
\label{sec:exact-is-exact}
To prove \cref{lem:exactfamily}, we first introduce a technical lemma about smoothness.
\begin{lemma}\label{lem:exactsmoothfamily}
Consider a star-shaped open subset $U$ with respect to the origin $0$ of either $\mathbb{R}^n$ or the upper half-plane $\mathbb{H}^n$ and a smooth family $\{\nu_t\}_{t \in S^1}$ of closed one-forms on $U$. Let $h : S^1 \to \mathbb{R}$ be a smooth function. Then, there exists a unique smooth family $H_t$ of functions on $U$ such that
\begin{equation*}
-dH_t = \nu_t, \qquad H_t(0) = h(t).
\end{equation*}
\end{lemma}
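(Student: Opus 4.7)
The plan is to give the standard Poincaré-lemma construction, with care taken to verify joint smoothness in $(t,x) \in S^1 \times U$. Define explicitly
\begin{equation*}
H_t(x) = h(t) - \int_0^1 \nu_t\bigl|_{sx}(x)\, ds.
\end{equation*}
For each fixed $x$, the segment $\{sx : s \in [0,1]\}$ lies in $U$ by the star-shape hypothesis (this works equally well in $\mathbb{H}^n$ since $0$ being on the boundary does not prevent the segment from staying in $U$), so the integrand is defined, and in coordinates $\nu_t = \sum_i a_i(t,y)\,dy^i$ gives $\nu_t|_{sx}(x) = \sum_i a_i(t,sx)\, x^i$, a polynomial in $x$ with coefficients that are smooth functions of $(s,t,x)$.

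The first step is uniqueness. If $H_t$ and $\tilde H_t$ both satisfy the requirements, then $d(H_t - \tilde H_t) = 0$, so the difference is locally constant; since $U$ is star-shaped it is connected, and the normalisation $(H_t - \tilde H_t)(0) = 0$ forces $H_t = \tilde H_t$.

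The second step is to verify $-dH_t = \nu_t$ for each fixed $t$. This is the classical cone-operator computation for the Poincaré lemma: differentiate under the integral sign in $x$, use $d\nu_t = 0$ to swap partial derivatives of the components $a_i(t,y)$, and identify the result with $\nu_t$ via the fundamental theorem of calculus applied to $\frac{d}{ds}(s a_i(t,sx) x^i)$. The boundary condition $H_t(0) = h(t)$ follows by direct evaluation at $x = 0$.

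The main technical point, and the only one requiring real care, is joint smoothness of $H : S^1 \times U \to \mathbb{R}$. Smoothness of the family $\{\nu_t\}_{t \in S^1}$ means that the coefficient functions $a_i : S^1 \times U \to \mathbb{R}$ are smooth in both arguments. Hence $(s,t,x) \mapsto a_i(t,sx) x^i$ is smooth on $[0,1] \times S^1 \times U$, and the parameterised integral $\int_0^1 (\cdots)\,ds$ is smooth in $(t,x)$ by standard differentiation under the integral sign (applicable because the integration is over the compact interval $[0,1]$ and all partial derivatives of the integrand in $(t,x)$ are themselves continuous in $(s,t,x)$). Together with smoothness of $h$, this gives smoothness of $H$, and setting $H_t(x) = H(t,x)$ exhibits the desired smooth family.
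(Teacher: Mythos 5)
Your proof is correct and follows essentially the same route as the paper: both define $H_t(x) = h(t) - \int_{\gamma_x}\nu_t$ with $\gamma_x(s) = sx$ (the Poincar\'e cone-operator formula) and deduce joint smoothness in $(t,x)$ from the explicit integral representation. The paper states this very tersely; you have simply filled in the uniqueness argument, the cone-operator computation of $-dH_t=\nu_t$, and the differentiation-under-the-integral justification of smoothness, all of which the paper leaves implicit.
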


\begin{proof}
Indeed, for each $t \in S^1$, by Poincar\'e Lemma, we consider the function $H_t : U \to \mathbb{R}$ given by
\begin{equation}\label{smoothformla}
H_t(x) = h(t) - \int_{\gamma_x} \nu_t
\end{equation}
where $\gamma_x : [0,1] \to U$ is given by $\gamma_x(t) = t x$ for $x \in U$. We note that $H_t$ is the unique function such that
\begin{equation*}
-dH_t = \nu_t, \qquad H_t(0) = h(t).    
\end{equation*}
However, by \cref{smoothformla}, the function $H : S^1 \times U \to \mathbb{R}$ given by
\begin{equation*}
H(t,x) = H_t(x)
\end{equation*}
is smooth.
\end{proof}

We are now ready to prove \cref{lem:exactfamily}.

\begin{proof}[Proof of \cref{lem:exactfamily}]
First, fix a point $x_0 \in \Sigma$. Then, for each $t \in S^1$, by exactness of $\nu_t$, there exists a unique $H_t \in C^\infty(\Sigma)$ such that
\begin{equation*}
dH_t = \nu_t, \qquad H_t(x_0) = 0.  
\end{equation*}
Consider the function $H : S^1 \times \Sigma \to \mathbb{R}$ such that
\begin{equation*}
H(t,x) = H_t(x).
\end{equation*}
Now, let $x \in \Sigma$ and let $\gamma : [0,1] \to \Sigma$ be a smooth path from $x_0$ to $x$. Then, we have that
\begin{equation*}
H(t,x) = \int_\gamma \nu_t
\end{equation*}
and therefore the function $h_x : S^1 \to \mathbb{R}$ given by
\begin{equation*}
h_x(t) = H_t(x) = H(t,x)
\end{equation*}
is smooth because the map $\nu : S^1 \times \Sigma \to T^*\Sigma$ given by $\nu(t,x) = \nu_t|_x$ is smooth. Thus, we may use a chart about $x$ to apply \cref{lem:exactsmoothfamily} to the setting of our proposition to establish smoothness of $H$. That is, $\{H_t\}_{t \in S^1}$ is a smooth family.
\end{proof}

\section{Necessary planarity of a factor in an embedded product}
\label{sec:planar-if-embedded}
\begin{proposition}
Let $\Sigma$ be a compact two-manifold with boundary. If $S^1\times\Sigma$ can be embedded in $\R^3$, then $\Sigma$ can be embedded in $\R^2$ (planar).
\end{proposition}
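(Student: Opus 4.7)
The plan is to reduce the statement to a homological obstruction: if $M := S^1\times\Sigma$ embeds in $\R^3$, then the genus of $\Sigma$ must vanish. First I would make two easy reductions. Since $M$ embeds as a codimension-zero submanifold of the orientable manifold $\R^3$, it is orientable; because a product is orientable iff each factor is, $\Sigma$ is orientable. Moreover, a closed $3$-manifold cannot embed in $\R^3$, so $\partial M \ne \emptyset$, forcing $\Sigma$ to have at least one boundary component. Thus $\Sigma \cong \Sigma_{g,b}$ for some $g \ge 0$ and $b \ge 1$, and planarity amounts to showing $g = 0$.

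Next I would compute $H_1(M; \Z) \cong \Z^{2g+b}$ via K\"unneth (since $\Sigma_{g,b}$ is homotopy equivalent to a wedge of $2g+b-1$ circles), and note that $\partial M = S^1 \times \partial\Sigma$ is a disjoint union of $b$ tori, so $H_1(\partial M; \Z) \cong \Z^{2b}$. Extending the embedding to $S^3$ via one-point compactification and setting $N := \overline{S^3 \setminus M}$, I would apply Mayer--Vietoris to $S^3 = M \cup N$ with intersection $\partial M$. Since $H_1(S^3) = H_2(S^3) = 0$, the long exact sequence collapses to an isomorphism $H_1(\partial M) \cong H_1(M) \oplus H_1(N)$, whence $\mathrm{rank}\, H_1(N) = 2b - (2g+b) = b - 2g$.

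Finally, I would invoke the classical ``half-lives-half-dies'' lemma: for any compact orientable $3$-manifold $N'$ with boundary, the inclusion-induced map $H_1(\partial N';\mathbb{Q}) \to H_1(N';\mathbb{Q})$ has image of dimension $\tfrac{1}{2}\dim H_1(\partial N';\mathbb{Q})$, a consequence of Poincar\'e--Lefschetz duality. Applied component-by-component to $N$, whose (finitely many) connected components $N_i$ have boundaries consisting of $b_i$ tori (with $\sum b_i = b$), this yields $\mathrm{rank}\, H_1(N_i) \ge b_i$ and hence $\mathrm{rank}\, H_1(N) \ge b$. Combining with the Mayer--Vietoris computation, $b - 2g \ge b$, which forces $g = 0$. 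A compact orientable genus-$0$ surface with $b \ge 1$ boundary components is diffeomorphic to a disk with $b-1$ open disks removed, and so manifestly embeds in $\R^2$.

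The main obstacle I anticipate is applying ``half-lives-half-dies'' carefully to the (a priori disconnected) complement $N$, and ensuring that the smooth embedding of $M$ in $S^3$ really does produce a genuine compact $3$-manifold $N$ with $\partial N = \partial M$ so that Mayer--Vietoris applies. Both are standard in the smooth category; the rest is routine bookkeeping with K\"unneth and the long exact sequence.
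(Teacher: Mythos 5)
Your argument is correct, but it follows a genuinely different route from the paper's. The paper applies its \cref{prop:boundarycohomology} (a Mayer--Vietoris argument for compact regular domains of $\R^m$) to the embedded image of $S^1\times\Sigma$, transfers the injectivity of $\boundary^*$ on first cohomology down to $\partial\Sigma\subset\Sigma$ via the product structure, and then closes $\Sigma$ up by capping each boundary circle with a disk: the resulting closed orientable surface has trivial first homology, so it is a sphere, and $\Sigma$ is recovered as a proper subset via stereographic projection. You instead invoke the classification of compact connected orientable surfaces with boundary to reduce planarity to $g=0$, compute $H_1(M)\cong\Z^{2g+b}$ and $H_1(\partial M)\cong\Z^{2b}$ via K\"unneth, use Mayer--Vietoris in $S^3$ to obtain $\operatorname{rank} H_1(N)=b-2g$ for the complement $N$, and then pit this against the half-lives-half-dies inequality $\operatorname{rank} H_1(N)\geq b$ applied component-by-component. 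Both arguments ultimately run through Poincar\'e--Lefschetz/Alexander duality, but where the paper caps off $\Sigma$ and appeals to the classification of closed surfaces at the end, you front-load the classification and finish with a pure rank count on the complement; your route avoids the explicit adjunction construction and is arguably more self-contained, at the cost of invoking the half-lives-half-dies lemma. Two small points to tidy: you tacitly assume $\Sigma$ is connected when you write $\Sigma\cong\Sigma_{g,b}$ (the paper explicitly reduces to this case first, which is easy and worth one line), and you should specify that the embedding is smooth so that $N=\overline{S^3\setminus M}$ is a compact $3$-manifold with $\partial N=\partial M$ — which you do flag, and which is standard.
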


\begin{proof}
First note that it suffices to consider the case where $\Sigma$ is connected. Indeed, assuming the result holds for this case, if $\Sigma$ is a compact two-manifold with boundary such that $S^1\times\Sigma$ can be embedded in $\R^3$, then, by compactness, there are only finitely many connected components of $\Sigma$. Let $\Sigma_1,...,\Sigma_k$ denote the components. Note that each connected component $\Sigma_i$ of $\Sigma$ is such that $S^1 \times \Sigma_i$ can be embedded in $\mathbb{R}^3$. Therefore $\Sigma_i$ can be embedded in $\mathbb{R}^2$. Since the $\Sigma_i$ are compact and mutually disjoint, it follows that $\Sigma = \cup_{i=1}^k \Sigma_i$ can also be embedded in $\R^2$.

From here, the proof is divided into two steps. Step 1 is showing that the inclusion map $\boundary:\partial \Sigma\to \Sigma$ is injective on first cohomology. Step 2 is to use the classification of surfaces (together with the necessary orientability of $\Sigma$) to conclude that $\Sigma$ must be planar.

\emph{Step 1.} The image $M\subset \R^3$ of an embedding is a compact regular domain. By \cref{prop:boundarycohomology}, the boundary inclusion $\partial M\to M$ is injective on first cohomology. Since $\partial( S^1\times\Sigma) = S^1\times\partial \Sigma$, we also have that $l: S^1\times \partial \Sigma\to S^1\times \Sigma$ is injective on first cohomology. 

We now show that $\boundary:\partial \Sigma\to \Sigma $ is injective on first cohomology. First, denote $\pi_\partial:S^1\times \partial \Sigma\to \partial \Sigma$. Note that $\boundary\circ \pi_\partial = \pi\circ l$, i.e. $\pi_\partial^*\boundary^* = l^*\pi^*$ as pullback maps on one-forms. Also recall that $\pi\circ \inclusion_o=\id $, i.e. $\inclusion_o^*\pi^*=\id$ for any $o\in S^1$.

Let $\nu\in\Omega^1(\Sigma)$ be a closed one-form such that $\boundary^*\nu$ is exact. Consider the closed one-form $\tau=\pi^*\nu \in\Omega^1(S^1\times \Sigma)$. The one-form
\begin{align*}
l^*\tau = l^*\pi^*\nu
= \pi_\partial^*\boundary^* \nu
\end{align*}
is thus exact. By injectivity of $l^*:H^1_{dR}(S^1\times \Sigma)\to H^1_{dR}(S^1\times \partial \Sigma)$, $\pi^*\nu=\tau$ is exact. Hence, $\nu=\inclusion_o^*\pi^*\nu$ is exact too. This proves injectivity of $\boundary : \partial \Sigma \subset \Sigma$ on cohomology.

\emph{Step 2.} We first note that $\Sigma$ is orientable and has non-empty boundary because $M \subset \R^3$ is orientable and has non-empty boundary. From Step 1, we know that $\boundary : \partial \Sigma \subset \Sigma$ is injective on cohomology. Now, each connected component of $\partial \Sigma$, as a closed connected one-manifold, is diffeomorphic to $S^1$. We form a two-manifold $S$ by patching each hole, detected by the connected components of $\partial \Sigma$, with a disk. More precisely, if there are $m$-connected components of $\partial \Sigma$, then we take a diffeomorphism $f : \partial \Sigma \to \sqcup_{i=1}^m \partial D_i$ where each $D_i$ is a disk, and form the adjunction $S = \Sigma \cup_{f} \sqcup_{i=1}^m D_i$, see \cite[Theorem 9.29]{Lee_2012}.

It follows that $S$ is a compact orientable two-manifold and is therefore a surface of genus-$g$. Because the components of $\partial \Sigma$ that are embedded in $S$ are contractible in $S$ and $\boundary$ is injective on first cohomology (surjective on first homology), it follows that the first homology of $S$ is trivial. Therefore, $g = 0$ and $S$ is a sphere. We have by construction that $\Sigma$ is a proper subset of $S$. That is, $\Sigma \subset S \setminus \{x\}$ for some $x \in S$. Thus, by stereographic projection, $\Sigma$ can be embedded in $\mathbb{R}^2$.
\end{proof}

\section{Proof of \cref{lem:equivalence-anglefunction}}
\label{sec:proof-equivalence-anglefunction}

In this section, a proof of \cref{lem:equivalence-anglefunction} is given. Different formulations of the lemma with varying hypothesis have been given in several previous works, with the first instance of the result stated by Birkhoff \cite[Ch.~V.10]{birkhoffDynamicalSystems1927}. See, for example, the work of \cite{schwartzmanGlobalCrossSections1962,Tischler_1970,simicCrosssectionsFlowsIntrinsically2023}. Here we state and prove it for compact manifolds with boundary. 

Specifically, it shown that, if $M$ is a compact orientable manifold with boundary $\partial M$ and $X$ is a vector field on $M$ tangent to $\partial M$ that admits a global Poincar\'e section $\Sigma$, then there exists an angle valued function $\phi:M\to S^1$ such that $d\phi(X) > 0$ and $\Sigma$ is a level-set of $\phi$. By a global Poincar\'e section of $X$ on a manifold with boundary $M$, we mean a compact connected embedded submanifold $\Sigma$ of $M$ such that $\partial \Sigma \subset \partial M$ and every orbit of $X$ intersects $\Sigma$.

The proof proceeds as follows. First, a Flowout Theorem for manifolds with boundary is proved in \cref{lem:flowout-thm}. The proof relies upon \cref{lem:extending-sections} which shows that $\Sigma$ can be extended to a submanifold $\tilde{\Sigma}$ without boundary of the boundaryless double $D(M)$, and $X$ can be extended to a $\tilde{X}$ on $D(M)$ such that $\tilde{X}$ is everywhere transverse to $\tilde{\Sigma}$. Using this extension, it is shown how the Flowout Theorem for manifolds without boundary \cite[Theorem 9.20]{Lee_2012} implies the Flowout Theorem for manifolds with boundary. The Flowout Theorem is then used to prove \cref{lem:first-return-time} which states that the classical notion of first return time $\tau(x) := \min\{ t> 0\, :\, \psi_t^X(x) \in \Sigma \}$ is well-defined and smooth for manifolds with boundary. In turn, \cref{lem:first-return-time} is used to show that the Poincar\'e map $f(x) := \psi^X_{\tau(x)}(x)$ is a diffeomorphism of $\Sigma$ in \cref{prop:return-is-diffeo}. Finally, the smoothness of $\tau(x), f(x)$ together with the concept of a mapping torus and special flow (see \cref{sec:MappingToriMCG}) are used to prove the existence of the desired circle-valued function $\phi:M\to S^1$.

\begin{lemma}\label{lem:extending-sections}
Let $M$ be a manifold (without boundary) and let $\Sigma$ be a compact embedded manifold of codimension $1$ that has boundary in $M$. Let $X$ be a vector field on $M$ which is transverse to $\Sigma$. Then, there exists an embedded manifold $\tilde{\Sigma}$ without boundary of $M$ and of codimension $1$ such that $\Sigma \subset \tilde{\Sigma}$ and $X$ is transverse to $\tilde{\Sigma}$.
\end{lemma}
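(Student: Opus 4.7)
The plan is to extend $\Sigma$ smoothly just past its boundary $\partial\Sigma$ inside $M$, by flowing $\partial\Sigma$ along a carefully chosen vector field. Once this extension is obtained, transversality of $X$ to the extension will follow from openness of transversality together with compactness of $\partial\Sigma$.

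To build the extension, I would first pick a collar $c : \partial\Sigma \times [0, 2\epsilon_0) \hookrightarrow \Sigma$ of the boundary in $\Sigma$, and consider the vector field $W = dc(\partial_s)$ on its image, which is tangent to $\Sigma$ and inward-pointing along $\partial\Sigma$. By a partition-of-unity argument, $W$ extends to a smooth vector field $\tilde W$ on an open neighborhood of $\partial\Sigma$ in $M$, with $\tilde W$ agreeing with $W$ on $c(\partial\Sigma \times [0, \epsilon_0))$. For $0 < \epsilon < \epsilon_0$ small enough, the map
\begin{equation*}
\tilde c : \partial\Sigma \times (-\epsilon, \epsilon) \to M, \qquad (q,s) \mapsto \psi_s^{\tilde W}(q),
\end{equation*}
is a smooth embedding: for $s \ge 0$ it coincides with $c$ and therefore lies in $\Sigma$, while for $s < 0$ it flows out into $M\setminus \Sigma$. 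Setting
\begin{equation*}
\tilde\Sigma := \Sigma \cup \tilde c\bigl(\partial\Sigma \times (-\epsilon, 0]\bigr),
\end{equation*}
one obtains a boundaryless codimension-$1$ embedded submanifold of $M$ containing $\Sigma$, with a smooth atlas inherited from $\Sigma$ away from $\partial\Sigma$ and from $\tilde c$ in a neighborhood of $\partial\Sigma$.

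Transversality of $X$ to $\tilde\Sigma$ is immediate on $\Sigma$, and in particular along $\partial\Sigma$, since $T_q\tilde\Sigma = T_q\Sigma$ for every $q \in \Sigma$ and $X$ is transverse to $\Sigma$ by hypothesis. By continuity of $X$ and of the tangent spaces of $\tilde\Sigma$, and by compactness of $\partial\Sigma$, one further shrink of $\epsilon$ guarantees transversality throughout the outward collar $\tilde c(\partial\Sigma \times (-\epsilon, 0))$.

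The main technical delicacy is ensuring that $\tilde\Sigma$ is genuinely smooth across $\partial\Sigma$, not merely $C^0$ or $C^1$. This is what drives the requirement that $\tilde W$ agree with the collar field $W$ not only on $\partial\Sigma$ but on a neighborhood of $\partial\Sigma$ in $\Sigma$: with that choice, the two branches of $\tilde c$ (for $s \ge 0$ and $s \le 0$) join as the two sides of a single flow, and $\tilde\Sigma$ acquires the structure of an honest boundaryless embedded submanifold.
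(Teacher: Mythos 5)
Your proof is correct, and it takes a genuinely different route from the paper's. The paper builds $\tilde\Sigma$ \emph{implicitly}: it covers $\Sigma$ by finitely many ``proper'' slice charts $(U_i,\varphi_i)$ (slice charts with $d_*\varphi_i^n(X)>0$), patches the last slice coordinates together with a partition of unity into a global function $F$ that vanishes on $\Sigma$ and satisfies $dF(X)>0$ along $\Sigma$, and then takes $\tilde\Sigma = f^{-1}(0)$ where $f=F|_V$ on the open set $V=\{dF(X)>0\}$, invoking the Preimage Theorem. Transversality of $X$ to $\tilde\Sigma$ is then automatic, since $V$ was carved out precisely so that $dF(X)>0$. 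Your construction is \emph{explicit and geometric}: you extend the collar field past $\partial\Sigma$ into the ambient manifold and flow the boundary backward, gluing an outward one-sided collar onto $\Sigma$; transversality is then recovered a posteriori by openness and compactness. The paper's approach buys a global function $F$ whose zero set contains $\Sigma$ (potentially useful elsewhere, though not exploited), while yours buys a concrete description of $\tilde\Sigma$ as $\Sigma$ plus a short outward flowout of its boundary. A few technicalities in your write-up deserve explicit handling but are routine: the Extension Lemma for vector fields is stated for closed subsets, so one should extend $W$ from a closed sub-collar $c(\partial\Sigma\times[0,\epsilon_0/2])$ rather than the open one; one must shrink $\epsilon$ to ensure both that $\tilde c$ is injective and that $\tilde c(\partial\Sigma\times(-\epsilon,0))$ misses the closed set $\Sigma$; and the flow $\psi_s^{\tilde W}$ should be arranged to exist on $(-\epsilon,\epsilon)\times\partial\Sigma$, e.g.\ by cutting off $\tilde W$ to be compactly supported. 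None of these affects the validity of the argument.
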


\begin{proof}
Let the dimension of $M$ be $n+1$ so that $\Sigma$ has dimension $n$. As shown in \cite[Theorem 5.51]{Lee_2012}, for each $x \in \Sigma$, there exists a so called \emph{slice chart} $(U,\varphi)$ with $x \in U$ such that, if $x \in \Int \Sigma$, 
\begin{equation*}
    \Sigma \cap U = \{x \in U : \varphi^{n}(x) = 0\},   
\end{equation*}
and if $x \in \partial \Sigma$,
\begin{equation*}
\Sigma \cap U = \{x \in U : \varphi^{n}(x) = 0, \varphi^{n+1}(x) \geq 0\}.   
\end{equation*}
For any function $f : U \to \R$ where $U$ is an open subset of $M$, denote by $d_* f$ the local 1-form $U \to T^*M$ such that $(d_* f)|_x = (di|_x^{-1})^*df|_x$ where $i : U \subset M$ is the inclusion and $di|_x : T_x U \to T_x M$ is the associated tangent map (which is an isomorphism for slice charts).

We will say a slice chart $(U,\varphi)$ at $x \in \Sigma$ is \emph{proper} if $\overline{U}$ is a compact subset of $M$ and $d_*{\varphi}^n (X|_{U}) > 0$. We now show that a proper slice chart exists for all $x\in \Sigma$.
Indeed, let $(U,\varphi)$ be any slice chart at $x$. Consider the local 1-form $d_*\varphi^n : U \to T^*M$. By transversality of $X$, we have that $d_*\varphi^n(X|_U)$ has no zeros. In particular, for any open ball $B \subset \R^{n+1}$ such that $\bar{B} \subset \varphi(U)$ and $\varphi(x)\in B$ it holds that either $d\varphi^n(X|_{\varphi^{-1}(B)}) > 0$ or $d_*\varphi^n(X|_{\varphi^{-1}(B)}) < 0$. That is, $d_*\sigma \varphi^n(X|_V) > 0$ for some $\sigma \in \{\pm 1\}$. Then, consider the chart $(U'=\varphi^{-1}(B),\varphi')$ where ${\varphi'}^i = \varphi^i|_{U'}$ for $i \in \{1,...,n+1\} \setminus \{n\}$ and ${\varphi'}^n = \sigma \varphi^n|_{U'}$. We have that $\overline{U'}$ is a compact subset of $M$ and that the local 1-form $d_*{\varphi'}^n : U' \to T^*M$ is such that $d_*{\varphi'}^n (X|_{U'}) > 0$, that is, $(U',\varphi')$ is a proper slice chart at $x$.

The compactness of $\Sigma$ guarantees that there exists $N \in \mathbb{N}$ and proper slice-charts $\{(U_i,\varphi_i)\}_{i=1}^N$ such that $\{U_1,...,U_N\}$ forms an open cover of $\Sigma$. Now, $\{U_1,...,U_N\}$ trivially forms an open cover for the open submanifold $S = \cup_{i=1}^N U_i \subset M$ and so there exists a partition of unity $\{\rho_i : S \to \R\}_{i = 1}^N$ subordinate to this open cover. Now, for $i \in \{1,...,N\}$, we have that $\supp \rho_i \subset U_i$. Therefore, the function $F_i : M \to \R$ given by
\begin{equation*}
F_i(y) = 
\begin{cases}
\rho_i(y)\varphi_i^n(y) \text{ if } y \in U_i\\
0 \text{ otherwise},
\end{cases}
\end{equation*}
is smooth and compactly supported.

Now let $F = \sum_i F_i$ and observe that, if $y \in \Sigma$ then $F(y) = 0$. Moreover, because $dF_j|_y = 0$ for all $j \in \{1,...,N\}$ such that $y \notin \supp \rho_j$, we can write
\[ dF|_y (X|_y) = \sum_{\{i : y\in\supp\rho_i\}} \varphi^n_i(y) (d_* \rho_i)|_y(X|_y) + \rho_i(y) d_* \varphi_i^n(X|_y) = \sum_{\{i : y\in\supp\rho_i\}} \rho_i(y) d_* \varphi_i^n(X|_y) > 0,\]
because for all $i \in \{1,...,N\}$, $\varphi_i^n(y) = 0$, $d_* \varphi_i^n(X|_y) > 0$ by definition of a proper slice chart, and $\rho_i(y) > 0$ if $y\in \supp \rho_i$. 

Now, let $V = \{x \in M : dF(X)|_x > 0\}$. Then, $V$ is an open subset of $M$. The function $f = F|_V$ has no critical points. Therefore $f^{-1}(0)$ is a regular level set of $f$. However, we have from the fact that $dF|_\Sigma(X|_\Sigma) > 0$ that $\Sigma \subset V$. Therefore, $\Sigma \subset f^{-1}(0)$. The result now follows from the Preimage Theorem by setting $\tilde{\Sigma} = f^{-1}(0)$.
\end{proof}

Using \cref{lem:extending-sections}, we will now prove a flowout theorem for manifolds with boundary. 

\begin{lemma}\label{lem:flowout-thm}
Let $M$ be a compact manifold with boundary. Let $\Sigma$ be a compact codimension 1 submanifold with boundary of $M$ with $\partial \Sigma \subset \partial M$. Let $X$ be a vector field on $M$ that is tangent to $\partial M$, transverse to $\Sigma$, and let the global flow of $X$ be denoted by $\psi^X$. Then, there exists $\delta > 0$ such that, for any $t \in \R$, the flow $\psi^X$ restricted to $(t-\delta,t+\delta) \times \Sigma$ is a diffeomorphism onto its open image.
\end{lemma}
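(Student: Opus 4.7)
The plan is to reduce the claim to the classical Flowout Theorem \cite[Theorem 9.20]{Lee_2012} for boundaryless manifolds via doubling. Form the boundaryless double $D(M) = M \cup_{\partial M} M$; because $X$ is tangent to $\partial M$, it admits a smooth reflection-symmetric extension $\tilde{X}$ on $D(M)$ that remains tangent to $\partial M$, so its flow preserves each half of $D(M)$ and agrees with that of $X$ on $M$. Since $\partial \Sigma \subset \partial M$, the set $\Sigma$ is a compact codimension-1 submanifold with boundary of the boundaryless $D(M)$, to which $\tilde{X}$ is transverse. Applying the preceding \cref{lem:extending-sections} yields a boundaryless codimension-1 embedded submanifold $\tilde{\Sigma} \supset \Sigma$ of $D(M)$ to which $\tilde{X}$ is still transverse. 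The classical Flowout Theorem then asserts that at each $x \in \tilde{\Sigma}$ there is a neighbourhood $U_x \subset \tilde{\Sigma}$ of $x$ and some $\delta_x > 0$ such that $(s, y) \mapsto \psi^{\tilde{X}}_s(y)$ is a diffeomorphism from $(-\delta_x, \delta_x) \times U_x$ onto an open subset of $D(M)$.

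Using compactness of $\Sigma$, I would extract a finite subcover $\{U_{x_1},\ldots,U_{x_N}\}$ of $\Sigma$ and set $\delta_1 = \tfrac{1}{2}\min_i \delta_{x_i}$; on $(-\delta_1,\delta_1) \times \Sigma$ the flow map is then a smooth \emph{local} diffeomorphism into $D(M)$. The main obstacle is upgrading this to global injectivity on $(-\delta,\delta) \times \Sigma$ for some uniform $\delta \in (0,\delta_1]$. I would argue by contradiction: were no such $\delta$ to exist, one could find sequences $(s_n,x_n) \neq (t_n,y_n)$ in $(-1/n,1/n)\times\Sigma$ with $\psi^{\tilde{X}}_{s_n}(x_n) = \psi^{\tilde{X}}_{t_n}(y_n)$. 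Compactness of $\Sigma$ gives convergent subsequences $x_n \to x$, $y_n \to y$ with $s_n, t_n \to 0$, and continuity of $\psi^{\tilde{X}}$ forces $x = y$. For large $n$, both pairs then lie in a common local flowout chart $(-\delta_{x_i},\delta_{x_i}) \times U_{x_i}$ on which the flow map is known to be injective --- contradiction.

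The case of arbitrary $t \in \R$ follows at once by pre-composing with a time-shift and post-composing with the diffeomorphism $\psi^{\tilde{X}}_t$ of $D(M)$, which turns the $t=0$ flowout into a diffeomorphism on $(t-\delta,t+\delta) \times \Sigma$. Finally, because $\tilde{X}$ preserves $M \subset D(M)$, the restriction lands in $M$ and coincides with the flow of $X$ there; the image is open in $M$ by invariance of domain for manifolds with boundary, yielding the desired conclusion.
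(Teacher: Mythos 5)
Your overall route is the same as the paper's: pass to the boundaryless double $D(M)$, extend $X$, extend $\Sigma$ using \cref{lem:extending-sections}, apply the classical Flowout Theorem, and finish with compactness plus a time-shift. There are, however, two points worth flagging.

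First, the claim that ``because $X$ is tangent to $\partial M$, it admits a smooth reflection-symmetric extension $\tilde{X}$ on $D(M)$'' is not correct in general. In collar coordinates $(t,y)$ with $t \ge 0$, writing $X = X^0\partial_t + X^i\partial_{y^i}$, tangency to $\partial M$ only gives $X^0(0,y)=0$; the odd extension $\tilde{X}^0(-t,y) = -X^0(t,y)$ is smooth only if all even-order $t$-derivatives of $X^0$ vanish at $t=0$, and likewise the even extensions of the $X^i$ require vanishing of all odd-order derivatives. These are not consequences of tangency. The paper avoids this by invoking the Extension Lemma for Vector Fields \cite[Lemma 8.6]{Lee_2012}, which supplies \emph{some} smooth $\tilde{X}$ on $D(M)$ with $\tilde{X}\circ i_1 = di_1 \circ X$. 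That suffices: since $\tilde{X}|_{\partial M} = X|_{\partial M}$ is tangent to $\partial M$, the hypersurface $\partial M$ is invariant for $\psi^{\tilde{X}}$, and hence by uniqueness of solutions the flow of $\tilde{X}$ preserves each half of $D(M)$ and restricts on $i_1(M)$ to the flow of $X$. In other words, you do not need reflection symmetry to obtain the preservation property you want, and you should not rely on it.

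Second, you state the classical Flowout Theorem only in its local form and then rebuild global injectivity on $(-\delta,\delta)\times\Sigma$ via a compactness-and-contradiction argument. That argument is fine, but it is redundant: \cite[Theorem 9.20]{Lee_2012} already produces a smooth positive function $\tilde{\delta}$ on $\tilde{\Sigma}$ such that the flow restricted to $\{(t,p): |t| < \tilde{\delta}(p)\}$ is a diffeomorphism onto an open subset of $D(M)$; compactness of $\Sigma$ then yields a uniform $\delta_0>0$ directly by taking the minimum of $\tilde\delta\circ i_1$, which is how the paper proceeds. Once these two points are fixed, the rest of your argument (the time-shift reduction for arbitrary $t$, and recognising the image in $M$ as an open set) matches the paper's proof.
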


In particular, this makes $\Sigma$ transverse to the flow $\psi^X : \R \times M \to M$ in the topological sense of \cite{schwartzmanGlobalCrossSections1962}.

\begin{proof}
First, consider the boundaryless double $D(M)$ of $M$ (for an introduction to the boundaryless double, see \cite[Example 9.32]{Lee_2012}. We have that $D(M)$ is a manifold without boundary and that there is an embedding $i_1 : M \to D(M)$ whereby $i_1(M)$ is a compact regular domain of $D(M)$. By the Extension Lemma for Vector Fields (see \cite[Lemma 8.6]{Lee_2012}), there exists a vector field $\tilde{X}$ extending $X$ to $D(M)$, in the sense that $\tilde{X} \circ i_1 = di_1 \circ X$. 

Now, $i_1(\Sigma)$ is a compact embedded submanifold with boundary in $M$ of codimension 1 and $\tilde{X}$ is a vector field on $D(M)$ that is transverse to $i_1(\Sigma)$. Hence, by Lemma \ref{lem:extending-sections}, there exists an embedded submanifold $\tilde{\Sigma}$ of dimension $n$ in $D(M)$ whereby $i_1(\Sigma) \subset \tilde{\Sigma}$ and $\tilde{X}$ is transverse to $\tilde{\Sigma}$. Then, by the (classical) Flowout Theorem \cite[Theorem 9.20]{Lee_2012}, there exists a smooth positive function $\tilde{\delta} : \tilde{\Sigma} \to \R$ such that the restriction of the global flow $\psi^{\tilde{X}} : \R \times D(M) \to D(M)$ to 
\begin{equation*}
    \tilde{U} = \{(t,x) \in \R \times D(M) : |t| < \tilde{\delta}(x)\}    
\end{equation*}
is a diffeomorphism onto its open image. Now, consider the smooth positive function $\delta = \tilde{\delta} \circ i_1$ and the open set
\begin{equation*}
U = \{(t,x) \in \R \times M : |t| < \delta(x)\}.    
\end{equation*}
Observe that the flows $\psi^X$ and $\psi^{\tilde{X}}$ are related by the fact that, for $x \in M$ and $t \in \R$,
\begin{equation*}
\psi^{\tilde{X}}(i_1(x),t) = \psi^{X}(x,t).
\end{equation*}
In particular,
\begin{equation*}
i_1(\psi^{X}(U)) = \psi^{\tilde{X}}(\tilde{U}) \cap i_1(M),  
\end{equation*}
so that $\psi^{X}(U)$ is an open subset of $M$. It follows that $\psi^{X}$ restricted to $U$ is a diffeomorphism onto its open image. Moreover, because $\Sigma$ is compact, we have for some $\delta_0 > 0$ that $(-\delta_0,\delta_0) \times M \subset U$. Thus, $\psi^X$ restricted to $(-\delta_0,\delta_0) \times M$ is a diffeomorphism onto its open image. Lastly, for any $t \in \R$, we have
\begin{equation*}
    \psi^X|_{(t-\delta_0,t+\delta_0) \times M} = \psi_t^X \circ \psi^X|_{(-\delta_0,\delta_0) \times M} \circ T_t,
\end{equation*}
where $T_{-t} : (t-\delta_0,t+\delta_0) \times M \to (-\delta_0,\delta_0) \times M$ is the map given by $(s,x) \mapsto (s-t,x)$. Therefore, $\psi^X|_{(t-\delta_0,t+\delta_0) \times M}$ is also a diffeomorphism onto its open image.
\end{proof}

Using the flowout theorem for manifolds with boundary, we can establish the classical notion of the first return time for manifolds with boundary.

\begin{lemma}\label{lem:first-return-time}
Let $M$ be a compact manifold with boundary. Suppose that $X$ is a vector field tangent to $M$ and that $X$ admits a Poincar\'e section $\Sigma$. Then, the \emph{first return time} $\tau:\Sigma\to \Sigma$ given by
\[ \tau(x) := \min\left\{t>0\,:\,\psi^X_t(x) \in \Sigma\right\} \]
exists for all $x\in \Sigma$ and is smooth.
\end{lemma}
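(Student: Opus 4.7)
The plan is to combine the flowout theorem just established with a compactness/$\omega$-limit argument to show both well-definedness and smoothness.

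First I would establish that $T(x):=\{t>0:\psi_t^X(x)\in\Sigma\}$ is bounded away from $0$ uniformly in $x\in\Sigma$. By \cref{lem:flowout-thm}, there exists $\delta>0$ such that $\psi^X$ restricted to $(-\delta,\delta)\times\Sigma$ is a diffeomorphism onto its open image. Injectivity of this map forbids any $t\in(0,\delta)$ with $\psi_t^X(x)\in\Sigma$ for $x\in\Sigma$, since otherwise $(t,x)$ and $(0,\psi_t^X(x))$ would be two distinct preimages of the same point. Hence $T(x)\subset[\delta,\infty)$.

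Next I would show $T(x)\neq\emptyset$ for every $x\in\Sigma$ using the global Poincaré-section hypothesis together with compactness of $M$. The forward orbit of $x$ has non-empty $\omega$-limit set $\omega(x)\subset M$; pick $y\in\omega(x)$. The hypothesis says the full orbit of $y$ meets $\Sigma$, so $\psi_s^X(y)=z\in\Sigma$ for some $s\in\R$. The flowout image $W:=\psi^X((-\delta,\delta)\times\Sigma)$ is an open neighbourhood of $z$, and by choosing $t_n\to\infty$ with $\psi_{t_n}^X(x)\to y$ we obtain $\psi_{t_n+s}^X(x)\to z$, so eventually $\psi_{t_n+s}^X(x)\in W$, which means $\psi_{t_n+s-\tau'}^X(x)\in\Sigma$ for some $\tau'\in(-\delta,\delta)$. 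For large $n$ this time is positive, so $T(x)\neq\emptyset$. Since $T(x)$ is closed in $[\delta,\infty)$ (as the preimage of the closed set $\Sigma$ under the continuous map $t\mapsto\psi_t^X(x)$) and bounded below by $\delta$, it admits a minimum, and $\tau(x)$ is well-defined.

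Finally I would address smoothness. Fix $x_0\in\Sigma$ with $t_0:=\tau(x_0)$, and apply \cref{lem:flowout-thm} near $t=0$ to obtain a smooth ``flow-box'' projection: every point $p$ in the open neighbourhood $U_0=\psi^X((-\delta,\delta)\times\Sigma)$ of $\Sigma$ can be written uniquely as $p=\psi_{\tau_0(p)}^X(\pi_0(p))$ with $\tau_0:U_0\to(-\delta,\delta)$ and $\pi_0:U_0\to\Sigma$ smooth. Since $\psi_{t_0}^X(x_0)\in\Sigma\subset U_0$, continuity of the flow gives a neighbourhood $V$ of $x_0$ in $\Sigma$ with $\psi_{t_0}^X(V)\subset U_0$, and the function
\begin{equation*}
\tilde\tau(x):=t_0-\tau_0(\psi_{t_0}^X(x))
\end{equation*}
is smooth on $V$ with $\psi_{\tilde\tau(x)}^X(x)\in\Sigma$ and $\tilde\tau(x_0)=t_0$. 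The main obstacle is showing $\tilde\tau(x)=\tau(x)$ on a possibly smaller neighbourhood; that is, that no earlier return exists. For this I would use that $\psi^X([\delta,t_0-\delta/2]\times\{x_0\})$ is a compact subset of the open set $M\setminus\Sigma$, so by continuity it remains in $M\setminus\Sigma$ after perturbing $x_0$ slightly, while the uniform flowout bound $\delta$ rules out returns in $(0,\delta)$ and the flow-box around $t_0$ gives a single return near $t_0$. Combining these three intervals covers $[0,\tilde\tau(x)]$ and forces $\tau(x)=\tilde\tau(x)$, establishing local (hence global) smoothness of $\tau$.
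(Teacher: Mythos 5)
Your proof is correct, and the well-definedness half (flowout theorem to bound return times away from zero, $\omega$-limit set argument for non-emptiness, closedness of the set of return times to extract a minimum) follows essentially the same route as the paper's. The smoothness half takes a genuinely different path. The paper first proves a sequential-continuity statement for $\tau$ (its ``Claim 2'': along any sequence $y_n \to x$ some subsequence satisfies $\tau(y_{n_k}) \to \tau(x)$) and then runs an indirect argument, combining the flow-box function, the uniform spacing $\geq \delta$ between consecutive return times, and the extracted subsequence to derive a contradiction if the two functions were to disagree arbitrarily close to $x$. You instead rule out earlier returns directly: the compact orbit segment $\psi^X([\delta, t_0 - \delta/2] \times \{x_0\})$ sits inside the open set $M\setminus\Sigma$, so by a tube-lemma argument it stays there for nearby initial conditions; together with no returns in $(0,\delta)$ and the injectivity of the flow-box near $t_0$ (which pins down a unique return $\tilde\tau(x)\in(t_0-\delta,t_0+\delta)$), the three intervals cover $(0,\tilde\tau(x)]$, forcing $\tau = \tilde\tau$ on a neighbourhood. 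Your approach buys a more direct and geometric argument that sidesteps the subsequence-extraction step entirely and is closer in spirit to standard treatments of first-return maps; the paper's two-claim structure isolates a continuity statement that could be reused, but is longer for the purpose at hand. One small point worth tightening in a write-up: when you assert ``the flow-box around $t_0$ gives a single return near $t_0$,'' the clean justification is that if $\psi^X_{s_1}(x), \psi^X_{s_2}(x)\in\Sigma$ with $s_1,s_2\in(t_0-\delta,t_0+\delta)$, then $\psi^X_{t_0}(x)=\psi^X_{t_0-s_i}(\psi^X_{s_i}(x))$ gives two flow-box representations of the same point, and injectivity of $\psi^X$ on $(-\delta,\delta)\times\Sigma$ forces $s_1=s_2$.
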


\begin{proof}
Throughout, we will use the fact proved in Lemma \ref{lem:flowout-thm}, that there exists $\delta > 0$ such that, for every $t \in \R$, $\psi^X$ restricted to $(t-\delta,t+\delta) \times \Sigma$ is a diffeomorphism onto its open image. Fix such a $\delta > 0$.

We begin by showing that for all $x\in\Sigma$ the first-return time $\tau(x)$ is well-defined. This will follow provided that the set of positive return times
\[ T_x^+ := \left\{ t>0\,:\, \psi^X_t(x) \in \Sigma\right\} \]
for $x \in \Sigma$ is non-empty and satisfies a uniform discreteness condition (see Equation \eqref{eq:uniform-discreteness}). In fact, $T_x^+$ is non-empty for all $x\in M$. To see this, observe, as in \cite{schwartzmanGlobalCrossSections1962}, that the compactness of $M$ guarantees that the $\omega$-limit set of any point $x \in M$ is non-empty. In particular, there exists an increasing sequence $\{t_n\}_{n \in \mathbb{N}}$ of times with $t_n \to \infty$ such that $\psi^X_{t_n}(x) \to y$ for some $y \in M$. Now, we have that $\psi^X_t(y) \in \Sigma$ for some $t \in \R$ and so $\psi^X_{t_n+t}(x) \to \psi^X_t(y)$ by continuity of $\psi^X_t$. On the other hand, $U = \psi^X((-\delta,\delta) \times \Sigma)$ is a neighbourhood of $\psi^X_t(y) \in \Sigma$. Hence, there exists $N\in \mathbb{N}$ such that $t_N+t-\delta > 0$ and $\psi^X_{t_N+t}(x) \in U$. However, because $\psi^X|_{(-\delta,\delta)\times \Sigma}$ is a diffeomorphism onto its open image $U$, there exists $s \in (-\delta,\delta)$ such that $\psi^X_s(\psi^X_{t_N+t}(x)) \in \Sigma$. By choice of $N$, $s+t_N+t > 0$. That is, for all $x\in M$, the set $T_x^+$ is non-empty.

We now show for $x \in \Sigma$ that $T_x^+$ satisfies a uniform discreteness condition. In fact, this is true for the set of all return times
\begin{equation*}
T_x = \{t \in \R : \psi^X_t(x) \in \Sigma\}.
\end{equation*}
Indeed, let $x \in \Sigma$ and $t \in T_x$. Then, $\psi^X_t(x) \in \Sigma$. On the other hand, $\psi^X$ restricted to $(-\delta,\delta) \times \Sigma$ is a diffeomorphism onto its open image. In particular, because $\psi^X(\{0\} \times \Sigma) = \Sigma$ and $\psi^X|_{(-\delta,\delta) \times \Sigma}$ is injective, for any $s \in (-\delta,\delta) \setminus \{0\}$ and $y \in \Sigma$, we have that $\psi^X_s(y) \notin \Sigma$. Hence, for any $s \in (-\delta,\delta) \setminus \{0\}$, it follows that
\begin{equation*}
\psi^X_{s+t}(x) = \psi^X_s(\psi^X_t(x)) \notin \Sigma,
\end{equation*}
and therefore $s+t \notin T_x$. It follows that, for all $x \in \Sigma$ and $t \in T_x$,
\begin{equation}\label{eq:uniform-discreteness}
T_x \cap (t-\delta,t+\delta) = \{t\}.
\end{equation}

From the fact that $T_x^+$ is non-empty and the fact that Equation \eqref{eq:uniform-discreteness} holds for all $x\in \Sigma$, we can conclude that the minimum of $T_x^+$  exists and hence $\tau(x)$ is well-defined. Moreover, an immediate consequence of Equation \eqref{eq:uniform-discreteness} that $\tau(x)$ satisfies the inequality
\begin{equation}\label{eq:bounding-plus-T}
\tau(x) \geq \delta. 
\end{equation}

We will now show that $\tau$ is smooth in an indirect manner. First, it is shown that for each $x\in \Sigma$ and $t\in T_x^+$ that there exists a neighbourhood $V$ of $x$ in $\Sigma$ and a smooth function $T : V \to (t-\delta,t+\delta)$ such that $T(x) = t$ and $T(y) \in T_y^+$ for all $y \in V$. The function $T$ gives a `local' return time function for all $y\in V$. Using this local $T$, it is shown that $\tau$ satisfies a continuity condition. Finally, the minimality $\tau$ is then used to show that, for all $x \in \Sigma$, $\tau$ and the local return time function $T$ agree on a neighbourhood $W$ of $x$ in $\Sigma$. Consequently, $\tau$ is smooth.

\textbf{Claim 1.} For all $x \in \Sigma$ and $t \in T_x^+$ there exists a neighbourhood $V$ of $x$ in $\Sigma$ and a smooth function $T : V \to (t-\delta,t+\delta)$ such that $T(x) = t$ and $T(y) \in T_y^+$ for all $y \in V$. 

Consider
\begin{equation*}
U = \psi^X((-\delta,\delta) \times \Sigma),   
\end{equation*}
and observe that there is a smooth function $f:U \to \R$ such that, for all $x'\in U$, 
\begin{equation}\label{eq:projection-onto-sigma-property}
\psi^X_{f(x')}(x') \in \Sigma.     
\end{equation}
Indeed, because $\psi^X|_{(-\delta,\delta) \times \Sigma}$ is a diffeomorphism onto its open image $U$, for any $x'\in U$ there is a unique point $(t_{x'},y_{x'})\in (-\delta,\delta) \times \Sigma$ such that $\Psi^X_{t_{x'}}(y_{x'}) = x'$. Then, we can define a smooth function $f : U \to \R$ by $f(x') = -t_{x'}$. It follows that 
\begin{equation*}
\psi^X_{f(x')}(x') = \psi^X_{-t_{x'}}(\psi^X_{t_{x'}}(y_{x'})) = y_{x'} \in \Sigma.     
\end{equation*}
Now, let $x\in \Sigma$ and $t\in T_x^+$. We will show that $V = \Sigma \cap \psi^X_{-t}(U)$ is the desired neighbourhood of $x$ and  
\[T(y) = f(\psi^X_t(y)) + t\] gives the desired smooth function $T:V\to (t-\delta,t+\delta)$ satisfying $T(x) = t$ and $T(y) \in T^+_y$ for all $y\in V$. First, as $\psi^X_t(x) \in \Sigma$, then $\psi^X_t(x) \in U$ and in turn $x \in \psi^X_{-t}(U)$. Furthermore, $x \in \Sigma$ and so $x \in \Sigma \cap \psi_{-t}^X(U)$. It follows that $\Sigma \cap \psi_{-t}^X(U)$ is a neighbourhood of $x$ in $\Sigma$. Now, $\psi_{t}^X(x) \in \Sigma$ and therefore $f(\psi_{t}^X(x)) = 0$. Thus, $T(x) = t$. 
Moreover, for each $y \in \Sigma \cap \psi_{-t}^X(U)$, we have that $y \in \Sigma$ and $\psi_t^X(y) \in W$. Hence, by Equation \eqref{eq:projection-onto-sigma-property},
\begin{equation*}
\psi^X_{T(y)}(y) = \psi^X_{f(\psi_{t}^X(y))+t}(y) = \psi^X_{f(\psi_{t}^X(y))}(\psi_t^X(y)) \in \Sigma.    
\end{equation*}
Thus $T(y) \in T_y$. Finally, $T(y) \in T_y^+$ because $t \in T^+_x$ and Equation \eqref{eq:bounding-plus-T} guarantees that $t \geq \delta$, so that
\begin{equation*}
T(y) > t-\delta \geq \delta-\delta = 0.    
\end{equation*}
In conclusion, $V = \Sigma\cap \psi^X_{-t}(W)$ and $T(y) = f(\psi_t^X(y)) + t$ have the desired properties so that claim 1 holds. 

\textbf{Claim 2.} The first-return time $\tau$ satisfies the continuity condition that, for all $x \in \Sigma$ and for any sequence $\{y_n\}_{n \in \mathbb{N}}$ with $y_n \to x$, there exists a subsequence $\{y_{n_k}\}_{k \in \mathbb{N}}$ of $\{y_n\}_{n \in \mathbb{N}}$ such that $\tau(y_{n_k}) \to \tau(x)$.

Indeed, for $x\in \Sigma$, we have that $\tau(x)\in T_x^+$ and so we may take the neighbourhood $V$ of $x$ in $\Sigma$ and function $T : V \to (\tau(x)-\delta,\tau(x)+\delta)$ as given by Claim 1. Without loss of generality, let $\{y_n\}_{n \in \mathbb{N}}$ be a sequence in $V$ with $y_n \to x$. For all $y \in V$, because $T(y),\tau(y) \in T_y^+$, it follows from Equation \eqref{eq:bounding-plus-T} that
\begin{equation}\label{eq:ineq-with-tau-and-T}
\delta \leq \tau(y) \leq T(y) < \tau(x) + \delta.
\end{equation}
In particular, the sequence $\{\tau(y_n)\}_{n \in \mathbb{N}}$ in $[\delta,\tau(x) + \delta]$ has a convergent subsequence in $[\delta,\tau(x) + \delta]$ (by compactness of $[\delta,\tau(x) + \delta]$). Hence, there exists a subsequence $\{y_{n_k}\}_{k \in \mathbb{N}}$ of $\{y_{n}\}_{n \in \mathbb{N}}$ in $\Sigma$ and a $t^* \in [\delta,\tau(x) + \delta]$ such that
\begin{equation*}
\tau(y_{n_k}) \to t^*.   
\end{equation*}
Moreover, because $y_n \to x$ by assumption, we have that $y_{n_k} \to x$. Therefore, the sequence $\{(\tau(y_{n_k}),y_{n_k})\}_{k \in \mathbb{N}}$ in $\R \times \Sigma$ is such that
\begin{equation*}
(\tau(y_{n_k}),y_{n_k}) \to (t^*,x).
\end{equation*}
Observe that for $n \in \mathbb{N}$, we have $y_n \in \Sigma$ and $\tau(y_n) \in T_{y_n}$ so that
\begin{equation*}
\psi^X(\tau(y_n),y_n) \in \Sigma.    
\end{equation*}
However, the preimage $(\psi^X|_{\R \times \Sigma})^{-1}(\Sigma)$ is a closed subset of $\R \times \Sigma$ because $\Sigma$ is a closed subset of $M$ and $\psi^X|_{\R \times \Sigma} : \R \times \Sigma \to M$ is continuous. Therefore, the limit $(t^*,x)$ of $\{(\tau(y_{n_k}),y_{n_k})\}_{k \in \mathbb{N}}$ is such that
\begin{equation*}
\psi^X(t^*,x) \in \Sigma.    
\end{equation*}
Hence, $t^* \in T_x$. Moreover, $t^* \in [\delta,\tau(x) + \delta]$ and so $t^* \in T_x^+$. Therefore, $\tau(x) \leq t^*$. On the other hand, because $T$ is continuous and $y_{n_k} \to x$, we have
\begin{equation*}
T(y_{n_k}) \to T(x) = \tau(x).    
\end{equation*}
However, for all $k \in \mathbb{N}$, from Equation \eqref{eq:ineq-with-tau-and-T}, we have
\begin{equation*}
\tau(y_{n_k}) \leq T(y_{n_k}).
\end{equation*}
Therefore $t^* \leq \tau(x)$. Hence, $t^*=\tau(x)$ and
\begin{equation*}
\tau(y_{n_k}) \to \tau(x).
\end{equation*}
This completes the proof of Claim 2.

Finally, Claim 1 and Claim 2 guarantee the smoothness of $\tau$ as follows. Let $x \in \Sigma$ and take the neighbourhood $V$ of $x$ in $\Sigma$ and smooth function $T : V \to (\tau(x)-\delta,\tau(x)+\delta)$ whereby $T(x) = \tau(x)$ and $T(y) \in T_y^+$ for $y \in V$. Now, suppose that for any neighbourhood $W$ of $x$ in $V$, there exists $y \in W$ such that $T(y) \neq \tau(y)$. Then, there exists a sequence $\{y_n\}_{n \in \mathbb{N}}$ in $\Sigma$ with $y_n \to x$ such that $T(y_n) \neq \tau(y_n)$ for all $n \in \mathbb{N}$. For $n \in \mathbb{N}$, we have that $T(y_n) > \tau(y_n)$ and $T(y_n),\tau(y_n) \in T_{y_n}^+$ so that by Equation \eqref{eq:uniform-discreteness}, we have
\begin{equation*}
T(y_n) \geq \delta + \tau(y_n).
\end{equation*}
However, from Claim 2, there exists a subsequence $\{y_{n_k}\}_{k \in \mathbb{N}}$ of $\{y_n\}_{n \in \mathbb{N}}$ such that $\tau(y_{n_k}) \to \tau(x)$. On the other hand, because $y_{n_k} \to x$, we have that $T(y_n) \to T(x) = \tau(x)$. This implies $\tau(x) \geq \delta + \tau(x)$, which is a contradiction. Therefore, there exists a neighbourhood $W$ of $x$ in $V$ such that 
\begin{equation*}
\tau|_W = T|_W.    
\end{equation*}
It follows that $\tau$ is smooth.
\end{proof}

From Lemma \ref{lem:first-return-time}, we can introduce the notion of first return map and show that it is a diffeomorphism.

\begin{proposition}\label{prop:return-is-diffeo}
Let $M$ be a compact manifold with boundary. Suppose that $X$ is a vector field tangent to $M$ and that $X$ admits a Poincar\'e section $\Sigma$. Then, the function $f : \Sigma \to \Sigma$ given by
\begin{equation*}
f(x) = \psi^X_{\tau(x)}(x)
\end{equation*}
is a diffeomorphism.
\end{proposition}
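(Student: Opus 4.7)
The plan is to exhibit a smooth two-sided inverse for $f$. Smoothness of $f$ itself is immediate: the first return time $\tau : \Sigma \to \R$ is smooth by \cref{lem:first-return-time}, and so
\begin{equation*}
f = \psi^X \circ (\tau,\id_\Sigma) : \Sigma \to \Sigma
\end{equation*}
is a composition of smooth maps with image in $\Sigma$ by definition of $\tau$. So the only real content is construction of the inverse.

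To build the inverse, I would apply \cref{lem:first-return-time} (and all the preparatory results leading to it) to the reversed vector field $-X$. Observe that $-X$ is still tangent to $\partial M$, its flow is $\psi^{-X}_t = \psi^X_{-t}$, and $\Sigma$ is still a global Poincar\'e section for $-X$ because every orbit of $\psi^{-X}$ coincides with an orbit of $\psi^X$ traversed backwards and therefore still meets $\Sigma$; moreover transversality of $X$ to $\Sigma$ is equivalent to transversality of $-X$ to $\Sigma$. Consequently \cref{lem:first-return-time} applies to $-X$ and yields a smooth \emph{backward first return time} $\tau^{-} : \Sigma \to \R_{>0}$ given by
\begin{equation*}
\tau^{-}(x) = \min\{t > 0 : \psi^X_{-t}(x)\in \Sigma\},
\end{equation*}
with associated smooth backward return map $g(x) = \psi^X_{-\tau^{-}(x)}(x) \in \Sigma$.

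The main step is to verify $g \circ f = \id_\Sigma$ and $f \circ g = \id_\Sigma$. I would do this through the set of return times $T_x = \{t \in \R : \psi^X_t(x) \in \Sigma\}$ introduced in the proof of \cref{lem:first-return-time}. For $x \in \Sigma$, the group property of the flow gives $T_{f(x)} = T_x - \tau(x)$, and, in the notation of \cref{lem:first-return-time}, $\tau(x)$ is the minimum of $T_x \cap (0,\infty)$ while $0 \in T_x$. Hence the maximum of $T_{f(x)} \cap (-\infty,0)$ equals $-\tau(x)$, which by definition of the backward return time means $\tau^{-}(f(x)) = \tau(x)$. Therefore
\begin{equation*}
g(f(x)) = \psi^X_{-\tau^{-}(f(x))}(f(x)) = \psi^X_{-\tau(x)}(\psi^X_{\tau(x)}(x)) = x.
\end{equation*}
The identity $f \circ g = \id_\Sigma$ is proved symmetrically by interchanging the roles of $X$ and $-X$. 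Since $g$ is smooth by construction, $f$ is a diffeomorphism.

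The step requiring the most care is the identification $\tau^{-}(f(x)) = \tau(x)$: it rests on the uniform discreteness estimate \eqref{eq:uniform-discreteness}, which ensures that between two consecutive crossings of $\Sigma$ along an orbit there are no other crossings, so that the successor and predecessor relations inverse one another. The rest of the argument is formal once \cref{lem:first-return-time} is in hand.
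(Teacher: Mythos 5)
Your argument is correct and follows essentially the same route as the paper's: both proofs obtain the inverse by applying \cref{lem:first-return-time} to the reversed field $-X$, define the backward return map $g$, and verify $g\circ f=\id_\Sigma$ by observing that the group property of the flow identifies the minimal positive return time of $-X$ at $f(x)$ with $\tau(x)$, using the discreteness of the return-time set. The only minor difference is cosmetic: the paper phrases the translation of return-time sets via $\psi^Y_t(f(x))=\psi^X_{\tau(x)-t}(x)$ whereas you write $T_{f(x)}=T_x-\tau(x)$, and the paper is slightly more explicit about why the restriction of the smooth map $\Sigma\to M$ to an embedded submanifold with boundary is itself smooth, but your proof is sound.
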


\begin{proof}
Note that the function $F : \Sigma \to M$ given by $F(x) = \psi^X_{\tau(x)}(x)$ is smooth because $\tau$ is smooth and $\psi^X : \R \times M \to M$ is smooth. Moreover, $F(\Sigma) \subset \Sigma$ and $\Sigma$ is an embedded submanifold of $M$ with boundary. Therefore, the restricted map $f : \Sigma \to \Sigma$ is also smooth (see \cite[Theorem 5.53 and Problem 9-13]{Lee_2012}).

We show now that $f$ has a smooth inverse. Consider the vector field $Y = -X$ and note that $\Sigma$ is a Poincar\'e section for $Y$. Let $\upsilon : \Sigma \to \R$ denote the smooth first-return time as in Lemma \ref{lem:first-return-time}. As per the above paragraph, the first return map $g : \Sigma \to \Sigma$ of $Y$, given by
\begin{equation*}
g(x) = \psi^Y_{\upsilon(x)}(x),
\end{equation*}
is smooth. We now show that
\begin{equation*}
f \circ g = \operatorname{id} = g \circ f.    
\end{equation*}
It is enough to show the first equality because the second is proven identically. To this end, for $x \in \Sigma$, set
\begin{align*}
T^{X}_x &= \{t \in \R : \psi^X_t(x) \in \Sigma\},\\
T^{Y}_x &= \{t \in \R : \psi^Y_t(x) \in \Sigma\}.
\end{align*}
Now, let $x \in \Sigma$ and observe that
\begin{equation*}
\psi^Y_t(f(x)) = \psi^X_{-t}(f(x)) = \psi^X_{-t}(\psi^X_{\tau(x)}(x)) = \psi^X_{\tau(x)-t}(x).     
\end{equation*}
Therefore,
\begin{equation*}
t \in T^{Y}_{f(x)} \text{ and } t > 0 \iff \tau(x)-t \in T^{X}_{x} \text{ and } \tau(x)-t < \tau(x) \iff \tau(x)-t \in T^{X}_{x} \text{ and } \tau(x)-t \leq 0.
\end{equation*}
Hence,
$
\upsilon(f(x)) = \tau(x)
$
and
\begin{equation*}
\psi^Y_{\upsilon(f(x))}(f(x)) = \psi^X_{\tau(x)-\tau(x)}(x) = x.    
\end{equation*}
That is,
\begin{equation*}
g(f(x)) = \psi^Y_{\upsilon(f(x))}(f(x)) = x.   
\end{equation*}
\end{proof}

With \cref{lem:first-return-time} showing the first-return time $\tau$ is smooth and Proposition \ref{prop:return-is-diffeo} showing the first-return map is a diffeomorphism for manifold with boundary, we are ready to prove \cref{lem:equivalence-anglefunction}.

\begin{proof}[Proof of \cref{lem:equivalence-anglefunction}]
Let $f:\Sigma\to \Sigma$ be the first-return map, $\tau : \Sigma \to \R$ the first-return time of $X$, and $\Phi = \psi^X$ be the global flow of $X$. 
Let $\Sigma^{\tau}_f$ be the manifold with boundary defined by the quotient space
\[ \Sigma^\tau_f := \frac{\{(t,x) \in \R\times \Sigma\,|\, 0 \leq t \leq \tau(x)\}}{ (0,f(x)) \sim (\tau(x), x)}. \]
Observe that $\Sigma_f^\tau$ is diffeomorphic to $M$. In fact, a diffeomorphism is explicitly given by $\tilde{\Phi}:\Sigma_f^\tau\to M$, $[t,x] \mapsto \Phi(t,x)$. Indeed, $\tilde{\Phi}$ is well-defined as 
\[\tilde{\Phi}([\tau(x),x]) = \Phi(\tau(x),x) = f(x) = \Phi(0,f(x)) = \tilde{\Phi}([0, f(x)]).\]
Moreover, $\tilde{\Phi}$ is a local diffeomorphism by the Flowout Theorem for manifolds with boundary \cref{lem:flowout-thm}. Lastly, we note that $\tilde{\Phi}$ is injective by construction of $\tau$ and $f$ and $\tilde{\Phi}$ is surjective because $\Phi$ is surjective (as $\Sigma$ is a Poincar\'e section). Hence, as claimed, $\tilde{\Phi}$ is a diffeomorphism.

Moreover, the so-called vertical vector field $\partial_t$ on $\Sigma_f^\tau$ defined by the flow $(s,[t,x])\mapsto [t+s,x]$ is $C^\infty$-orbitally equivalent to $X$; 
\[ \tilde{\Phi}_* \partial_t = \left.\frac{d}{ds}\right|_{s=0} \tilde{\Phi}([t+s,x]) = \left.\frac{d}{ds}\right|_{s=0} \Phi(t+s,x) = X.\]
Hence, $\Sigma_f^\tau$ gives the flow of $X$ as the \emph{flow under a function} or \emph{special flow} (see \cite{katokIntroductionModernTheory1995}).

Now, consider the mapping torus $\Sigma_f$ defined here as
\[ \Sigma_f := \frac{\{(t,x) \in \R\times \Sigma\,|\, 0 \leq t \leq 2\pi\}}{ (0,f(x)) \sim (2\pi, x)}. \] 
The manifold $\Sigma_f^\tau$ and the mapping torus $\Sigma_f$ are diffeomorphic \cite[Prop.~2.2.5]{katokIntroductionModernTheory1995}. Indeed, following \cite{katokIntroductionModernTheory1995}, let $k = \min \tau$, $K = \max \tau$, $\tilde{k} = \min(\{k, 2\pi \})$ and consider any smooth function $g:[0,2\pi]\times [k,K] \to [0,K]$ such that
\begin{enumerate}
    \item $g(t,s) = t$ for $t\in [0, \tfrac{1}{4} \tilde{k}]$,
    \item $g(t,s) = t+s-2\pi$ for $t\in [2\pi - \tfrac{1}{4} \tilde{k},2\pi]$,
    \item $\frac{\partial}{\partial t} g(t,s) > 0$.
\end{enumerate}
Then, the map $\Psi:\Sigma_f \to \Sigma_f^\tau$ given by $[t,x]\mapsto [g(t,\tau(x)), x]$ is a diffeomorphism.

Furthermore, the vector field $ \tilde{\partial}_t$ on $\Sigma_f$ with flow $(s,[t,x]) \mapsto [t+s,x]$ is $C^\infty$-orbitally equivalent to the flow of $\partial_t$ on $\Sigma_f^\tau$. Indeed, we have that  
\[ \Psi_* \tilde{\partial}_t = \left.\frac{d}{ds}\right|_{s=0} \Psi([t+s,x]) = \left.\frac{d}{ds}\right|_{s=0} [g(t+s, \tau(x)), x] = u \partial_t \] 
for some function $u:\Sigma_f^\tau \to \R$ that must be strictly positive in consequence of the fact that $\frac{\partial g}{\partial t} > 0$.

Finally, define $\tilde{\phi}:\Sigma_f \to S^1$ by $\tilde{\phi}([t,x]) = [t]$ and let $\phi:M\to S^1$ be the circle-valued function given from the pushforward of $\tilde{\phi}$ by the diffeomorphism $ \tilde{\Phi}\circ\Psi:\Sigma_f \to M$.
Then, $d\tilde{\phi} (\tilde{\partial}_t) = 1 \implies d\phi(X) = (\tilde{\Phi}_*u)^{-1} >0$ and $\phi^{-1}(0) = \Sigma$. Therefore, $\phi$ is the desired circle-valued function on $M$.

\end{proof}

\section{Mapping tori, mapping class groups, monodromy, and a classification of fibre bundles}\label{sec:MappingToriMCG}

Let $\phi: M \to S^1$ and $\phi':M' \to S^1$ be two fibre bundles over a circle with respective fibres $\Sigma, \Sigma'$ that are both connected, compact, smooth manifolds (possibly with boundary). The fibre bundles are said to be \emph{bundle isomorphic} if there exists a diffeomorphism $\Psi:M\to M'$ such that $\phi = \phi'\circ \Psi$, that is, the projections $\phi,\phi'$ commute with $\Psi$. A natural question is whether there exists a classification of all fibre bundles over $S^1$ up to bundle isomorphism. In particular, it is crucial to know whether there exist \emph{non-trivial} fibre bundles, namely fibre bundles that are not isomorphic to the trivial fibre bundle $M = S^1 \times \Sigma$.

In this section, we detail a representation of fibre bundles known as the \emph{monodromy representation}. For some classes of fibre bundles, the monodromy representation is a complete invariant. This means that any two fibre bundles are isomorphic if and only if they have the same monodromy representation. In particular, the monodromy representation is a complete invariant for the class of fibre bundles considered here; that of a fibre bundle over $S^1$. A recent exposition on the monodromy representation is given in \cite{salterSurfaceBundlesTopology2020}, where surface bundles with arbitrary base manifold are treated. The more general case of flat fibre bundles is treated in \cite{moritaGeometryCharacteristicClasses2001}. In this section we consider only fibre bundles with base manifold $S^1$ and consequently provide simplifications and an alternative formulation of the monodromy representation to that given in the aforementioned references. 

The section begins by establishing the monodromy representation and its relation to mapping tori and the mapping class group of $\Sigma$. Then, the monodromy representation is used to show that there are no non-trivial fibre bundles only if $\Sigma$ is a disk or annulus. 

\subsection{The Monodromy Representation}\label{sec:MonodromyRepresentation}

Let $\phi: M \to S^1$ be a fibre bundle over $S^1$ with fibre $\Sigma$ that is a connected, compact manifold (possibly with boundary). Henceforth, bundles of this class will be referred to as \emph{$\Sigma$-bundles over $S^1$}. Let $\Diff^+(\Sigma)$ be the group of orientation preserving diffeomorphisms on $\Sigma$. Before formally defining the monodromy representation, we give an intuitive understanding. The core idea is to treat $S^1 \cong \quotient{\R}{2\pi \Z}$ so that $M$ can be `chopped' along the fibre $\phi^{-1}(0)$ to form a bundle over $[0,2\pi]$. As $[0,2\pi]$ is contractible, then this chopped bundle is trivialisable. Any trivialisation essentially `untwists' the fibre bundle. We can then associate to $M$ the diffeomorphism $f \in \Diff^+(M)$ which twists and glues $\{0\} \times \Sigma$ to $\{2\pi\} \times \Sigma$ to reform $M$. The resulting function $f$ depends on the choice of base point and trivialisation. Associating to $M$ all possible diffeomorphisms $f$ is the monodromy representation.

The monodromy representation of a $\Sigma$-bundle over $S^1$ can be constructed through the use of an Ehresmann connection. Every fibre bundle comes with the so-called \textit{vertical subbundle} of $TM$ defined by $\mathcal{V}M = \ker d\phi$. An \textit{Ehresmann connection} is a smooth choice of complement to $\mathcal{V}M$ in $TM$ at each point. The complement $\mathcal{H}M$ is called the \textit{horizontal subbundle} of $TM$. By definition $TM = \mathcal{V}M\oplus\mathcal{H}M$. It will be assumed here that $\mathcal{H}M|_{\partial M} \subset T\partial M$ so that the connection is always a \emph{complete connection}.

For $\Sigma$-bundles over $S^1$, there is a natural bijection between an Ehresmann connection and vector fields on $M$ which are everywhere transverse to the fibres of $M$ and tangent to $\partial M$. Consider $S^1 = \quotient{\R}{2\pi\Z}$ parameterised by $t$. If $\mathcal{H}M$ is given by a connection, then there is a unique vector field $\Y$, tangent to $\partial M$, defined by $d\phi(\Y) = 1$ and $\Y_x \in \mathcal{H}_xM$ for all $x\in M$. Conversely, given a vector field $\Y$ such that $d\phi(\Y) = 1$ and $\Y|_{\partial M} \in \Gamma(T\partial M)$, we can define at each point $x\in M$ the horizontal subspace by $\mathcal{H}_xM = \Span (\Y_x)$. Due to this correspondence, we will henceforth define a $\emph{connection}$ on $M$ as a choice of vector field $\Y$ such that $d\phi(\Y) = 1$ and $\Y|_{\partial M} \in \Gamma( T\partial M)$.

Every $\Sigma$-bundle over $S^1$ admits a connection. 
\begin{lemma}\label{lem:EveryBundleHasAConnection}
  Every $\Sigma$-bundle over $S^1$ admits a connection. 
\end{lemma}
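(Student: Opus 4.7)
The plan is to build the connection by a standard partition-of-unity patching argument, exploiting the fact that the defining condition $d\phi(Y)=1$ is affine (so preserved by convex combinations) while tangency to $\partial M$ is linear (so preserved by any linear combination).

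First, I would pick an open cover $\{U_i\}_{i\in I}$ of $S^1$ by trivializing open sets together with local trivializations $\psi_i:\phi^{-1}(U_i)\to U_i\times \Sigma$ such that $\mathrm{proj}_1\circ \psi_i=\phi|_{\phi^{-1}(U_i)}$. Because $S^1$ is a one-manifold, we may take $I$ finite (or at least locally finite). On each $U_i$, the natural coordinate vector field on the $S^1$ factor of $U_i\times\Sigma$ is just the restriction of $\partial_t$, so I would define the local vector field $Y_i\in\Gamma(T(\phi^{-1}(U_i)))$ by $Y_i=(\psi_i^{-1})_*(\partial_t,0)$. By construction $d\phi(Y_i)=1$, and because under $\psi_i$ the boundary $\partial M\cap \phi^{-1}(U_i)$ corresponds to $U_i\times\partial\Sigma$ on which $(\partial_t,0)$ is manifestly tangent, $Y_i$ is tangent to $\partial M$ wherever it is defined.

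Next, I would choose a smooth partition of unity $\{\rho_i\}_{i\in I}$ on $S^1$ subordinate to $\{U_i\}_{i\in I}$, pull it back to $M$ to obtain $\{\phi^*\rho_i\}$ subordinate to $\{\phi^{-1}(U_i)\}$, and set
\begin{equation*}
Y=\sum_{i\in I}(\phi^*\rho_i)\,Y_i,
\end{equation*}
extending each $(\phi^*\rho_i)Y_i$ by zero outside $\phi^{-1}(U_i)$. Then $Y$ is a globally defined smooth vector field on $M$. Since $d\phi$ is linear over smooth functions and $\sum_i\phi^*\rho_i=\phi^*(\sum_i\rho_i)=1$,
\begin{equation*}
d\phi(Y)=\sum_{i}(\phi^*\rho_i)\,d\phi(Y_i)=\sum_{i}\phi^*\rho_i=1.
\end{equation*}
At any boundary point $x\in\partial M$, the sum expressing $Y_x$ involves only vectors in $d\boundary_x(T_x\partial M)$, which is a linear subspace of $T_xM$, so $Y|_{\partial M}\in\Gamma(T\partial M)$.

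There is essentially no hard step here; the only thing one might worry about is that the tangency condition be a linear (not affine) condition, so that the partition-of-unity combination remains tangent to $\partial M$, and this is automatic. The affineness of $d\phi(Y)=1$ is what makes $\sum_i\phi^*\rho_i=1$ exactly the right normalisation. Compactness/finiteness of the cover could be invoked for comfort but is not strictly necessary, as the partition of unity sum is locally finite.
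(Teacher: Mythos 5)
Your proof is correct, and it takes a genuinely different route from the paper. You build the connection locally from trivializations of the bundle and glue via a partition of unity pulled back from the base, using the fact that the normalisation $d\phi(Y)=1$ is an affine condition preserved by convex combinations while tangency to $\partial M$ is a linear condition. The paper instead passes to the pullback bundle $\gamma^*M$ over $\R$ via the universal covering $\gamma:\R\to S^1$, trivialises it globally (contractible base), pushes $\partial_t$ forward through the trivialisation and then down through the covering projection $\pi_M$, invoking well-definedness of that pushforward. Your argument is the more elementary and more routine one (it is essentially the standard existence proof for Ehresmann connections, specialised to keep track of the boundary) and has the small advantage of avoiding the equivariance bookkeeping implicit in the paper's ``is a well-defined vector field on $M$'' step — for $(\pi_M)_*\tau_*\partial_t$ to descend, $\tau_*\partial_t$ must be invariant under the $2\pi\Z$ deck action, which the paper leaves unremarked. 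Both proofs are valid; yours is slightly more careful at the cost of being slightly longer.

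Two minor points worth spelling out if you wanted to polish the write-up: the extension of $(\phi^*\rho_i)Y_i$ by zero is smooth because $\supp\rho_i$ is closed in $S^1$ and contained in $U_i$, so $\phi^{-1}(\supp\rho_i)\subset\phi^{-1}(U_i)$ is closed in $M$; and a local trivialisation $\psi_i$ automatically carries $\partial M\cap\phi^{-1}(U_i)$ onto $U_i\times\partial\Sigma$ because a diffeomorphism of manifolds with boundary must match boundaries and $U_i\subset S^1$ is boundaryless, which justifies the tangency of each $Y_i$ to $\partial M$.
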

\begin{proof}
  Let $\gamma:\R\to S^1$ be a covering map of $S^1$ and denote by $\gamma^*M$ the pull-back bundle of $M$, defined as a subset of $\R\times M$ by
  \[ \gamma^* M = \{ (t,x)\in \R\times M \,|\, \gamma(t) \in \phi(x) \} \]
  and with projections $\gamma^*\phi: \R\times M \to \R$ and $\pi_M: \R\times M \to M$.
  Since the base space of $\gamma^*M$ is contractible then $\gamma^*M$ is trivial. 
  Let $\tau:\R\times \Sigma \to \gamma^*M$ be a trivialisation such that $(\gamma^*\phi)\circ\tau(t,x) = t$. Denote by $\partial_t$ the vector field on $\R\times \Sigma$ with flow $\Phi_s(t,x) = (t+s,x)$. Then $\Y = (\pi_M)_*\tau_* \partial_t$ is a well-defined vector field on $M$ satisfying $d\pi(\Y) = 1$, that is, $\Y$ is a connection. 
\end{proof}

In essence, a connection describes how to move from fibre to fibre in $M$. The flow $\Phi:\R\times M \to M$ of a connection $\Y$ \emph{parallel transports} points on any fibre $\Sigma_{t_1} := \phi^{-1}(t_1)$ to another $\Sigma_{t_2} := \phi^{-1}(t_2)$ with $t_2 > t_1$ via the map $\Phi_{t_1,t_2} := \Phi(t_2-t_1,\cdot):\Sigma_{t_1} \to \Sigma_{t_2}$. In particular $\Phi_{t,t+2\pi}$ transports any fibre $\Sigma_t$ once around the bundle and back to itself. As each fibre is diffeomorphic to $\Sigma$, then local trivialisations give for each $t\in S^1$ a diffeomorphism $h_t: \Sigma \to \Sigma_t$. Through these diffeomorphisms, it is possible to consider $\overline{\Phi}_{t_1,t_2} := h_{t_2}^{-1}\circ\Phi_{t_1,t_2}\circ h_{t_1}\in \Diff^+(\Sigma)$. Note that the diffeomorphisms $h_t$ need not be smooth as a function of $t$; in fact, if the fibre bundle is non-trivial, then necessarily they will not be. 

\begin{definition}
  The map $\overline{\Phi}_{0,2\pi} \in \Diff^+(\Sigma)$ is denoted the \emph{monodromy map with respect to $\Y$} of the fibre bundle $(M,S^1,\phi,\Sigma)$.
\end{definition}

The monodromy map with respect to $\Y$ has been defined with respect to a specific choice of base point $t\in S^1$, namely $t = 0$, through the identification $S^1 = \quotient{\R}{2\pi \Z}$. We now explore the consequence of changing the base from which the monodromy map is defined. Let $t\in S^1$ be a base point. Then the monodromy map based at $t$, $\overline{\Phi}_{t,t+2\pi}$, can be computed from the monodromy map of $M$ using the semi-group property of flows,
\[\overline{\Phi}_{t,t+2\pi} = \overline{\Phi}_{0,t} \circ \overline{\Phi}_{0,2\pi} \circ \overline{\Phi}_{0,t}^{-1}. \] 
It follows that changing the base from which the monodromy map is defined amounts to conjugation by $\overline{\Phi}_{0,t}\in\Diff^+(\Sigma)$. Hence, the conjugacy class of $\overline{\Phi}_{0,2\pi}$ in $\Diff^+(\Sigma)$ is an invariant of the bundle. In fact, as we prove in what follows, it is a \emph{complete} invariant of $\Sigma$-bundles over $S^1$ with a distinguished Ehresmann connection.

\begin{definition}
  Let $\phi:M\to S^1$ and $\phi':M'\to  S^1$ be two $\Sigma$-bundles with respective connections $\Y, \Y'$. A diffeomorphism $\Psi:M\to M'$ is said to be an \emph{isomorphism of bundles with connection} provided $\Psi$ is a bundle isomorphism and $\Psi_*\Y = \Y'$.
\end{definition}

\begin{theorem}[Morita \cite{moritaGeometryCharacteristicClasses2001}]\label{thm:monodromyMapIsComplete}
  The correspondence which sends any (orientable) $\Sigma$-bundle over $S^1$ with a given connection $\Y$ to its monodromy map, induces a bijection
  \begin{align*}
    ~&\{ \text{isomorphism classes of $\Sigma$-bundles over $S^1$ with connection} \} \\
    ~&\hspace{0.5\textwidth} \cong \{ \text{conjugacy classes of $\Diff^+(\Sigma)$} \} 
  \end{align*}
\end{theorem}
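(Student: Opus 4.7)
The plan is to prove the two directions of the bijection separately, using the mapping torus $\Sigma_f = (\R \times \Sigma)/{\sim}$, where $(t,x) \sim (t+2\pi, f^{-1}(x))$, as a canonical model for a $\Sigma$-bundle over $S^1$ with prescribed monodromy $f \in \Diff^+(\Sigma)$. Such a mapping torus comes equipped with a distinguished connection $\Y_f$ induced by $\partial_t$ on $\R \times \Sigma$, whose flow for time $2\pi$ sends $[0,x]$ to $[2\pi,x] = [0,f^{-1}(x)]$; composing with $h_0 = \inclusion_0$ and its inverse on both sides shows that the monodromy of $(\Sigma_f,\Y_f)$ based at $t=0$ with the natural trivialisation is $f$ itself. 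Hence surjectivity of the correspondence is immediate.

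First I would verify that the construction is well-defined at the level of conjugacy classes. A different choice of trivialisation $h_0'$ differs from $h_0$ by an element $g = (h_0')^{-1} \circ h_0 \in \Diff^+(\Sigma)$, and substituting into $h_0^{-1} \circ \Phi_{0,2\pi} \circ h_0$ replaces the monodromy by $g^{-1}\cdot \overline{\Phi}_{0,2\pi} \cdot g$; the same calculation already spelled out in the excerpt handles the change of basepoint $t \mapsto t+s$. Next, if $\Psi : M \to M'$ is an isomorphism of bundles with connection, then $\Psi_*\Y = \Y'$ means the flows are $\Psi$-related, so $\Phi'_{0,2\pi} = \Psi|_{\Sigma_0'} \circ \Phi_{0,2\pi} \circ (\Psi|_{\Sigma_0})^{-1}$; combining with the trivialisations $h_0, h_0'$ yields that the two monodromies are conjugate in $\Diff^+(\Sigma)$. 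Thus the correspondence is a well-defined map on isomorphism classes.

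For injectivity, I would show that every $\Sigma$-bundle $(M,\phi,\Y)$ with connection is isomorphic, as a bundle with connection, to $(\Sigma_{f},\Y_{f})$ where $f$ is any representative of its monodromy class. Mimicking the proof of \cref{lem:EveryBundleHasAConnection}, pull back $M$ along $\gamma : \R \to S^1$ and use completeness of $\Y$ (which follows from the assumption $\Y|_{\partial M} \in \Gamma(T\partial M)$ and compactness of $\Sigma$) to define a global trivialisation $\tau : \R \times \Sigma \to \gamma^*M$ by parallel transport from $h_0$, so that $\tau_* \partial_t = \gamma^*\Y$. Descending $\tau$ to the quotient that defines $\Sigma_f$ requires the gluing relation $(0,x) \sim (2\pi, f^{-1}(x))$ to be compatible with $\tau$, which is precisely the statement that the monodromy of $\Y$ in the trivialisation $h_0$ is $f$; this identifies $M$ with $\Sigma_f$ via a bundle isomorphism intertwining $\Y$ with $\Y_f$. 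Finally, if $f' = g^{-1} f g$ is conjugate to $f$, then the map $\Sigma_f \to \Sigma_{f'}$, $[t,x] \mapsto [t, g^{-1}(x)]$, is a well-defined bundle isomorphism with $\Y_f \leftrightarrow \Y_{f'}$, so conjugate monodromies give isomorphic bundles with connection.

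The main technical obstacle I expect is the smooth compatibility in the injectivity step: the parallel transport $\tau$ is smooth on $\R \times \Sigma$, but the descended map on the mapping torus must be verified to be smooth across the identification $t \sim t + 2\pi$. This amounts to checking that the gluing prescribed by $f$ matches the smoothness of parallel transport, which follows from the semigroup property of the flow of $\Y$ together with the fact that the trivialisations $h_t$ can be chosen smooth in $t$ on any fundamental domain (even if not globally on $S^1$ when the bundle is non-trivial). Once this is handled, the bijection follows.
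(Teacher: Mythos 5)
Your argument is correct and takes essentially the same route as the paper's: you realise every conjugacy class via a mapping torus, show every bundle with connection is isomorphic to the mapping torus of its monodromy (you pull back along $\gamma:\R\to S^1$ and use parallel transport, whereas the paper writes the isomorphism $\Psi(p)=\left[\phi(p),\,h_0^{-1}\circ\Phi_{0,\phi(p)}^{-1}(p)\right]$ directly), and observe that conjugate monodromies give isomorphic mapping tori. One minor slip: with your gluing convention $(t,x)\sim(t+2\pi,f^{-1}(x))$ one has $[2\pi,x]=[0,f(x)]$, not $[0,f^{-1}(x)]$; your final conclusion that the monodromy is $f$ is nonetheless correct.
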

One direction of the theorem is clear; if there is an isomorphism $\Psi$ of bundles with connection then the flow of the connections are $\Psi$-related by definition. In consequence, the monodromy maps must be conjugate. The converse is less clear. First, to any conjugacy class of diffeomorphisms, it must be shown that there is a $\Sigma$-bundle over $S^1$ with a connection $\Y$ whose monodromy map is in the given conjugacy class. Second, if two fibre bundles have conjugate monodromy maps, then they need to be isomorphic with connection. The first requirement can be assured through a \emph{mapping torus.}

\begin{definition} \label{def:mappingtorus}
  For any $f\in \Diff^+(\Sigma)$ the \emph{mapping torus} $\Sigma_f$ associated to $f$ is given by
  $$ \Sigma_f := \frac{\R\times \Sigma}{ (t,f(x)) \sim (t-2\pi, x)}. $$
\end{definition}

A mapping torus is naturally a $\Sigma$-bundle over $S^1$ where the projection $\phi:\Sigma_f \to S^1 = \quotient{\R}{2\pi\Z}$ is defined by $[s, x] \mapsto [s]$ with $[s,x]$ denoting the equivalence class of $(s,x)\in \R\times \Sigma$ under $(s,f(x))\sim (s-2\pi,x)$ and $[s]$  the equivalence class under $ s \sim s - 2\pi$.
Every mapping torus has a connection with monodromy map $f$. Indeed, on the space $\R\times \Sigma$ there is a flow $\Phi_s(t,x) = (t+s, x)$ with associated vector field $\partial_t$. Using the quotient map $q:\R\times \Sigma \to \Sigma_f$, we obtain a well-defined vector field $\Y = q_*\partial_t$ on $\Sigma_f$. Moreover, $d\phi(\Y) = 1$ and, if $\Sigma$ has boundary, then $\Y$ is tangent to $\partial \Sigma_f$ so that $\Y$ is a complete connection on $\Sigma_f$. Finally, observe that the monodromy map of $\Y$ is exactly $f$, as desired.


It can be shown that every $\Sigma$-bundle over $S^1$ with monodromy map $f$ is isomorphic to the mapping torus $\Sigma_f$.
\begin{lemma}
    Every $\Sigma$-bundle over $S^1$ with a connection $\Y$ producing a monodromy map $f$ is isomorphic to $\Sigma_f$. 
\end{lemma}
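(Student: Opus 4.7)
The plan is to exhibit an explicit bundle-with-connection isomorphism $H : \Sigma_f \to M$. Let $\Phi : \R \times M \to M$ denote the complete flow of $\Y$ (complete because $M$ is compact with $\Y$ tangent to $\partial M$), and fix the diffeomorphism $h : \Sigma \to \phi^{-1}(0)$ used to define the monodromy map, so that $h \circ f = \Phi_{-2\pi} \circ h$ on $\Sigma$ (after matching sign conventions with the identification $(t,f(x)) \sim (t - 2\pi, x)$ defining $\Sigma_f$). Define $\tilde H : \R \times \Sigma \to M$ by $\tilde H(t, x) = \Phi_t(h(x))$ and show it descends to the quotient.

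The descent is a direct consequence of the flow's group law:
\[
\tilde H(t - 2\pi,\, x) \;=\; \Phi_{t - 2\pi}(h(x)) \;=\; \Phi_t(\Phi_{-2\pi}(h(x))) \;=\; \Phi_t(h(f(x))) \;=\; \tilde H(t, f(x)),
\]
so $\tilde H$ factors through a smooth map $H : \Sigma_f \to M$.

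Next, I would verify the three requirements for an isomorphism of bundles with connection. For base-commutativity, $d\phi(\Y) = 1$ implies $\phi \circ \Phi_t = t + \phi$ so that $\phi(H([t,x])) = t + \phi(h(x)) = t$ in $S^1 = \quotient{\R}{2\pi\Z}$, which matches the projection $\phi_{\Sigma_f}([t,x]) = [t]$. For the connection, the canonical vector field $\tilde\partial_t$ on $\Sigma_f$ descending from $\R\times\Sigma$ satisfies
\[
H_* \tilde\partial_t\big|_{[t,x]} \;=\; \left.\frac{d}{ds}\right|_{s=0} \Phi_{t+s}(h(x)) \;=\; \Y\big|_{\Phi_t(h(x))} \;=\; \Y\big|_{H([t,x])}.
\]
That $H$ is a diffeomorphism follows because it is a local diffeomorphism (both $\Phi_t$ and $h$ are diffeomorphisms onto their images), it is surjective (given $y \in M$, pick any lift $\tilde t \in \R$ of $\phi(y)$ and set $x = h^{-1}(\Phi_{-\tilde t}(y))$, giving $H([\tilde t, x]) = y$), and it is injective (if $\tilde H(t_1, x_1) = \tilde H(t_2, x_2)$, then both points lie on a common integral curve of $\Y$, forcing $t_1 - t_2 \in 2\pi\Z$ and then matching the $\Sigma$-components via $h \circ f = \Phi_{-2\pi} \circ h$).

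The only real subtlety I anticipate is aligning the sign convention in the paper's definition of the monodromy map (via $h_{t_2}^{-1} \circ \Phi_{t_1, t_2} \circ h_{t_1}$ with $t_1 = 0$, $t_2 = 2\pi$) with the mapping-torus identification $(t, f(x)) \sim (t - 2\pi, x)$; once the orientation is fixed so that $h \circ f = \Phi_{-2\pi} \circ h$, every verification above reduces to the group law of the flow and the fact that $\Y$ integrates globally on $M$.
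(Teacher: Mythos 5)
Your construction is correct and is essentially the paper's own argument run in reverse: the paper defines a map $\Psi : M \to \Sigma_f$ by $p \mapsto [\phi(p),\, h_0^{-1}\circ\Phi_{0,\phi(p)}^{-1}(p)]$ and checks that it is independent of the chosen lift of $\phi(p)$, whereas you build the inverse map $H : \Sigma_f \to M$, $[t,x] \mapsto \Phi_t(h(x))$, and check that it descends through the mapping-torus quotient. The two verifications are literally the same computation (the group law of the flow together with the identity expressing the monodromy map as a conjugate of $\Phi_{2\pi}$), just stated for the map and for its inverse respectively. Your write-up is somewhat more thorough than the paper's — the paper only explicitly checks well-definedness and asserts the rest, while you also spell out fibre-preservation, that $H$ pushes the canonical connection $\tilde\partial_t$ to $\Y$, and bijectivity — and constructing the map on $\Sigma_f$ rather than on $M$ sidesteps the need to worry about a continuous choice of lift of $\phi(p)$, since smoothness of $H$ is automatic from smoothness of the flow on $\R\times\Sigma$. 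The sign-convention issue you flag (whether the identification $(t,f(x))\sim(t-2\pi,x)$ gives the mapping torus monodromy $f$ or $f^{-1}$) is genuine, but it is present in the paper's own treatment of \cref{def:mappingtorus} and affects nothing mathematically once a consistent orientation of the base circle is fixed, so your acknowledgement of it is exactly the right level of care.
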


\begin{proof}
    Define $\Psi: M\to \Sigma_f$ by $\Psi(p) = \left[\phi(p), h_0^{-1}\circ\Phi_{0,\phi(p)}^{-1}(p)\right]$. Observe that $\Psi$ is smooth, well-defined, and preserves fibres. Indeed, to be well-defined we require $[\phi(p) + 2\pi, f\circ h_0^{-1}\circ\Phi_{0,\phi(p)+2\pi}^{-1}(p)] = [\phi(p), h_0^{-1}\circ\Phi_{0,\phi(p)}^{-1}(p)] $. Now, $[\phi(p)] = [\phi(p) + 2\pi]$ and 
    \[h_0^{-1}\circ \Phi_{0,\phi(p)}^{-1}(p) = h_0^{-1} \circ \Phi_{0,2\pi}\circ\Phi_{0, \phi(p) + 2\pi}^{-1}(p) = \bar{\Phi}_{0,2\pi}\circ h_0^{-1}\circ \Phi^{-1}_{0,\phi(p)+2\pi}(p) = f\circ h_0^{-1}\circ\Phi^{-1}_{0,\phi(p)+2\pi}(p). \]
    It follows that $\Psi$ is well-defined.
\end{proof}

We now show the second claim required for the proof of \cref{thm:monodromyMapIsComplete}, namely, that two fibre bundles with conjugate monodromy maps are isomorphic with connection. Observe that if $\Sigma_f, \Sigma_{f'}$ are two mapping tori and $f, f'$ are conjugate, that is there exists $g \in \Diff^+(\Sigma)$ such that $g\circ f = f'\circ g$, then $\Psi([t,x]) = [t,g(x)]$ is a bundle isomorphism. This completes the proof of \cref{thm:monodromyMapIsComplete}. Hence, the monodromy map is a complete invariant of $\Sigma$-bundles over $S^1$ with a connection. 

In light of the fact that every $\Sigma$-bundle over $S^1$ admits a connection (\cref{lem:EveryBundleHasAConnection}), it is natural to view the monodromy map of a $\Sigma$-bundle as an intrinsic invariant of the bundle itself, and not associated to a specific connection. This intrinsic invariant is called the \emph{monodromy representation} of $M$. Two different choices in connection give rise to two different monodromy maps $f,f'$. It follows that, in order to make sure the monodromy representation is well-defined, we need to understand how a change in choice of connection effects the monodromy map. 

As $M$ is isomorphic to two different mapping tori $\Sigma_f, \Sigma_{f'}$, finding a condition on the monodromy representation so that it is well-defined amounts to determining a condition on $f,f'$ that guarantees that the two mapping tori are isomorphic as bundles (but not necessarily as bundles with a connection).

\begin{definition}
  Two functions $f,f' \in \Diff^+(\Sigma)$ are \emph{isotopic} if there exists a smooth path $f_t \in \Diff^+(\Sigma)$ such that $f_0 = f, f_1 = f'$.
  Denote by $f \sim f'$ the equivalence relation by isotopy between $f,f' \in \Diff^+(\Sigma)$. 

  The \emph{mapping class group} $\M(\Sigma)$ is the group of isotopy classes in $\Diff^+(\Sigma)$. Equivalently, we have
  \[ \M(\Sigma) := \quotient{\Diff^+(\Sigma)}{\sim} = \pi_0(\Diff^+(\Sigma)) \]

  Two elements $[f],[f'] \in\M(\Sigma)$ are considered \emph{conjugate} if there exists $[g] \in \M(\Sigma)$ such that $[f'] = [g][f][g]^{-1}$. Equivalently, $[f'] = [g\circ f \circ g^{-1}]$.
\end{definition}

\begin{proposition}\label{prop:MappingToriIsoCondition}
  Two mapping tori $\Sigma_f$ and $\Sigma_{f'}$ are bundle isomorphic if and only if $f$ and $f'$ are representatives of conjugate elements $[f],[f']\in \M(\Sigma)$.
\end{proposition}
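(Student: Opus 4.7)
My plan is to prove the two implications separately. The easy direction $(\Leftarrow)$ will be reduced to the lemma that isotopic diffeomorphisms give bundle-isomorphic mapping tori, which in turn is proved by an explicit twist-and-glue construction. The harder direction $(\Rightarrow)$ will extract the isotopy directly from any bundle isomorphism by lifting to the cyclic cover $\R\times\Sigma$.

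For $(\Leftarrow)$, suppose $[f']$ coincides with $[g\circ f\circ g^{-1}]$ in $\M(\Sigma)$ for some $g\in\Diff^+(\Sigma)$. The text already exhibits the bundle isomorphism $[t,x]\mapsto[t,g(x)]$ from $\Sigma_f$ to $\Sigma_{g\circ f\circ g^{-1}}$, so it remains to show that $f_0\sim f_1$ in $\Diff^+(\Sigma)$ implies $\Sigma_{f_0}\cong\Sigma_{f_1}$. To construct such an isomorphism, I will choose an isotopy $\{h_s\}_{s\in[0,1]}$ from $h_0=\id_\Sigma$ to $h_1=f_1\circ f_0^{-1}$, together with a smooth bump $\rho:[0,2\pi]\to[0,1]$ that is constant $0$ near $0$ and constant $1$ near $2\pi$. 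Setting $\phi_t=h_{\rho(t)}$ on $[0,2\pi]$ and extending to all of $\R$ by the recursion $\phi_{t+2\pi}=f_1\circ\phi_t\circ f_0^{-1}$, the map $[t,x]\mapsto[t,\phi_t(x)]$ descends to a smooth bundle isomorphism $\Sigma_{f_0}\to\Sigma_{f_1}$, because the recursion is precisely the twisting relation needed to match the two equivalence relations. Composing this with the conjugation isomorphism gives the desired $\Sigma_f\cong\Sigma_{f'}$.

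For $(\Rightarrow)$, assume $\Psi:\Sigma_f\to\Sigma_{f'}$ is a bundle isomorphism. Since $\Psi$ commutes with the circle projections, it lifts to a diffeomorphism $\tilde\Psi:\R\times\Sigma\to\R\times\Sigma$ of the $\Z$-covers, and after choosing the lift so that the first coordinate is preserved, $\tilde\Psi(t,x)=(t,\phi_t(x))$ for a smooth family $\{\phi_t\}_{t\in\R}$ in $\Diff^+(\Sigma)$. Intertwining the deck transformations $(t,x)\mapsto(t+2\pi,f(x))$ on the source and $(t,x)\mapsto(t+2\pi,f'(x))$ on the target forces $\phi_{t+2\pi}\circ f=f'\circ\phi_t$. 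In particular, with $g:=\phi_0$ we find $\phi_{2\pi}=f'\circ g\circ f^{-1}$, and $\{\phi_t\}_{t\in[0,2\pi]}$ exhibits an isotopy from $g$ to $f'\circ g\circ f^{-1}$ in $\Diff^+(\Sigma)$. This is exactly the statement that $[f]$ and $[f']$ are conjugate in $\M(\Sigma)$ via $g$.

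The main technical obstacle I anticipate is the smoothness of the family $\{\phi_t\}_{t\in\R}$ used in $(\Leftarrow)$ across the gluing points $t\in2\pi\Z$; the flatness of $\rho$ near the endpoints of $[0,2\pi]$ is precisely what ensures that the recursion extension is smooth and that the quotient map is well-defined. Everything else is a straightforward assembly of the constructions, and the forward direction is almost automatic once one recognises that the lifted map $\tilde\Psi$ directly supplies both the conjugating element $g$ and the witnessing isotopy.
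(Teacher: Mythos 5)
Your proof is correct, and both directions depart from the paper's argument in worthwhile ways. For the forward direction the paper constructs the isotopy between $f$ and $\Psi_0^{-1}\circ f'\circ\Psi_0$ by parallel transport: it fixes Ehresmann connections $\Y$ and $\Y'$ on $\Sigma_f$ and $\Sigma_{f'}$ and slides along their flows $\Phi^f,\Phi^{f'}$, defining
\[
f_s=\Phi^f(1-s,\cdot)\circ\Psi_s^{-1}\circ\Phi^{f'}(s,\cdot)\circ\Psi_0,
\]
which leans on the connection machinery already set up for \cref{thm:monodromyMapIsComplete}. You instead lift $\Psi$ to the infinite cyclic cover $\R\times\Sigma$; the equivariance relation $\phi_{t+2\pi}\circ f=f'\circ\phi_t$ then hands you both the conjugating element $g=\phi_0$ and the witnessing isotopy $\{\phi_t\}_{t\in[0,2\pi]}$ in one stroke. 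This is more elementary (no connection needs to be chosen) and makes the relation between $\Psi$ and the conjugating diffeomorphism explicit, at the small cost of verifying the covering-space lifting criterion (which holds because $\phi'\circ\Psi=\phi$ forces $\Psi_*$ to carry $\ker\phi_*$ into $\ker\phi'_*$) and of adjusting by a deck transformation so that $\tilde\Psi$ preserves the first coordinate. For the reverse direction your argument is structurally the same as the paper's — build the bundle isomorphism $[t,x]\mapsto[t,\phi_t(x)]$ from an isotopy — but your use of a bump function $\rho$, flat near $0$ and $2\pi$, is an improvement: the paper's formula (in its notation, $[t,x]\mapsto[t,f_{1-t}\circ f^{-1}(x)]$) does not visibly produce a \emph{smooth} map across the seam unless the isotopy happens to have vanishing $s$-derivatives of all orders at the endpoints, which is exactly what your reparametrisation enforces. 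So your proposal both fills a genuine smoothness gap in $(\Leftarrow)$ and gives a self-contained alternative for $(\Rightarrow)$.
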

\begin{proof}
$\implies$: Suppose that there exists a bundle isomorphism $\Psi: \Sigma_f \to \Sigma_{f'}$. For any base point $b \in S^1$ , denote by $\Psi_b$ the induced map from $\Psi$ of the fibre over $b$ in $\Sigma_f$, denoted $\Sigma_{f,b}$, to the fibre over $b$ in $\Sigma_{f'}$, denoted $\Sigma_{f',b}$. Hence $\Psi_b:\Sigma_{f,b} \to \Sigma_{f',b}$ and we can consider $\Psi_b\in \Diff^+(\Sigma)$. We will show that $[f] = [\Psi_0^{-1} \circ f' \circ \Psi_0]$. In order to do this we need to find an isotopy $f_s$ such that $f_0 = f$ and $f_1 = \Psi_0^{-1}\circ f'\circ \Psi_0$. 

Let $\Y_f,\Y_{f'}$ be connections on respectively $\Sigma_f, \Sigma_{f'}$ with monodromy $f, f'$ and denote their respective flows $\Phi^f,\Phi^{f'}$. Consider the map $f_s$ defined as the composition 
\[f_s: \Sigma_{f,0} \xrightarrow{\Psi_0} \Sigma_{f',0} \xrightarrow{\Phi^{f'}(s,\cdot)} \Sigma_{f',s} \xrightarrow{\Psi_s^{-1}}  \Sigma_{f,s} \xrightarrow{\Phi^f(1-s,\cdot)} \Sigma_{f,0}.\]       
Then $f_0 = f$ and $f_1 = \Phi_0^{-1}\circ f' \circ \Phi_0 $ as desired. 

$\impliedby$: We have already established that if $f$ is conjugate to $f'$ the mapping tori are isomorphic. It remains to show if $[f] = [f']$ then $\Sigma_f$ is bundle isomorphic to $\Sigma_{f'}$. Indeed, let $f_s$ be the isotopy such that $f_0 = f$ and $f_1 = f'$. Then the isomorphism is given by 
$$ \Psi: [t,x] \mapsto [t, f_{1-t}\circ f^{-1}(x)]$$
To check it is an isomorphism we need to make sure it is well-defined, that is, $\Psi([2\pi ,x]) = \Psi([0,f(x)])$. We have $\Psi([2\pi,x]) = [2\pi,f_0\circ f^{-1}(x)] = [1,x]$ and $\Psi([0,f(x)]) = [0,f_1\circ f^{-1}\circ f(x)] = [0,f'(x)] = [1,x] = \Psi(1,x)$ as desired. 
\end{proof}

An immediate consequence of \cref{prop:MappingToriIsoCondition} is that the monodromy representation of $M$ is a complete invariant of bundles up to isomorphism.

\begin{theorem}\label{thm:bundlesAreMCG}
  Let $\Sigma$ be a smooth, compact, orientable manifold (possibly with boundary). Then the correspondence, which sends any $\Sigma$-bundle over $S^1$ to its monodromy representation, induces a bijection
  \[ \{ \text{isomorphism classes of $\Sigma$-bundles over $S^1$} \} \cong \{ \text{conjugacy classes of $\M(\Sigma)$} \}. \]
\end{theorem}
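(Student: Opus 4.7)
The plan is to define the map $\Theta$ from isomorphism classes of $\Sigma$-bundles over $S^1$ to conjugacy classes in $\M(\Sigma)$ by sending a bundle $M$ to the conjugacy class of any representative monodromy map, then to verify well-definedness, surjectivity and injectivity by leveraging the already-established \cref{thm:monodromyMapIsComplete,prop:MappingToriIsoCondition} and \cref{lem:EveryBundleHasAConnection}. The heavy lifting has been done in those earlier results, so the remaining task is essentially a clean bookkeeping of how the two equivalence relations (isotopy of diffeomorphisms and bundle isomorphism without a connection) interact.

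First I would construct the map. Given a $\Sigma$-bundle $M$, apply \cref{lem:EveryBundleHasAConnection} to choose a connection $\Y$ and form the monodromy map $f \in \Diff^+(\Sigma)$. By the argument preceding \cref{thm:monodromyMapIsComplete}, $M$ is bundle isomorphic to the mapping torus $\Sigma_f$. Set $\Theta([M]) := [[f]]$, the conjugacy class of $[f]$ in $\M(\Sigma)$. For well-definedness, suppose $M$ admits two such realisations $M \cong \Sigma_f \cong \Sigma_{f'}$, arising from two different choices of connection and/or trivialisation. Then $\Sigma_f$ is bundle isomorphic to $\Sigma_{f'}$, and by \cref{prop:MappingToriIsoCondition} this forces $[f]$ and $[f']$ to be conjugate in $\M(\Sigma)$. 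Hence $[[f]] = [[f']]$, so $\Theta$ depends only on the isomorphism class $[M]$.

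Surjectivity is immediate: given any class $[[g]] \in \M(\Sigma)$ (of conjugacy), pick a representative $g \in \Diff^+(\Sigma)$ and consider the mapping torus $\Sigma_g$ from \cref{def:mappingtorus}. As discussed after that definition, $\Sigma_g$ is a genuine $\Sigma$-bundle over $S^1$ equipped with a natural connection whose monodromy map is exactly $g$, so $\Theta([\Sigma_g]) = [[g]]$. For injectivity, suppose $\Theta([M]) = \Theta([M'])$, and write $M \cong \Sigma_f$, $M' \cong \Sigma_{f'}$ via the construction above. By assumption $[f]$ and $[f']$ are conjugate in $\M(\Sigma)$, so \cref{prop:MappingToriIsoCondition} yields a bundle isomorphism $\Sigma_f \cong \Sigma_{f'}$, and composing gives $M \cong M'$.

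The step I expect to require the most care is the well-definedness: in passing from $\Diff^+(\Sigma)$ to $\M(\Sigma)$ one is simultaneously modding out by the choice of fibre trivialisation (which already manifests in \cref{thm:monodromyMapIsComplete} as conjugation within $\Diff^+(\Sigma)$) and by the choice of connection (which manifests as an isotopy of $\Diff^+(\Sigma)$, since the space of connections is affine and hence path-connected). Both of these ambiguities are absorbed exactly by forming conjugacy classes in $\M(\Sigma)$, and \cref{prop:MappingToriIsoCondition} is precisely the statement that encodes this. Once this is articulated, the proof amounts to invoking the two preceding results in sequence.
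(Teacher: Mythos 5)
Your argument is correct and follows essentially the same route as the paper, which presents the theorem as an immediate consequence of \cref{prop:MappingToriIsoCondition} (together with \cref{lem:EveryBundleHasAConnection} and the mapping-torus realisation of any monodromy class). You merely spell out the well-definedness, surjectivity, and injectivity checks that the paper leaves implicit.
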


\subsection{Properties of the Mapping Class Group of Surfaces}\label{sec:MappingClassGroupOfSurfaces}

Due to \cref{thm:bundlesAreMCG}, two $\Sigma$-bundles over $S^1$ are isomorphic if and only if they have monodromy representations that are conjugate in $\M(\Sigma)$. Consequently, by studying the conjugacy classes of $\M(\Sigma)$ for a particular choice of $\Sigma$, all the possible $\Sigma$-bundles over $S^1$ up to isomorphism can be determined. In particular, if $\M(\Sigma)$ is trivial then all $\Sigma$-bundles are isomorphic to the trivial bundle. 

The mapping class groups for different surfaces are well understood. Thorough accounts are given in \cite{farbPrimerMappingClass2012,ivanovMappingClassGroups2001,birmanBraidsLinksMapping2016}. However, most of the work has been developed for the mapping class group of diffeomorphisms which leave the boundary of the surface \emph{point-wise} invariant. We are interested in diffeomorphisms which only leave the boundary \emph{set-wise} invariant. In this section we review some important elements of the theory of mapping class groups, converting where necessary proofs of well known results in the theory for point-wise invariant boundary to the case of set-wise invariant boundary.

\subsubsection{Curves and Dehn Twists}
A \emph{simple closed curve} in $\Sigma$ is a smooth embedding $c:S^1 \to \Sigma$ which does not intersect the boundary. The curve $c$ has an orientation and the curve with the same image but opposite orientation will be denoted $c^{-1}$.
We will follow \cite{farbPrimerMappingClass2012} by denoting a simple closed curve $c$ as \emph{essential} if it does not bound a disk or it is isotopic to a boundary component.\footnote{Note that this definition differs from that in \cite{parisGeometricSubgroupsMapping2000}.}

We will write $a \simeq b$ if the two simple closed curves $a,b$ are isotopic. 
The \emph{index of intersection} of two simple closed curves $a$ and $b$ is 
\[ I(a,b) = \min\{ |a'\cap b'|\, :\, a'\simeq a, b'\simeq b \}. \]
Through deformations of $a'$ or $b'$ it is always possible to get a finite intersection so that $I(a,b)$ is well-defined. See \cite{fathiThurstonWorkSurfaces2012} for more details.

We will use the following propositions from \cite{parisGeometricSubgroupsMapping2000} about simple closed curves.

\begin{proposition}{\cite[Prop.~3.4]{parisGeometricSubgroupsMapping2000}}
    \label{prop:ExistsMinimalb}
    Suppose that $a_1,\dots, a_m$ are essential simple closed curves such that, if $i\neq j$, then $a_i \cap a_j = \emptyset$ and $a_i$ is neither isotopic to $a_j$ or $a_j^{-1}$. Then, for each $i$, there exists a simple closed curve $b_i$ such that $a_j\cap b_i = \emptyset$ if $i\neq j$, and $|a_i \cap b_i| = I(a_i,b_i) > 0$.
\end{proposition}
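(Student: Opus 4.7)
The plan is to reduce the problem to finding a dual curve on a smaller surface obtained by cutting, so that disjointness from the remaining $a_j$ becomes automatic. Fix $i$ and form the (possibly disconnected) compact orientable surface $\Sigma_i$ obtained from $\Sigma$ by cutting along the pairwise disjoint collection $\{a_j : j\neq i\}$. Since the $a_j$ are disjoint from $a_i$, the curve $a_i$ descends to a simple closed curve in $\Sigma_i$, which by abuse of notation I still call $a_i$.

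First I would verify that $a_i$ is essential in $\Sigma_i$. If $a_i$ bounded a disk in $\Sigma_i$, that disk would persist after gluing back and contradict essentiality of $a_i$ in $\Sigma$. If $a_i$ were isotopic in $\Sigma_i$ to one of the new boundary components (a copy of some $a_j$ created by the cut), then pushing this isotopy back down to $\Sigma$ would exhibit $a_i$ as isotopic to $a_j$ or $a_j^{-1}$, contradicting the hypothesis. I would then invoke the standard fact that on any compact orientable surface every essential simple closed curve admits a transverse dual: there is a simple closed curve $b_i$ in the component of $\Sigma_i$ containing $a_i$ with positive geometric intersection number $I_{\Sigma_i}(a_i, b_i)$, and one may choose representatives in the isotopy classes so that $|a_i\cap b_i| = I_{\Sigma_i}(a_i, b_i)$. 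Pushing $b_i$ back to $\Sigma$ via the quotient $\Sigma_i \to \Sigma$, it is by construction disjoint from every $a_j$ with $j\neq i$.

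The remaining step, which I expect to be the main obstacle, is to upgrade minimality of intersection from $\Sigma_i$ to $\Sigma$, i.e.\ to show $|a_i\cap b_i| = I(a_i,b_i)$ as measured in $\Sigma$. For this I would invoke the bigon criterion of Farb--Margalit: two transverse simple closed curves realise their geometric intersection number if and only if they co-bound no embedded bigon. Suppose $\Delta \subset \Sigma$ is such a bigon with boundary in $a_i \cup b_i$. Each $a_j$ with $j\neq i$ is connected and disjoint from $a_i \cup b_i$, hence lies either entirely inside the open bigon disk or entirely outside. The inside case is impossible: a simple closed curve contained in an open disk must bound a disk, forcing $a_j$ to be null-homotopic and contradicting its essentiality. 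Hence every $a_j$ lies outside $\Delta$, so $\Delta$ survives intact in $\Sigma_i$ and contradicts minimality there. This establishes the required minimality in $\Sigma$ and completes the construction of $b_i$.
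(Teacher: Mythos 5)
The paper does not supply its own proof of this proposition; it simply imports it from Paris--Rolfsen (\cite[Prop.~3.4]{parisGeometricSubgroupsMapping2000}), so there is no internal argument to compare yours against. Judged on its own merits, your cut-along-the-other-curves strategy is sound. In particular, the bigon argument for upgrading minimality from $\Sigma_i$ to $\Sigma$ is correct: a bigon disk $\Delta$ has $\partial\Delta\subset a_i\cup b_i$, which is disjoint from every $a_j$ with $j\neq i$, so connectedness forces each such $a_j$ to lie either entirely inside $\mathring{\Delta}$ (impossible, since then $a_j$ would bound a disk and be null-homotopic) or entirely outside $\Delta$; hence $\Delta$ is untouched by the cutting and persists as a bigon in $\Sigma_i$, contradicting minimality there.

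The one gap is in your verification that $a_i$ remains essential in $\Sigma_i$. You rule out that $a_i$ bounds a disk and that $a_i$ is isotopic to one of the \emph{new} boundary circles created by the cut, but $\Sigma_i$ also inherits the \emph{original} boundary components of $\partial\Sigma$, and you must also exclude the possibility that $a_i$ is isotopic in $\Sigma_i$ to one of these. That case is dispatched exactly as you dispatched the other two (such an isotopy would descend to $\Sigma$ and exhibit $a_i$ as boundary-parallel, contradicting its essentiality in $\Sigma$), but it needs to be stated: the existence of a transverse dual that you invoke requires $a_i$ to be neither null-homotopic nor boundary-parallel in the component of $\Sigma_i$ containing it, and ``boundary'' there means all of the boundary, old and new.
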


\begin{proposition}{\cite[Prop.~3.10]{parisGeometricSubgroupsMapping2000}}
    \label{prop:fixingIsotopyExists}
    Let $a_1\dots, a_m$ and $b_1,\dots, b_m$ be simple closed curves none of which are null-homotopic in $\Sigma$. Suppose that for $i\neq j$:
    \begin{enumerate}
        \item $a_i\cap a_j = \emptyset$ and $b_i \cap b_j = \emptyset$.
        \item $a_i$ is not isotopic to $a_j^{\pm 1}$ and $b_i$ is not isotopic to $b_j^{\pm 1}$.
        \item $a_i$ is isotopic to $b_i$.
    \end{enumerate}
    Then there exists an isotopy of diffeomorphisms $h_t$ on $\Sigma$ such that $h_0 = Id$ and $h_1 \circ a_i = b_i$ for all $i=1,\dots, p$.
\end{proposition}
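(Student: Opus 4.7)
The plan is to reduce the multicurve case to a single application of the isotopy extension theorem, once the two multicurves $A = a_1 \sqcup \cdots \sqcup a_m$ and $B = b_1 \sqcup \cdots \sqcup b_m$ have been arranged to be isotopic as $1$-submanifolds of $\Sigma$ with matching components. I would proceed by induction on $m$, invoking the bigon criterion to control how an individual component may be isotoped without disturbing the others.

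For the base case $m=1$, a smooth isotopy of embeddings $S^1 \to \Sigma$ between $a_1$ and $b_1$ extends, via the isotopy extension theorem applied to the compact embedded submanifold $a_1(S^1) \subset \Sigma$, to a compactly supported ambient isotopy $h_t$ with $h_0 = \id$ and $h_1 \circ a_1 = b_1$. For $m \geq 2$, the key technical ingredient is the following consequence of the bigon criterion: if $a \simeq b$ are essential simple closed curves and $c$ is any simple closed curve with $a \cap c = \emptyset$, then $b$ can be isotoped to a curve $b'$ (still isotopic to $a$) with $b' \cap c = \emptyset$. Applying this iteratively — first making each $b_i$ disjoint from every $a_j$ for $j \neq i$, then from every already-adjusted $b_k$ — I would arrange that $a_i$ and $b_i$ are in minimal position for each $i$ while all the disjointness hypotheses continue to hold. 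Two isotopic essential simple closed curves in minimal position either coincide or cobound an embedded annulus; I would pick such annuli $A_i$ between $a_i$ and $b_i$, take them to be pairwise disjoint (a general-position argument leaning on the non-isotopy hypothesis on the $a_i$'s), and let $h_t$ be the time-$t$ map of a vector field compactly supported in a regular neighbourhood of $\bigsqcup_i A_i$ sweeping each $a_i$ across $A_i$ onto $b_i$.

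The hard part is the simultaneity of the whole procedure. Each bigon-removal step performed to clear $b_i$ off some $a_j$ may, a priori, produce fresh intersections with another $a_k$ or with a previously adjusted $b_\ell$, and moreover the cobounding annuli $A_i$ must be chosen pairwise disjoint — which is not automatic if two different $a_i$ were allowed to sit in the same isotopy class. This is precisely where hypothesis (2), the pairwise non-isotopy of the $a_i$'s and of the $b_i$'s, becomes indispensable: it forbids two annuli $A_i$ and $A_j$ from being forced to share essential curves, so a genuinely disjoint family of annuli exists. The essentiality assumption plays a complementary role, ruling out null-homotopic components that would bound disks rather than annuli and for which the monodromy conjugacy would be vacuous.
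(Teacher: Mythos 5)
The paper does not prove \cref{prop:fixingIsotopyExists}; it is quoted verbatim from Paris--Rolfsen (their Prop.~3.10) and used as a black box in the proof of \cref{prop:TwistsCommute}. So there is no internal proof to compare against, and I can only assess your argument on its own merits.

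Your skeleton is the right one: realise each pair $(a_i,b_i)$ as the boundary of an embedded annulus, sweep across the annuli with a compactly supported vector field, and conclude via the isotopy extension theorem. Your observation that the cobounding annuli are automatically pairwise disjoint once the curves are disjoint and pairwise non-isotopic is correct and is indeed where hypothesis~(2) earns its keep: if $A_i \cap A_j \neq \emptyset$ with $\partial A_i \cap A_j = \emptyset = \partial A_j \cap A_i$, then one annulus is contained in the other, forcing one core to be isotopic to the other and hence $a_i \simeq a_j$, a contradiction.

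The genuine gap is earlier, and you have diagnosed it correctly but not cured it. You need to isotope the whole collection $\{b_i\}$ into \emph{simultaneous} minimal position with the collection $\{a_i\}$ --- that is, arrange $b_i \cap a_j = \emptyset$ for $i \neq j$, $a_i$ and $b_i$ disjoint (or equal), while preserving $b_i \cap b_j = \emptyset$. Your stated remedy --- that hypothesis~(2) ``forbids two annuli from sharing essential curves'' --- addresses the disjointness of the annuli \emph{after} minimal position is achieved; it does not explain why the iterative bigon-removal process converges to a state in which every pairwise disjointness condition holds at once. A priori, clearing $b_i$ off $a_j$ can re-introduce intersections of $b_i$ with some $a_k$ or with a $b_\ell$ you had already fixed, and you give no monotonicity argument (such as a global count of intersection points that strictly decreases) to guarantee termination. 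The standard ways to close this are (i) to endow $\Sigma$ with a hyperbolic metric (doubling along the boundary or using a finite-area complete metric) and replace each curve by its geodesic representative, which puts every pair in minimal position simultaneously for free --- with the proviso that $a_i$ and $b_i$ may then coincide, which is harmless since the identity isotopy handles that component; or (ii) a careful innermost-bigon argument as in Farb--Margalit, which does track a global complexity that drops. Either of these would make your proof complete; without one of them, the crucial ``simultaneity'' step remains an assertion.

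Two minor points. First, the Farb--Margalit convention of ``essential'' excludes boundary-parallel curves, but this paper's convention (and the hypothesis ``not null-homotopic'') deliberately keeps them in; your argument is unaffected, but you should not silently import the stricter notion. Second, the statement's conclusion ``$h_1 \circ a_i = b_i$'' demands equality as parameterized maps $S^1 \to \Sigma$, not merely of images, so after the annulus sweep you should note that a final isotopy supported near the image reparameterises the curve; the isotopy extension theorem handles this if you phrase the $m=1$ base case as an isotopy of embeddings, as you did.
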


The Dehn-Likorish Twist Theorem demonstrates that the mapping class group of any closed surface (compact and without boundary) can be finitely presented by so called Dehn twists. Even in the case of surfaces with boundary, Dehn twists play a crucial role of understanding the mapping class group. 

Let $A_{1,2} \subset \mathbb{C}$ be the annulus
$$A_{1,2} = \{z\in\C\, :\, 1\leq |z|\leq 2\}.$$
Define the \emph{standard Dehn twist} on $A_{1,2}$ as the transformation
$$\tau: z \mapsto e^{2\pi i(|z|)}z.$$
Note that $\tau$ is the identity on each boundary component and `twists' the outer boundary one rotation anticlockwise with respect to the inner boundary. Finally, $\tau$ is orientation preserving.
In other words $\tau$ is in the group of orientation preserving diffeomorphisms of $A_{1,2}$ which leave the boundary point-wise invariant.

Let $a$ be a simple closed curve in a surface $\Sigma$. Let $U$ be a regular neighbourhood of $a$. By regular neighbourhood, we mean a closed embedded annulus $U$ of $\Sigma$ such that, if $i:U\to \Sigma$ is the inclusion, then $i^*a$ is a simple closed curve in $U$. Choose an orientation preserving diffeomorphism $\phi: A_{1,2} \to U$. 
We obtain a homeomorphism $\tau_a$ as follows:
$$
\tau_a(x) = \begin{cases}
\phi\circ \tau \circ \phi^{-1}(x) & \text{if $x\in U$} \\
x & \text{if $x\in \Sigma\setminus U$ }.
\end{cases}
$$
In other words, $\tau_a$ performs the standard Dehn twists $\tau$ on the annulus $U$ and fixes every point outside of $U$. Denote the isotopy class of $\tau_a$, namely $T_a:=[\tau_a]$, the \emph{Dehn twist about $a$}.


We will state some key properties of Dehn twists. The first describes how a Dehn twist changes index of intersection of two simple closed curves.

\begin{proposition}{\cite[Prop.~3.3]{parisGeometricSubgroupsMapping2000}}
    \label{prop:DehnTwistsOnIntersection}
    Let $a,b:S^1 \to \Sigma$ be two simple closed curves and $n$ any integer. Then 
    \[ I(\tau_a^n \circ b,b) = |n| I(a,b)^2. \]
\end{proposition}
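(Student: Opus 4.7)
The plan is to put $a$ and $b$ in minimal position and then read off the new intersections produced by $\tau_a^n$ inside a standard annular neighbourhood of $a$. Concretely, after an isotopy one may assume $|a\cap b|=I(a,b)=:k$ with transverse intersection. Choose a closed regular annular neighbourhood $A$ of $a$ so thin that $b\cap A$ consists of exactly $k$ pairwise disjoint arcs, each transversely crossing $A$ from one boundary component to the other. Fix an orientation-preserving diffeomorphism $\phi\colon A_{1,2}\to A$ and represent $\tau_a$ by $\phi\circ\tau\circ\phi^{-1}$ inside $A$ and by the identity outside, as in the definition of the Dehn twist.

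Next I would compute $\tau_a^n(b)$ arc-by-arc. Outside $A$, $\tau_a^n$ is the identity, so $\tau_a^n(b)\cap(\Sigma\setminus A)=b\cap(\Sigma\setminus A)$. Inside $A$, each of the $k$ arcs of $b\cap A$ is mapped by (the chart model of) $\tau^n$ to an arc that still joins the two boundary components of $A$ but additionally winds $n$ times around the core circle $a$. A direct computation in $A_{1,2}$ shows that each such twisted arc meets each of the original $k$ transverse arcs of $b\cap A$ in exactly $|n|$ transverse points. Hence inside $A$ the curves $\tau_a^n(b)$ and $b$ meet in exactly $|n|k^2$ points, and a small isotopy pushing $\tau_a^n(b)$ off $b$ outside $A$ (where they coincide as subsets) introduces no further crossings, giving the upper bound
\begin{equation*}
I(\tau_a^n\circ b,\,b)\;\leq\;|n|\,I(a,b)^2.
\end{equation*}

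The main obstacle is the matching lower bound, i.e. proving that the configuration just produced realises the geometric intersection number rather than merely bounds it. For this I would invoke the bigon criterion: two transverse simple closed curves on a surface are in minimal position iff they co-bound no immersed bigon. I would argue that any hypothetical bigon between arcs of $\tau_a^n(b)$ and $b$ must have its interior meeting $A$ nontrivially (otherwise it would be a bigon between $b$ and itself), and after pulling back through $\phi^{-1}$ and lifting to the annular cover of $A$, or to the universal cover of $\Sigma$, it would descend to a bigon between a lift of $a$ and a lift of $b$, contradicting the assumed minimality of $|a\cap b|$. Combining this with the count of the previous paragraph yields $I(\tau_a^n\circ b,b)=|n|I(a,b)^2$.

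The cases $n=0$ and $I(a,b)=0$ are trivial (the curves $a,b$ may be isotoped to be disjoint, so $\tau_a^n(b)$ is isotopic to $b$); and the case $n<0$ reduces to $n>0$ by applying the argument above to $\tau_a^{-1}$, since the twist direction only affects the sign but not the absolute count of transverse crossings produced per arc, which is why $|n|$ appears in the final formula.
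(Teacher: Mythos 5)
The paper does not prove this proposition at all: it states it and cites \cite[Prop.~3.3]{parisGeometricSubgroupsMapping2000}. So there is no ``paper proof'' to compare against, and your attempt has to stand on its own. Your overall plan --- put $a$ and $b$ in minimal position, work in a regular annular neighbourhood $A$ of $a$, count the new crossings created per strand, then appeal to the bigon criterion to show the count is minimal --- is indeed the standard route (it is essentially how Farb--Margalit and Paris prove it). The upper bound $I(\tau_a^n b,b)\leq |n|\,I(a,b)^2$ is fine once one is a little careful: each strand of $\tau_a^n(b)\cap A$ crosses each of the $k$ vertical strands of $b\cap A$ exactly $|n|$ times only \emph{after} one has done the small push-off, since for a strand and its own image the two arcs share both boundary endpoints (and meet at $n-1$ interior points) before perturbation. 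You allude to the push-off, but phrase it as happening ``outside $A$'' and as if the $|n|k^2$ count precedes it; this should be tightened but it is not a real error.

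The genuine gap is the lower bound. The bigon criterion is the right tool, but your no-bigon argument does not go through as written. A hypothetical bigon $D$ has boundary an arc $\alpha\subset\tau_a^n(b)$ and an arc $\beta\subset b$; neither side is an arc of $a$. It is therefore unclear in what sense lifting $D$ to the annular cover of $A$ or to the universal cover of $\Sigma$ ``descends to a bigon between a lift of $a$ and a lift of $b$'' --- that assertion does not follow from what precedes it and is the crux of the whole lower bound. A correct argument must show directly that the constructed representatives of $\tau_a^n(b)$ and $b$ co-bound no (innermost) bigon. The usual ways to close this are: (i) observe that $\tau_a^n(b)$ is still in minimal position with $a$ (each twisted strand crosses the core once), so every arc of $a\cap D$ must run from $\alpha$ to $\beta$, and then analyse the pieces of $D$ cut out by $a$ to derive a contradiction; or (ii) pass to the cyclic cover $\Sigma_a\to\Sigma$ associated to $\langle a\rangle\leq\pi_1(\Sigma)$, in which $a$ has a unique simple-closed-curve lift and $b$ lifts to properly embedded lines, and count intersections there. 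As it stands, your proof establishes only the inequality $I(\tau_a^n b,b)\leq |n| I(a,b)^2$ rigorously; the equality needs one of these more careful arguments.
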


The second proposition is a modification of \cite[Prop.~3.8]{parisGeometricSubgroupsMapping2000} to the case of set-wise invariant boundary.

\begin{proposition}{\cite[Prop.~3.8]{parisGeometricSubgroupsMapping2000}}
    \label{prop:TwistsCommute}
    Suppose $a_1,\dots, a_m:S^1 \to \Sigma$ are essential simple closed curves which are pairwise disjoint, and no curve $a_i$ is isotopic to $a_j$ or $a_j^{-1}$, $i\neq j$. Consider the function 
    \[ h:\Z^m \to \M(\Sigma) \]
    defined by 
    \[ h(n_1,\dots, n_m) = T_{a_1}^{n_1}\dots T_{a_m}^{n_m}. \]
    Then $h$ is an injective homomorphism.
\end{proposition}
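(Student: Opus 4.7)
The plan is to verify that $h$ is a homomorphism and then establish injectivity, both by reducing to the disjointness-based propositions already stated. The crux is that Dehn twists supported on disjoint regions of $\Sigma$ commute, which immediately yields the group law, and that each twist $T_{a_i}$ acts nontrivially on a carefully chosen auxiliary curve $b_i$ while leaving the $b_j$ ($j\neq i$) alone.

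For the homomorphism claim, the key point is that $T_{a_i}$ and $T_{a_j}$ commute for $i\neq j$. Since $a_i$ and $a_j$ are disjoint essential simple closed curves, I would select pairwise disjoint regular neighbourhoods $U_k$ of $a_k$ for each $k$ and choose representative diffeomorphisms $\tau_{a_k}$ supported in $U_k$. Because the supports are pairwise disjoint, these representatives commute on the nose in $\Diff^+(\Sigma)$, so their isotopy classes commute in $\M(\Sigma)$. With commutativity in hand, the identity $h(n+n')=h(n)h(n')$ follows by rearranging exponents in the product $T_{a_1}^{n_1+n_1'}\cdots T_{a_m}^{n_m+n_m'}$, so $h$ is a group homomorphism from $\Z^m$.

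For injectivity, suppose $h(n_1,\dots,n_m)=[\id]$ and fix any $i\in\{1,\dots,m\}$. I would invoke \cref{prop:ExistsMinimalb} to produce a simple closed curve $b_i$ with $a_j\cap b_i=\emptyset$ for $j\neq i$ and $I(a_i,b_i)>0$. By shrinking each regular neighbourhood $U_j$ ($j\neq i$) if necessary so that $U_j\cap b_i=\emptyset$, the chosen representative $\tau_{a_j}$ fixes $b_i$ pointwise. Combining this with the commutativity above, the action of a representative of $h(n_1,\dots,n_m)$ on $b_i$ reduces to $\tau_{a_i}^{n_i}(b_i)$, which by our assumption is isotopic to $b_i$. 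Hence $I(\tau_{a_i}^{n_i}(b_i),b_i)=0$, and \cref{prop:DehnTwistsOnIntersection} rewrites this as $|n_i|\,I(a_i,b_i)^2=0$. Since $I(a_i,b_i)>0$, we conclude $n_i=0$. Doing this for each $i$ gives injectivity.

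The step I expect to require the most care is the coherent choice of representatives and regular neighbourhoods: for each fixed $i$, the $U_j$ with $j\neq i$ must be pairwise disjoint \emph{and} disjoint from $b_i$, while simultaneously the collection $\{U_k\}_{k=1}^m$ used for the commutativity argument should remain pairwise disjoint. This is attainable because $a_j\cap a_k=\emptyset$ for $j\neq k$ and $a_j\cap b_i=\emptyset$ for $j\neq i$, but it is cleanest to first choose the auxiliary curves $b_1,\dots,b_m$ from \cref{prop:ExistsMinimalb} and only then select the $U_k$, shrinking as needed. Once this bookkeeping is in place, both the commutativity argument and the intersection-number computation drop out directly from the propositions already recorded.
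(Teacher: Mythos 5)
Your proof is correct and follows essentially the same approach as the paper: commutativity of Dehn twists with disjoint supports establishes the homomorphism property, and injectivity is obtained by invoking \cref{prop:ExistsMinimalb} to produce an auxiliary curve meeting only $a_i$ and then applying \cref{prop:DehnTwistsOnIntersection} to force $n_i=0$. Your remarks about choosing the $b_i$ first and then shrinking the regular neighbourhoods $U_k$ are a sound way to organise the bookkeeping, but the argument is the same as the paper's in substance.
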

\begin{proof}
    First, because the curves $a_1,\dots, a_m$ are pairwise disjoint, the Dehn twists commute, and $h$ is a homomorphism.
    \footnote{To see they commute note that if they are disjoint we can find regular neighbourhoods $N_i$ of the respective $a_i$ that are disjoint. Then the Dehn twists for each $a_i$ is the identity out of the respective $N_i$, so the twisting of any two commutes.}
    To see that it is injective, suppose that $T_{a_1}^{n_1} \cdots T_{a_p}^{n_p} = 1$. 
    Fix an index $i$. \cref{prop:ExistsMinimalb} supplies a simple closed curve $b$ in $\Sigma$ disjoint from $a_j,j\neq i$ such that 
    \[ |a_i \cap b| = I(a_i,b) >0.\]
    We calculate, using commutativity of the twists, $b\cap a_j = \emptyset$, $i\neq j$, and  \cref{prop:DehnTwistsOnIntersection}, that 
    \[ 0 = I(b,b) = I(\tau_{a_1}^{n_1}\circ\dots \circ\tau_{a_m}^{n_m}\circ b,b) = I(\tau_{a_i}^{n_i}\circ b,b) = |n_i| I(a_i, b)^2. \]
    Therefore $n_i = 0$ and we have shown $h$ is injective.
\end{proof}

As an immediate consequence of \cref{prop:TwistsCommute} we have the following.

\begin{corollary}{\cite[Cor.~3.9]{parisGeometricSubgroupsMapping2000}}
    \label{cor:InfiniteOrderTwists}
    If $a:S^1 \to \Sigma$ is an essential simple closed curve, then the Dehn twist about $a$ has infinite order in $\M(\Sigma)$.
\end{corollary}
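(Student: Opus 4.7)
The plan is to derive the corollary as a direct specialisation of \cref{prop:TwistsCommute} with a single curve, $m=1$. Taking $a_1 = a$, the hypotheses of \cref{prop:TwistsCommute} are trivially satisfied: there are no other curves with which $a$ could fail to be pairwise disjoint or be isotopic (the conditions on distinct indices $i \neq j$ are vacuous), and the only standing assumption on $a_1$ is that it is an essential simple closed curve, which is exactly what the corollary assumes.

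\cref{prop:TwistsCommute} then supplies that the homomorphism $h : \mathbb{Z} \to \M(\Sigma)$ given by $h(n) = T_a^n$ is injective. In particular, $\ker h = \{0\}$, so $T_a^n = 1$ in $\M(\Sigma)$ implies $n = 0$. By definition of the order of an element in a group, this means the order of $T_a$ is infinite.

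No further obstacle is expected, since the work has already been done in establishing \cref{prop:TwistsCommute}. The only mild observation worth recording is that essentialness of $a$ is used at exactly one point, namely to invoke \cref{prop:ExistsMinimalb} inside the proof of \cref{prop:TwistsCommute}; this is where the assumption that $a$ does not bound a disk (nor is nullhomotopic) becomes indispensable, as otherwise $T_a$ is isotopic to the identity and the conclusion would fail. Thus, the entire content of the corollary is already encoded in the previous proposition, and the proof reduces to citation and the observation that an infinite cyclic subgroup of $\M(\Sigma)$ contains an element of infinite order.
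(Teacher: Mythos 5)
Your proof is correct and coincides with the paper's own reasoning: the paper presents the corollary precisely as an immediate consequence of \cref{prop:TwistsCommute}, which is exactly the specialisation to $m=1$ that you spell out. Nothing to add.
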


Note that, except for the cases of a disk, annulus, or pair of pants (a disk with two open disks removed), every other compact surface has at least one essential simple closed curve. For these compact surfaces in particular, \cref{cor:InfiniteOrderTwists} implies that $\M(\Sigma)$ contains a subgroup isomorphic to $\Z$. As $\Z$ is abelian, the conjugacy classes are singletons. That is, there are infinitely many conjugacy classes in $\M(\Sigma)$. By \cref{thm:bundlesAreMCG} there is thus infinitely many isomorphism classes of $\Sigma$-bundles over $S^1$ whenever $\Sigma$ is a compact surface that is not a disk, annulus, or pair of pants.

In the case that $\Sigma$ is a disk, annulus, or pair of pants, the mapping class groups are finite. This follows from the following well established result (see \cite{farbPrimerMappingClass2012,birmanBraidsLinksMapping2016,czechowskiSymplectomorphismsDiscreteBraid2017}).

\begin{theorem}\label{thm:MappingClassGroupIsomorphisms}
    The following isomorphisms hold.
    \begin{enumerate}
        \item If $\Sigma$ is a disk $D^2$ then $\M(D^2) \cong 1$.
        \item If $\Sigma$ is an annulus $A_{1,2}$ then $\M(A) \cong 1$.
        \item If $\Sigma$ is a pair of pants $P$ then $\M(P) \cong \Z_2$.
    \end{enumerate}
\end{theorem}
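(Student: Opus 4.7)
The plan is to reduce each case to the classical Alexander trick on the disk by cutting higher-complexity surfaces along properly embedded arcs. The preliminary fact used throughout is that $\Diff^+(S^1)$ deformation retracts onto $SO(2)$ and is in particular path-connected, so any smooth family of boundary diffeomorphisms can be absorbed into an ambient isotopy via a collar neighbourhood and a bump function.

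For part (1), given $f \in \Diff^+(D^2)$, pick a smooth path $\gamma_s$ in $\Diff^+(S^1)$ from $\gamma_0 = \id$ to $\gamma_1 = f|_{\partial D^2}$, and extend it to an ambient isotopy $F_s$ on $D^2$ through a collar. Then $F_1^{-1}\circ f$ fixes $\partial D^2$ pointwise, and the Alexander trick --- the isotopy $g_s(x) = s\cdot g(x/s)$ for $|x|\leq s$ extended by the identity outside, smoothed suitably near $s=0$ --- provides an isotopy from $F_1^{-1}\circ f$ to $\id$. Composing shows $[f] = [\id]$ in $\M(D^2)$.

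For part (2), any $f\in \Diff^+(A_{1,2})$ sends each boundary circle to a boundary circle; under the paper's setwise convention (in which the inner and outer boundary components are individually preserved), $f$ restricts to an orientation-preserving diffeomorphism of each boundary. Connectedness of $\Diff^+(S^1)$ then allows us to isotope $f$ to fix $\partial A_{1,2}$ pointwise. Fix a properly embedded radial arc $\alpha$ joining the two boundary components. After multiplying $f$ by a power of the Dehn twist $\tau_\alpha^n$ (which is itself isotopic to the identity once boundary motion is allowed, since the twist unwinds through rotations of the boundary circles), the image $f(\alpha)$ becomes isotopic to $\alpha$ rel endpoints, and one can then further isotope $f$ to fix $\alpha$ pointwise. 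Cutting along $\alpha$ yields a disk, and (1) finishes the argument.

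For part (3), the generator of $\M(P)\cong \Z_2$ is the orientation-preserving involution that swaps the two ``inner'' boundary components of $P$ while fixing the outer one, realised for instance as a $180^\circ$ rotation when $P$ is embedded symmetrically in the plane. Any $f\in \Diff^+(P)$ induces a permutation of the three boundary circles; the orientation-preservation together with the distinguished outer boundary restricts this permutation to at most the transposition of the two inner components, so after composing with the involution if necessary we obtain $\tilde{f}$ preserving each boundary individually. Cut $P$ along two disjoint properly embedded arcs $\alpha_1,\alpha_2$, each joining an inner boundary to the outer one; after the arc-straightening step of (2) applied to each $\alpha_i$ in turn, $\tilde{f}$ descends to a diffeomorphism of the resulting hexagonal disk, and (1) applies. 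The main obstacle throughout is the arc-isotopy step in (2) and (3): the proof that after matching the boundary action to the identity, $f$ can be further isotoped to fix each cutting arc pointwise. In these low-complexity cases this follows from the classification of isotopy classes of properly embedded arcs (essentially a version of the bigon criterion), but it is the step requiring the most care, and it is where the distinction between ``setwise'' and ``pointwise'' boundary conventions enters decisively.
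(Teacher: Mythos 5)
The paper itself offers no proof of this theorem: it simply cites the literature (Farb--Margalit, Birman, and Czechowski--Woolf) after having laid the Dehn-twist groundwork and remarked that the disk, annulus and pair of pants are exactly the compact surfaces without essential simple closed curves. Your proof sketch therefore goes genuinely beyond what the paper records, and the Alexander-trick-plus-arc-cutting strategy you adopt is the standard hands-on route in the references; as an approach it is fine.

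There are, however, a few points that need repair. First, the convention governing boundary components is treated inconsistently between parts (2) and (3), and this matters: the paper only says diffeomorphisms ``leave the boundary set-wise invariant,'' which is ambiguous about whether they may permute distinct boundary circles. In (2) you assert that the annulus' two boundary components are ``individually preserved'' and attribute this to the paper; in (3) you allow the two inner boundary circles of $P$ to swap while declaring the outer one distinguished, again attributing a choice to the paper that it never states. The pair of pants is an abstract surface (sphere minus three disks), on which all of $S_3$ acts by orientation-preserving diffeomorphisms, so ``the outer boundary is distinguished'' is not intrinsic; if component permutations were forbidden outright (as you assume for $A$), then $\M(P)$ would in fact be trivial, not $\Z_2$, while if all permutations are allowed you would get $S_3$. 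You should flag that $\M(P)\cong\Z_2$ requires pinning one boundary component, and say why this is the right model for the bundles under consideration, rather than presenting it as the paper's stated convention. Second, the notation $\tau_\alpha$ for ``the Dehn twist'' along the properly embedded \emph{arc} $\alpha$ is not a Dehn twist; you evidently mean the Dehn twist about the core curve of the annulus, whose effect on $\alpha$ is measured by the winding number you are normalising away, and whose triviality in the set-wise MCG is exactly what lets you discard it. Third, in (3) you exhibit the candidate generator but never verify it is nontrivial; a one-line computation of its action on $H_1(P)\cong\Z^2$ (it exchanges two generators) settles this. Finally, the arc-isotopy step you correctly identify as the crux (``after matching boundary, isotope $f$ to fix the cutting arc pointwise'') is only cited, not carried out; since this is precisely where the bigon-criterion/change-of-coordinates machinery does its work, a complete write-up would have to supply it.
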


\end{document}